\title{Approximability Landscape of Welfare Maximization within Fair Allocations\thanks{The paper is accepted by the twenty-sixth ACM conference on Economics and Computation (EC'25).}}
\date{}
\author[1]{Xiaolin Bu}
\author[2]{Zihao Li}
\author[3]{Shengxin Liu}
\author[4]{Jiaxin Song}
\author[1]{Biaoshuai Tao}
\affil[1]{Shanghai Jiao Tong University, \texttt{lin\_bu@sjtu.edu.cn}, \texttt{bstao@sjtu.edu.cn}}
\affil[2]{Nanyang Technological University, \texttt{zihao004@e.ntu.edu.sg}}
\affil[3]{Harbin Institute of Technology, Shenzhen, \texttt{sxliu@hit.edu.cn}}
\affil[4]{University of Illinois, Urbana-Champaign, \texttt{jiaxins8@illinois.edu}}
\begin{document}
\maketitle
\begin{abstract}
The problem of fair allocation of indivisible goods studies allocating a set of $m$ goods among $n$ agents in a fair manner. 
While fairness is a fundamental requirement in many real-world applications, it often conflicts with (economic) efficiency.
This raises a natural and important question: How can we identify the most welfare-efficient allocation among all fair allocations? 
This paper gives an answer from the perspective of computational complexity.
Specifically, we study the problem of maximizing \emph{utilitarian social welfare} (the sum of agents' utilities) under two widely studied fairness criteria: \emph{envy-freeness up to any item} (EFX) and \emph{envy-freeness up to one item} (EF1).
We examine both normalized and unnormalized valuations, where normalized valuations require that each agent's total utility for all items is identical.

The key contributions of this paper can be summarized as follows: (i) we sketch the complete complexity landscape of welfare maximization subject to fair allocation constraints; and (ii) we provide interesting bounds on the price of fairness for both EFX and EF1.
Specifically: 
(1) For $n=2$ agents, we develop polynomial-time approximation schemes (PTAS) and provide $\classNP$-hardness results for EFX and EF1 constraints.
(2) For $n>2$ agents, under EFX constraints, we design algorithms that achieve approximation ratios of $O(n)$ and $O(\sqrt{n})$ for unnormalized and normalized valuations, respectively.
These results are complemented by asymptotically tight inapproximability results. We also obtain similar results for EF1 constraints.
(3) When the number of agents is a fixed constant, we show that the optimal solution can be computed in polynomial time by slightly relaxing the fairness constraints, whereas exact fairness leads to strong inapproximability.
(4) Furthermore, our results imply the price of EFX is $\Theta(\sqrt{n})$ for normalized valuations, which is unknown in the literature. 
\end{abstract}

\newpage
\tableofcontents

\newpage
\section{Introduction}
\emph{Fair allocation} is a classical topic studied in economics, social science, and computer science, and has garnered significant attention due to its wide range of applications (e.g., school choices~\citep{abdulkadirouglu2005new}, course allocations~\citep{budish2012multi}, paper review assignments~\citep{lian2018conference}, computational resource allocation~\citep{ghodsi2011dominant}, etc.). 
Generally speaking, fair allocation studies how to fairly allocate a given set of resources to agents with respect to their heterogeneous preferences. 
Among all fair allocations, an optimal allocation can naturally be defined as the one that maximizes \emph{social welfare}. 
The problem of finding such an optimal fair allocation has been extensively studied in the \emph{cake-cutting} literature, where resources are assumed to be (infinitely) divisible~\citep{aumann2015efficiency,aumann2012computing,brams2012maxsum,caragiannis2012efficiency,cohler2011optimal,bei2012optimal}.
In contrast, much less attention has been given to the corresponding problem for \emph{indivisible goods}.
In this paper, we study the computational complexity and approximability of optimal fair allocation for indivisible goods.

\paragraph{Fairness: fundamental desideratum.} In the context of fair allocation, as the name suggests, fairness is a \emph{fundamental desideratum}.
In many societal scenarios, fairness plays a pivotal role in resource allocation.
For example, when the government allocates educational resources, efforts should be made to ensure equitable access for all areas, despite variations in resource utilization across different areas.
Moreover, fairness is also a basic requirement in many computer systems.
In a shared computer system, multiple users utilize the same computational resources. 
The resource allocator should fairly distribute these resources to prevent monopolization by any single user. 
A substantial body of work has explored the design of fair allocation policies~\citep{ghodsi2011dominant,vuppalapati2023karmaresourceallocationdynamic,bhardwaj2023cilantro}.

\paragraph{Fairness notions.}
Different application scenarios give rise to various fairness notions to cope with agents' heterogeneous preferences. 
Among these notions, \emph{envy-freeness} (EF) is arguably the most natural fairness criterion, stating that each agent values her bundle at least as much as the bundle allocated to any other agent.
In other words, no agent should envy any other agent in the allocation.
However, exact fairness may not be achievable when allocating indivisible items (e.g., when the number of items is smaller than the number of agents).
To address this issue, \emph{envy-freeness up to one item} (EF1) is proposed~\cite{Lipton04onapproximately,budish2011combinatorial}, which requires that, for any pair of agents $i$ and $j$, $i$ does not envy $j$ if \emph{an} item is removed from $j$'s bundle.
A stricter notion \emph{envy-freeness up to any item} (EFX) is later introduced~\cite{CKMP+19,gourves2014near}, which requires that, for any pair of agents $i$ and $j$, $i$ does not envy $j$ after the removal of \emph{any} item from $j$'s bundle.
Since then, EF1 and EFX allocations have attracted significant attention. 
The existence of EF1 allocations is well established~\citep{Lipton04onapproximately,CKMP+19}, but the existence of EFX allocations still remains one of the most important open problems in fair allocation~\citep{AmanatidisAzBi23}.
In this work, we mainly focus on these two notions, EF1 and EFX.

\paragraph{Finding the welfare-maximizing allocation.}
While fairness is a fundamental desideratum in fair allocation, (economic) \emph{efficiency} is a crucial desideratum.
Among all fair allocations, it is desirable to select one with higher social welfare.
This leads to an interesting problem: identifying the welfare-maximizing allocation among all fair allocations, which can be formulated as a \emph{constrained optimization} problem.
There are multiple ways to define social welfare, with \emph{utilitarian social welfare} --- the sum of all agents' utilities for the allocation --- being the most natural and widely used metric.
In the cake-cutting setting, utilitarian social welfare has been extensively studied~\citep{aumann2015efficiency,aumann2012computing,SS19,CKMP+19,BKV18,garg2021fair}, including work on maximizing utilitarian social welfare subject to fairness constraints such as envy-freeness or proportionality~\citep{cohler2011optimal,brams2012maxsum,bei2012optimal}.
In other domains of social choice theory, utilitarian social welfare has also attracted great interest, e.g., the Vickrey-Clarke-Groves (VCG) family~\citep{vickrey1961counterspeculation,clarke1971multipart,groves1973incentives} of mechanisms is designed to maximize utilitarian social welfare.

Unfortunately, compared with other domains (especially the cake-cutting domain), this problem is far less understood in the context of indivisible items.
Specifically, the study of this constrained optimization problem in the EFX setting is even rarer, as finding an EFX allocation is already challenging (see Section~\ref{sect:relatedwork} for details).
Recent studies (e.g., \cite{aziz2020computing,barman2020optimal,10.5555/3306127.3331927}) have explored the computational complexity of finding the optimal social welfare\footnote{We will use ``social welfare'' to refer to ``utilitarian social welfare'' unless otherwise specified.} subject to various fairness constraints, such as EF1 and EFX, or the computational complexity of deciding whether there exists an allocation maximizing social welfare (without being subject to a fairness constraint) while being fair (see Section~\ref{sect:relatedwork} for details).
However, these studies mostly deal with \emph{exact} optimal social welfare.
The design of \emph{approximation algorithms} and the study of \emph{approximability} for these constrained optimization problems are mostly absent from the previous literature.
Motivated by the above reasons, we aim to answer the following question in our paper:
\begin{center}
\emph{
What is the computational complexity of finding an allocation that (approximately)\\
maximizes {social welfare} subject to EFX/EF1 constraints?
}
\end{center}

\section{Our Results}
\label{sec:our_results}
Our main contribution is providing a \emph{complete landscape} of the complexity, specifically, the \emph{approximability}, of the problems of optimizing social welfare subject to the EFX and EF1 constraints, respectively.
We represent these two problems as \mswx and \msw.

As a preliminary step, we first present the concept of \emph{resource monotonicity} for EF1 and EFX allocations in \Cref{sec:resource_mono}.
A fairness notion $\Gamma$ satisfies resource monotonicity with respect to social welfare if, for any partial allocation (where some items are unallocated) adhering to $\Gamma$, there always exists a complete allocation that achieves weakly higher social welfare while maintaining fairness $\Gamma$.
We demonstrate that resource monotonicity does not hold for EFX in Lemma~\ref{thm:monotonicityEFX}, even in instances where complete EFX allocations are known to exist.
Thus, for \mswx, we should not exclude partial allocations from our consideration.
However, we find that resource monotonicity holds for EFX allocations with two agents.
Some insights (shown in \Cref{prop:monotonicity}) behind the proof of \Cref{thm:monotonicityEFX} are crucial in designing our algorithm for \mswx with two agents (Section~\ref{sect:PTAS}).
For EF1, resource monotonicity holds trivially by the envy-cycle procedure.
Thus, when dealing with \msw, we always consider complete allocations.

\begin{table}[t]
    \centering
    \small
    \begin{threeparttable}
    \renewcommand{\arraystretch}{1.2} 
      \centering
        \begin{tabular}{cl|rr}
        \hline 
    
        \hline
        \multicolumn{4}{c}{Approximation ratios and inapproximabilities of \mswx.} \\
        \hline 
        {\bf $n$ } & {\bf utility}  & {\bf Positive results} & {\bf Negative results}\\
        \hline
       \multirow{2}{*}{2} & normalized & PTAS (Sect~\ref{sect:PTAS}, Thm~\ref{thm:PTAS-overall}) &  $\classNP$-hard \citep{aziz2020computing} \\
        \cline{3-4}
        & unnormalized & 
        PTAS (Sect~\ref{sect:PTAS}, Thm~\ref{thm:PTAS-overall}) &  $\classNP$-hard\citep{aziz2020computing} \\
        \hline
        \multirow{4}{*}{constant, $\ge 3$} & \multirow{2}{*}{normalized} & $\pseudopoly$ $c\sqrt{n}$ (Thm~\ref{thm:mswx_pos_overall})  & 
        \multirow{2}{*}{$(\sqrt{8n+1}-1)/8$ (Thm~\ref{thm:mswx_neg_overall})} \\ 
        & & {$(1,1-\epsilon)$-BC (Thm~\ref{thm:bicriteria})}
        &  \\ 
        \cline{3-4} 
        & \multirow{2}{*}{unnormalized} & $\pseudopoly$ $(2n+1)$ (Thm~\ref{thm:mswx_pos_overall})  & 
        \multirow{2}{*}{$(n+1)/2$ (Thm~\ref{thm:mswx_neg_overall})} \\ 
        &  & 
        {$(1,1-\epsilon)$-BC} 
        (Thm~\ref{thm:bicriteria}) &  \\ 
        \cline{3-4} 
        \hline 
        \multirow{3}{*}{general} & normalized & $\pseudopoly$ $O(\sqrt{n})$ (Thm~\ref{thm:mswx_pos_overall}) & $\pseudopoly$  $n^{0.5-\epsilon}$ (Thm~\ref{thm:mswx_neg_overall})  \\
        \cline{3-4}
        & \multirow{2}{*}{unnormalized}& \multirow{2}{*}{$\pseudopoly$ $O(n)$ (Thm~\ref{thm:mswx_pos_overall})}  & $\pseudopoly$ $n^{1-\epsilon}$ (Thm~\ref{thm:mswx_neg_overall}) \\
        &  &  &  $(n^{1-\epsilon},0.5+\epsilon)$-BC (Thm~\ref{thm:hardness-bicriteria-overall})     \\
        
        \hline 
        
        \hline 
        \multicolumn{4}{c}{Approximation ratios and inapproximabilities of \msw.} \\
        \hline 
        \multirow{2}{*}{2} & normalized & FPTAS (Sect~\ref{sect:FPTAS}, Thm~\ref{thm:PTAS-overall}) &  
        $\classNP$-hard(Thm~\ref{thm:PTAS-overall}) \\
        \cline{3-4}
        & unnormalized & 
        FPTAS (Sect~\ref{sect:FPTAS}, Thm~\ref{thm:PTAS-overall}) &  $\classNP$-hard(Thm~\ref{thm:PTAS-overall}, \citep{aziz2020computing}) \\
        \hline
            \multirow{4}{*}{constant, $\ge 3$} & \multirow{2}{*}{normalized} & 
        $\approx12\sqrt{n}$~\citep{barman2020optimal}
        & \multirow{2}{*}{$4n/(3n+1)$, $n^{\frac14}/4.5$ (Thm~\ref{thm:msw_neg_overall})} \\
        & & $(1,1-\epsilon)$-BC 
        (Thm~\ref{thm:bicriteria})
        &  
         \\ 
        \cline{3-4}
        & unnormalized & $(1,1-\epsilon)$-BC 
        (Thm~\ref{thm:bicriteria}) 
        &  
        $\lfloor(1 + \sqrt{4n-3}) /2\rfloor $ (Thm~\ref{thm:msw_neg_overall}) \\ 
        \hline 
        \multirow{3}{*}{general} & normalized & $O(\sqrt{n})$~\citep{barman2020optimal} & $n^{\frac{1}{3}-\epsilon}$, $m^{\frac{1}{2}-\epsilon}$ (Thm~\ref{thm:msw_neg_overall})  \\
        \cline{3-4}
        & unnormalized & $n$ (Thm~\ref{thm:1/n}) & 
        \begin{tabular}{r}
             $m^{1-\epsilon}$~\citep{10.5555/3306127.3331927} \\
             $(n^{\frac12-\epsilon},\epsilon)$-BC,
             $(m^{1-\epsilon},\epsilon)$-BC (Thm~\ref{thm:hardness-bicriteria-overall})
        \end{tabular} \\

        \hline
      \end{tabular}
      \begin{tablenotes}
          \footnotesize
          \item $\circ$ $n$ and $m$ represent the numbers of agents and items respectively and $c$ represents a constant number.
          \item $\circ$ $\pseudopoly$ marks for positive/negative results for pseudo-polynomial time algorithms.
          \item $\circ$ $(\alpha,\beta)$-BC: bi-criteria optimization with $\alpha$-approximation on social welfare and $\beta$-approximation on EFX (\Cref{def:EFX_approx}) or EF1 (\Cref{def:EF1_approx}).
        \end{tablenotes}      \caption{\label{tab:results}Appropximability landscape of \mswx and \msw.}
    \end{threeparttable}
    \end{table}
\subsection{Approximability for EFX/EF1 + MSW}
This part presents our main contribution---giving a complete landscape of the approximability of \mswx and \msw.
The results are summarized for different cases based on the number of agents (see \Cref{tab:results}).
Note that we examine both normalized and unnormalized valuations, where normalized valuations require that each agent's total utility for all items is identical.
In the main body of the paper, our results for \mswx are presented in Section~\ref{sect:PTAS} and Section~\ref{sec:mswxgeneraln}, and our results for \msw are presented in Section~\ref{sect:FPTAS} and Section~\ref{sec:mswgeneraln}, with most of the proofs deferred to the appendix.
We note that the results for bi-criteria optimization are discussed in \Cref{sec:bi-critera}.
Below, we highlight the key interesting features and observations from our results for different cases.

\paragraph{Two agents.}
It is known that the prices of EF1 and EFX for two agents are bounded away from $1$~\citep{bei2021price,LiLiLu24}.
Consequently, the maximum social welfare, $\MSW = \sum_{g\in [m]} \max_i v_i(g)$, cannot be used as a target to achieve any approximation ratio of the optimal values for our problems.
However, even without relying on the maximum social welfare as the target value, we find that an FPTAS and a PTAS can be achieved for \msw and \mswx, respectively. This complements the results on prices of fairness -- even if the optimal welfare of EFX/EF1 allocations may be constantly bounded away from the maximum social welfare, it can still be efficiently approximated within any given ratio.
Meanwhile, we also show the $\classNP$-hardness for this case, which successfully bridges the gaps between the approximation and inapproximability ratios.
In particular, \Cref{thm:PTAS-overall} resolves an open problem raised by~\cite{aziz2020computing}.

\paragraph{General number of agents.}
When the number of agents is generalized, for \mswx, we asymptotically close the gap between the approximation and the inapproximability ratios, achieving a near-complete understanding of the problem.
The ratios are $\Theta(\sqrt{n})$ and $\Theta(n)$ for normalized and unnormalized cases, respectively (\Cref{thm:mswx_pos_overall}).
Additionally, we provide stronger inapproximability results for \msw.
Our intractability results (\Cref{thm:hardness-bicriteria-overall})
demonstrate polynomial inapproximability factors, even when EFX and EF1 constraints are significantly relaxed.

\paragraph{Constant number of agents.}
We also consider a notable case where the number of agents, $n$, is a fixed constant integer larger than $2$.
By slightly relaxing EFX/EF1 by a factor of $(1-\epsilon)$, we find that the optimal solution can be computed in polynomial time, as indicated by our results in bi-criteria optimization (\Cref{thm:bicriteria}).
However, if exact fairness is required, we show that strong inapproximability ratios still hold for the two problems and become increasingly severe as $n$ increases (\Cref{thm:mswx_neg_overall} and \Cref{thm:msw_neg_overall}).

\subsection{Price of EFX and EF1}
Our results imply that the price of EFX is $\Theta(n)$ and $\Theta(\sqrt{n})$ for unnormalized and normalized valuations, respectively.
The known results for the price of EFX/EF1 are presented in \Cref{tab:priceoffairness}.
\begin{table}[h!]
    \centering
    \small
    \setlength{\tabcolsep}{3pt}
    \begin{tabular}{cl|cc}
    \hline 
    
    \hline
      & {\bf utility} & {\bf $n=2$} & {\bf general $n$} \\
      \hline
        \multirow{2}{*}{EFX} & normalized  & $1.5$ \citep{bei2021price} & $\Theta(\sqrt{n})$ (Thm~\ref{thm:price_of_fairness}) \\
         & unnormalized  & $2$ (\citep{LiLiLu24}, Thm~\ref{thm:price_of_fairness}) & $\Theta(n)$ (Thm~\ref{thm:price_of_fairness})\\
         \hline
         \multirow{2}{*}{EF1} & normalized  & $8/7$ \citep{bei2021price,LiLiLu24}\hspace{0.3cm} & $\Theta(\sqrt{n})$ \citep{bei2021price,barman2020optimal} \\
         & unnormalized  & $2$ (Thm~\ref{thm:price_of_fairness}) & $n$ (Thm~\ref{thm:price_of_fairness}) \\
         \hline

         \hline
    \end{tabular}
    \caption{Known results and our results for the price of fairness.}
    \label{tab:priceoffairness}
\end{table}

We remark that, following the convention in the previous work (e.g., \cite{LiLiLu24}), we have defined the price of EF1/EFX such that the allocation space includes both complete and partial allocations.
We will elaborate on this later in Remark~\ref{rmk:pof}.

\section{Related Work}
\label{sect:relatedwork}
\paragraph{Fair and efficient allocation.}
The study of fair and efficient allocation in the context of (in)divisible resources is extensive~\citep{aumann2015efficiency,aumann2012computing,SS19,CKMP+19,BKV18,garg2021fair}.
Several studies considered the problem of maximizing social welfare within fair allocations, which is the focus of this paper. 
It is worth noting that, for indivisible items, most papers~\citep{CKMP+19,BKV18,garg2021fair} focused on Nash social welfare, while exploration of utilitarian social welfare is comparatively limited.
For divisible goods, the complexity of the problem is well understood~\citep{cohler2011optimal,brams2012maxsum,bei2012optimal}.
Specifically, maximizing social welfare within envy-free/proportional allocations for piecewise-constant valuations can be optimally solved~\citep{cohler2011optimal}.
However, if each agent is required to receive a connected piece, this problem admits a polynomial inapproximability factor~\citep{bei2012optimal}.

On the other hand, there are some recent papers that consider the problem of maximizing social welfare within EF1 allocations for indivisible goods. 
\cite{10.5555/3306127.3331927} proved that the problem is $\classNP$-hardfor $2$ agents, even when the valuation of one agent is a scaled version of the other (\Cref{thm:NPhardTwo}). The problem is $\classNP$-hardto approximate to within a factor of $m^{1-\epsilon}$ for any $\epsilon>0$ for general numbers of agents $n$ and items $m$ (\Cref{thm:msw_neg_overall}).
They also presented a $1/2$-approximation algorithm for this problem with the dominant-strategy incentive-compatible (DSIC) property, under the special setting where all agents' valuations are scaled versions of a single valuation function (referred to as the ``single-parameter setting'').
\cite{barman2020optimal} presented an algorithm that outputs an EF1 allocation achieving social welfare of at least $O(\OPT/\sqrt{n})$, with $\OPT$ representing the optimal social welfare of all feasible allocations.
\cite{aziz2020computing} showed that the problem is still $\classNP$-hard for $n\ge 3$, even when the utility function is normalized (we improve it in Lemma~\ref{thm:NPhardGeqTwo}), and left it as an open problem when there are $n=2$ agents\footnote{Nevertheless, \cite{aziz2020computing} showed that the problem is $\classNP$-hardfor two agents with \emph{unnormalized} valuations.} (we resolve it in \Cref{thm:PTAS-overall}).
In addition, they proposed a pseudo-polynomial time algorithm when the number of agents is fixed.

Most previous studies focused on EF1, with a limited understanding of the relationship between EFX and social welfare.
\cite{aziz2020computing} showed that the problem of maximizing social welfare subject to the EFX constraint is $\classNP$-hard, and the $\classNP$-hardness result continues to hold for two agents with normalized valuations.
Other than this, to the best of our knowledge, there are no known (in)approximability results for this problem.

Even for EF1, previous studies focused on exact optimal social welfare, rather than its approximability (with the two exceptions of the $m^{1-\epsilon}$ inapproximability~\cite{10.5555/3306127.3331927} and the $O(\sqrt{n})$-approximation algorithm~\cite{barman2020optimal} previously mentioned).
A complete landscape of the approximability for this constrained optimization problem is still missing before this paper.

\paragraph{Resource monotonicity.}
Our paper studies the resource monotonicity with respect to the social welfare.
The resource monotonicity for other efficiency notions like Pareto-optimality and Nash social welfare has been studied by~\cite{caragiannis2019envy} and \cite{ChaudhuryGaMe24}.
\cite{ChaudhuryGaMe24} showed that there exist partial EFX allocations that are not Pareto-dominated by any complete EFX allocations, and there exist partial EFX allocations with Nash social welfare higher than any complete EFX allocations.

\paragraph{Price of fairness.}
The price of fairness measures the loss in social welfare if a fairness constraint is imposed.
The price of EF1/EFX is defined by the supremum (among all fair division instances) of the ratio of the maximum social welfare over the maximum social welfare within EF1/EFX allocations.
The price of EF1 under normalized valuations is $\Theta(\sqrt{n})$:
\cite{barman2020optimal} showed that the price of EF1 is $O(\sqrt{n})$, and \cite{bei2021price} showed that the price of EF1 is $\Omega(\sqrt{n})$.
\cite{bei2021price} and \cite{LiLiLu24} showed that the prices of EF1 and EFX for two agents with normalized valuations are $8/7$ and $1.5$, respectively.
Further, \cite{LiLiLu24} proved that the price of EFX for two agents with unnormalized valuations is at most 2, which complements the result in \Cref{thm:price_of_fairness}.
The price of EFX for a general number of agents is still an open problem (we resolve it in \Cref{thm:price_of_fairness}).

\paragraph{EFX existence, partial EFX allocations, and approximately EFX allocations.}
As mentioned, finding an EFX allocation is challenging and hard to tackle.
The existence of an EFX allocation with additive valuations is known for a small number of agents, e.g., two agents~\citep{plaut2020almost} and three agents~\citep{ChaudhuryGaMe24,AkramiAlCh23,BergerCoFe22}, or for any number of agents with restricted utility functions, e.g., bi-valued \citep{AmanatidisBiFi21,GargMu23}.
The existence of EFX allocations is also known for restricted classes of non-additive utility functions for any number of agents, e.g., identical valuations~\citep{plaut2020almost}, binary submodular valuations~\citep{BabaioffEzFe21}, general binary valuations~\citep{BuSoYu23}, graph valuations~\citep{christodoulou2023graph}, and three types of MMS-feasible utility functions~\citep{ghosal2023four}, or for a small number of items~\citep{mahara2024EFX}.
Some papers also focus on relaxations of EFX, e.g., epistemic EFX~\citep{caragiannis2023EEFX,akrami2024EEFX}, approximately-EFX allocations~\citep{AkramiAlCh23,AmanatidisMaNt20,amanatidis2024graph}, and partial EFX allocations~\citep{BergerCoFe22,caragiannis2019envy,chaudhury2021little,ChaudhuryGaMe23-MOOR}.
The existence and approximation of EFX are also studied in the chore setting~\citep{zhou2024chore,christoforidis2024chore,garg2024chore}.

\paragraph{Fairness notions.}
There are also other criteria, which are not based on envy-freeness, that have attracted interest in the previous literature.
In the context of cake-cutting, there are other common fairness criteria including \emph{proportionality} (PROP), \emph{equitability} (EQ), etc. (see, e.g.,~\cite{BramsTa96,BrandtCoEn16,Endriss17,Moulin19} for a survey).

Since exact fairness cannot be achieved for indivisible items, relaxed versions of the above notions have been proposed, such as PROP1/PROPX and EQ1/EQX, which adapt PROP and EQ to the indivisible resource setting in the same "fairness up to one item" or "fairness up to any item" manner (see ~\cite{Lipton04onapproximately,gourves2014near,CFS17,FSVX19,AZIZ2020573}).

\paragraph{Welfare maximization.}
There has been a series of works focused on maximizing utilitarian social welfare in economics and computer science. 
When the valuations are additive, this can be simply achieved by allocating each item to the agent who favors it. 
However, the problem becomes challenging when there are additional constraints on the allocation.
In addition to fairness constraints, matroid constraints are also studied in a recent work~\citep{dror2023fair}, which showed maximizing social welfare subject to heterogeneous matroid constraints can be done in polynomial time.
Regarding submodular valuations, maximizing the social welfare of indivisible items is shown to be $\classNP$-hard and has $\left(\frac{e}{e-1}\right)$-inapproximability ratio by~\cite{khot2005inapproximability}.
\cite{lehmann2001combinatorial} showed that that problem is a special case of the problem of maximizing a submodular function $f:2^X\rightarrow \mathbb{R}^+$ subject to a matroid $\mathcal{M} = (X, \mathcal{I})$.
There has been a line of papers working on the latter problem~\citep{vondrak2008optimal,calinescu2011maximizing}.

\section{Preliminaries}
\label{sec:prelim}
Let $[k]=\{1,\ldots,k\}$.
Denote by $N = [n]$ the set of agents and $M = [m]$ the set of indivisible items. 
Each agent $i$ has a nonnegative \emph{utility function} $v_i:\{0,1\}^M\to\mathbb{R}_{\geq0}$.
We assume each agent's utility function is \emph{additive}:
$v_i(S)=\sum_{g\in S}v_i(g)$ for every $i\in N$ and $S\subseteq M$, and we denote $v_i(\{g\})$ by $v_{ig}$ or $v_i(g)$ for notation simplicity.
Further, a utility function $v_i$ is said to be \emph{normalized} if $v_i(M)=\sum_{g\in M}v_{ig}=1$, i.e., agent $i$ values exactly 1 for the set of indivisible items $M$.
We will use the two phrases \emph{utility function} and \emph{valuation} interchangeably in this paper.

An \emph{allocation} of the items is the collection of the $n$ item sets $\A = (A_1, A_2, \dots, A_n)$ satisfying $A_i\cap A_j=\emptyset$ for any $i,j\in [n]$, where $A_i$ is the bundle of items allocated to agent $i$.
An allocation is \emph{complete} if $\bigcup_{i=1}^nA_i=M$, i.e., $\A = (A_1, \dots, A_n)$ is a partition of $M$.
We say an allocation is \emph{partial} if it is not complete.
An allocation is \emph{envy-free} if $v_{i}(A_{i}) \geq v_{i}(A_{j})$ for any two agents $i$ and $j$ in $N$.
That is, according to agent $i$'s utility function, agent $i$ does not envy any other agent $j$'s allocation.
An envy-free allocation may not exist in the problem of allocating indivisible items (e.g. when $m<n$). 
We consider two well-known common relaxations of envy-freeness, \emph{envy-freeness up to any item} (EFX) and \emph{envy-freeness up to one item} (EF1), defined below.

\begin{definition}\label{def:EFX}
An allocation $\A = (A_1, \dots, A_n)$ is said to satisfy \emph{envy-freeness up to any item} (EFX), if for any two agents $i$ and $j$, $v_{i}(A_{i})\geq v_{i}(A_{j}\setminus\{g\})$ holds for any $g\in A_j$.
\end{definition}

\begin{definition}\label{def:EF1}
An allocation $\A = (A_1, \dots, A_n)$ is said to satisfy \emph{envy-freeness up to one item} (EF1), if for any two agents $i$ and $j$, there exists an item $g \in A_{j}$ such that $v_{i}(A_{i})\geq v_{i}(A_{j}\setminus\{g\})$.
\end{definition}

For a verbal description, in an EF1 allocation, after removing \emph{some} item $g$ from agent $j$'s bundle, agent $i$ will no longer envy agent $j$.
For EFX, the quantifier \emph{some} is changed to \emph{any}.
Given an allocation $(A_1,\ldots,A_n)$, we say that agent $i$ \emph{envies} agent $j$ if $v_i(A_i)<v_i(A_j)$.
We say that agent $i$ \emph{EFX-envies} agent $j$ if $v_i(A_i)<v_i(A_j\setminus\{g\})$ for \emph{some} $g\in A_j$; agent $i$ \emph{EF1-envies} agent $j$ if $v_i(A_i)<v_i(A_j\setminus\{g\})$ for \emph{every} $g\in A_j$.
By our definition, an allocation is EFX/EF1 if and only if $i$ does not EFX/EF1-envy $j$ for every pair $(i,j)$ of agents.

\begin{remark}\label{remark:EFX1}
Consider an allocation $\A=(A_1,\ldots,A_n)$ and two agents $i$ and $j$.
Since we are considering additive valuations, EFX requires that $v_i(A_i)\geq v_i(A_j\setminus\{g\})$ where $g$ is an item in $A_j$ with \emph{minimum} $v_i(g)$, and EF1 requires that $v_i(A_i)\geq v_i(A_j\setminus\{g\})$ where $g$ is an item in $A_j$ with \emph{maximum} $v_i(g)$.
\end{remark}

It is well-known that a complete EF1 allocation always exists for general additive utility functions, and it can be computed in polynomial time \citep{Lipton04onapproximately,budish2011combinatorial}.
However, the existence of a complete EFX allocation is an open problem for $n\geq 4$.

Another critical issue is economic efficiency, where we consider \emph{social welfare} as defined below.
\begin{definition}\label{def:SW}
The \emph{social welfare} of an allocation $\A=(A_1,\ldots,A_n)$, denoted by $\SW(\A)$, is the sum of the utilities of all the agents
$\SW(\A)=\sum_{i=1}^nv_i(A_i)$.
\end{definition}

In this paper, we focus on the problem of maximizing social welfare subject to the EFX/EF1 constraint. More formally, we have the following two constrained optimization problems.

\begin{problem}[{\sc MSWwithinEFX, MSWwithinEF1}]\label{prob:efx}
Given a set of indivisible items $M=[m]$ and a set of agents $N=[n]$ with their utility functions $(v_1,\ldots,v_n)$, the problem of
\begin{itemize}[leftmargin=0.5cm]
\item \emph{maximizing social welfare within EFX allocations (\mswx)} aims to find an allocation $\A=(A_1,\ldots,A_n)$ that maximizes social welfare $\SW(\A)$ subject to that $\A$ is EFX;
\item \emph{maximizing social welfare within EF1 allocations (\msw)} aims to find an allocation $\A=(A_1,\ldots,A_n)$ that maximizes social welfare $\SW(\A)$ subject to that $\A$ is EF1.
\end{itemize}
\end{problem}

As discussed before, for \mswx, we do not restrict to complete allocations.
However, for \msw, as for any partial allocation, there is a complete allocation with weakly higher social welfare, we can focus exclusively on complete allocations for \msw without loss of generality.
Below we provide more details on the resource monotonicity of EFX/EF1.

\subsection{Resource Monotonicity} 
\label{sec:resource_mono}
The concept of \emph{resource monotonicity} for EF1 and EFX allocations questions whether it is always possible to find a complete EF1/EFX allocation that achieves weakly higher social welfare compared to a given partial EF1/EFX allocation.
Given a partial EF1 allocation, we can apply the well-known \emph{envy-cycle procedure} by \cite{Lipton04onapproximately} to obtain a complete allocation, and each agent's received value is non-decreasing throughout the procedure.
This immediately implies the resource monotonicity of EF1 allocations (described in \Cref{thm:monotonicityEF1}).

\begin{restatable}{lemma}{monotonicityEFOne}
\label{thm:monotonicityEF1}
(Resource monotonicity for EF1)
For any partial EF1 allocation $\A$, we can compute a complete EF1 allocation $\A^*$ in polynomial time such that $\SW(\A^*)\geq\SW(\A)$.
\end{restatable}

Regarding EFX, as shown in \Cref{thm:monotonicityEFX}, the resource monotonicity holds for two agents.
However, when the number of agents is larger, it fails even for the more restrictive normalized valuations. 
The proof of \Cref{thm:monotonicityEFX} is deferred to \Cref{append:monotonicityEFX}.
Here we mainly give a high-level sketch.
The proof of the case of two agents primarily relies on the subroutine within \Cref{prop:monotonicity}, which will be detailed in Section~\ref{sec:two}.
Given a partial allocation $(A, B)$ and an unallocated item $g$, if allocating $g$ to each bundle will cause the recipient to be envied by the other agent,  
the subroutine will compute a new allocation $(A', B')$ such that $A' \cup B' = A\cup B\cup \{g\}$ and the new allocation is EFX and that each one's utility does not decrease.
Thus, we can iteratively apply this subroutine to a given partial allocation until obtaining a complete allocation.
For more than two agents, the counter-example is based on the instance introduced by~\cite{ChaudhuryGaMe24}.

\begin{restatable}{lemma}{efxResourceMono}
\label{thm:monotonicityEFX}
(Resource monotonicity for EFX)
When $n=2$, for any partial EFX allocation $\A$, we can compute a complete EFX allocation $\A'$ such that $\SW(\A') \ge \SW(\A)$ in polynomial time.
However, when $n>2$, there exists an instance $\mathcal{I} = (N,M,(v_1,\ldots,v_n))$ with normalized $v_1,\ldots,v_n$ where (1) complete EFX allocations exist, and (2) there is a partial EFX allocation $\A$ such that $\SW(\A)>\SW(\A')$ for any complete EFX allocation $\A'$.
\end{restatable}

\subsection{Price of Fairness}
The price of fairness measures the loss in social welfare when a fairness constraint is imposed.

\begin{definition}\label{def:maxsocialwelfare}
    Given a valuation profile $(v_1,\ldots,v_n)$, let $\MSW(v_1,\ldots,v_n)=\sum_{g=1}^m\max_{i\in[n]}v_i(g)$ be the maximum social welfare among all allocations (without any fairness constraint).
    We simply write $\MSW$ when the valuation profile is clear from the context.
\end{definition}

\begin{definition}[Price of EFX]
\label{def:priceofEFX}
The \emph{price of EFX} is defined by 
$$
\sup\limits_{(v_1,\ldots,v_n)}\frac{\MSW(v_1,\ldots,v_n)}{\SW(\A^\ast)},
$$
where $\A^\ast$ is an EFX allocation (which is allowed to be partial) with maximum social welfare.
\end{definition}

\begin{definition}[Price of EF1]\label{def:priceofEF1}
    The \emph{price of EF1} is defined by $$
    \sup\limits_{(v_1,\ldots,v_n)}\frac{\MSW(v_1,\ldots,v_n)}{\SW(\A^\ast)},$$
    where $\A^\ast$ is an EF1 allocation with maximum social welfare.
\end{definition}

\begin{remark}\label{rmk:pof}
    When defining the price of fairness (EF1 or EFX in this paper), we follow the convention in the previous literature (e.g., \cite{LiLiLu24}) that allows partial allocations.
    Another way to define the price of fairness only considers complete allocations.
    Both definitions are equivalent in the case of EF1 where resource monotonicity holds (Lemma~\ref{thm:monotonicityEF1}).
    For other fairness notions that fail the resource monotonicity property (or the satisfiability of the resource monotonicity property is unknown), if allocations with the fairness notion are guaranteed to exist, the two definitions become different.
    The one allowing partial allocations is more commonly used.
    For example, the existence of EFM allocations for mixed divisible and indivisible goods is guaranteed~\cite{bei2021fair} while it is unknown whether the resource monotonicity property holds for EFM; the price of EFM was defined in the first way in which partial allocations are allowed~\cite{LiLiLu24}.
    We follow the same convention for EFX in this paper.
    It should also be remarked that the second definition for the price of EFX (that only considers complete allocations) only makes sense if EFX allocations are guaranteed to exist (which is unknown for now).\footnote{If it turns out that there are instances where EFX allocations do not exist, and if we insist on defining the price of EFX with complete allocations, we have to restrict the space of instances to only include those admitting EFX allocations. A much more natural way to define the price of EFX is to include partial allocations, as it is defined in this paper.}
\end{remark}

\subsection{NP-hardness and NP-completeness Results}
\label{sect:np-hardness}

Our hardness proofs are mainly built upon the classical partition and independent set problems.
The partition problem decides whether a given set of positive integers can be partitioned into two subsets with equal subset-sum, which is known to be $\classNP$-complete~\cite{karp2010reducibility}.
\begin{definition}[Partition]\label{def:partition}
Given a set $S$ of positive integers, the \emph{partition problem} decides whether $S$ can be partitioned into subsets $S_1$ and $S_2$ such that the sums of the numbers in the subsets are equal.
\end{definition}
Meanwhile, we use two variants of the independent set problems. 
The first is the original one, which decides whether a given graph $G$ contains an independent set of at least $k$ vertices for a given input integer $k$ and is also known to be $\classNP$-complete~\cite{karp2010reducibility}.
The second one is the optimization version, which finds the maximum size of the independent set of $G$ and is shown to have $n^{1-\epsilon}$ inapproximability by~\cite{hastad1996clique,khot2001improved,zuckerman2006linear}.\footnote{\cite{hastad1996clique} showed the inapproximability for the ratio $n^{1-\epsilon}$ under the stronger assumption $\classZPP\neq\classNP$. \cite{khot2001improved} described more precisely the inapproximability ratio, but with an even stronger assumption. \cite{zuckerman2006linear} derandomized the construction weakening its assumption to $\classP\neq\classNP$.}
\begin{definition}[Independent Set and Maximum Independent Set]\label{def:indset}
We are given a graph $G=(V,E)$.
The \emph{independent set problem} decides whether $G$ contains an independent set of at least $x$ vertices for a given input integer $x$.
The \emph{maximum independent set problem} seeks for maximum $x$ such that $G$ contains an independent set of $x$ vertices.
\end{definition}
\begin{lemma}
[Inapproximability of Maximum Independent Set]\label{thm:indset}
For any $\epsilon>0$, it is $\classNP$-hard to approximate the maximum independent set within a factor of $n^{1-\epsilon}$, where $n$ is the number of vertices.
\end{lemma}

\section{Maximizing Social Welfare with Two Agents}
\label{sec:two}
This section focuses on the case of two agents.
We begin by presenting a fully polynomial-time approximation scheme (FPTAS) for \msw in Section~\ref{sect:FPTAS}.
Based on the techniques used in Section~\ref{sect:FPTAS} and \Cref{prop:monotonicity} in addition, we present a polynomial-time approximation scheme (PTAS) for \mswx in \Cref{sect:PTAS}.
Finally, in \Cref{sect:hardness2}, we complement these positive results by proving that both \msw and \mswx are $\classNP$-hard.
Notice that, for both \msw and \mswx, our FPTAS and PTAS are applicable to unnormalized valuations, and our $\classNP$-hardness results hold even for normalized valuations.
In addition, the $\classNP$-hardness result for \msw with normalized valuations resolves the open problem raised by~\cite{aziz2020computing}.
We summarize the results for two agents as follows.
\begin{restatable}{theorem}{PTASOverall}
\label{thm:PTAS-overall}
For two agents, both \msw and \mswx are $\classNP$-hard even for normalized valuations.
On the positive side, Algorithm~\ref{alg:fptas} provides an FPTAS for \msw, and Algorithm~\ref{alg:ptas} provides a PTAS for \mswx. 
\end{restatable}

\subsection{A Fully Polynomial-Time Approximation Scheme for \msw}
\label{sect:FPTAS}
This part presents our FPTAS for two agents.
We first consider a natural allocation \( (O_1, O_2) \): \( O_1 = \{ g \in [m] \mid v_{1}(g) \geq v_{2}(g) \} \) and \( O_2 = [m]\setminus O_1\), that maximizes social welfare.
If the allocation satisfies EF1, we have already found the $\OPT$.
Otherwise, below we assume that $(O_1, O_2)$ is not EF1 and agent 2 EF1-envies agent 1 without loss of generality, and present an FPTAS for that case.

\begin{algorithm}[t]
\caption{An FPTAS for \msw}\label{alg:fptas}
\KwInput{two utility functions $v_1(\cdot)$ and $v_2(\cdot)$, and the parameter $\epsilon>0$} 
\KwOutput{an EF1 allocation}
Let $O_1 \leftarrow \{g\in \left[m\right]\ |\  v_{1}(g) \geq v_{2}(g) \}$ and $O_2 \leftarrow  \{g \in \left[m\right]\ |\ v_1(g) < v_2(g) \}$\;
Let $\Pi\leftarrow \emptyset$ be the set of all the candidate allocations\;
\For{each item $g \in O_1$}{
    \For{each item $o \in O_1\setminus \{g\}$}{
        $v(o) \leftarrow v_{1o} - v_{2o}$ and $w(o) \leftarrow v_{2o}$\;
    }
    $A_1'\leftarrow$ the output of FPTAS for Knapsack with parameter $\epsilon$ and item set $[m]\setminus(O_2\cup\{g\})$, value function $v$, weight function $w$ and capacity constraint $(v_2([m])-v_{2}(g))/ 2$\;
    Let $A_2'=[m]\setminus(A_1'\cup\{g\})$\;
    $(A_1, A_2) \leftarrow$ {\sc LocalSearchEF1}($ A_1'\cup\{g\}, A_2'$) \;
    $\Pi \leftarrow \Pi \cup \left\{\left(A_1, A_2\right)\right\}$\;
}
\Return{the allocation with the largest social welfare in $\Pi$} \label{line:choose_largest}\;

\SetKwProg{Fn}{Function}{:}{}
\Fn{\FLocalSearchEF{$A_1$, $A_2$}}{
\While{agent $1$ EF1-envies agent $2$}{
Find an arbitrary item $g \in A_2\setminus O_2$ and $A_2\leftarrow A_2\setminus\{g\}$ \label{line:find_g}\;
\eIf{agent $2$ envies agent $1$ under the partial allocation $(A_1,A_2)$}{
    $(A_1, A_2)\leftarrow (A_2\cup \{g\}, A_1)$ \label{line:add_g_to_agent1}\;
}
{ $(A_1, A_2)\leftarrow (A_1\cup\{g\}, A_2)$\;}
}
\Return{$(A_1, A_2)$} 
}
\end{algorithm}

The FPTAS is shown in Algorithm~\ref{alg:fptas} and works as follows: The bundle $O_2$ is fixed to be allocated to agent 2 and an item $g \in O_1$ is fixed to be given to agent 1.
This item $g$ will be a ``guess'' of the item whose removal ensures agent $2$ does not envy agent $1$ in Definition~\ref{def:EF1}, and we will enumerate all possibilities of $g\in O_1$.
Next, we decide the allocation of the remaining items.
To ensure agent $2$ does not envy agent $1$ after removing $g$ from agent $1$'s bundle, the allocation $(A_1,A_2)$ must satisfy $v_2(A_1\setminus\{g\})\leq v_2(A_2)$, which is equivalent to $v_2(A_1\setminus\{g\})\leq v_2([m]\setminus\{g\})/2$.
Therefore, the problem can be viewed as a classical Knapsack problem, where the capacity of the knapsack is $v_2([m]\setminus\{g\})/2$, the weight of item $o$ is $v_{2}(o)$, and the value is $v_{1}(o)-v_{2}(o)$. 
We then use the well-known FPTAS algorithm for the Knapsack problem (see, e.g., Chapter~8 in the textbook~\citep{vazirani2001approximation}). 
After this step, we have an allocation with almost optimal social welfare while ensuring agent 2 does not EF1-envy agent 1.
However, agent 1 may envy agent 2.
If so, we use a local search algorithm to make the allocation EF1 with improved social welfare.

\begin{lemma} \label{thm:FPTAS}
Algorithm~\ref{alg:fptas} is an FPTAS for \msw with two agents.
\end{lemma}

Before proving Lemma~\ref{thm:FPTAS}, we define some additional notations.
Let $\Phi_{1\not\rightsquigarrow 2}$ (resp. $\Phi_{2\not\rightsquigarrow 1}$) be the constraint ensuring that agent 1 (resp. agent 2) does not EF1-envy agent 2 (resp. agent 1).
Hence, the EF1 constraint is the conjunction of $\Phi_{1\not\rightsquigarrow 2}$ and $\Phi_{2\not\rightsquigarrow 1}$.
Let $\OPT$ be the optimal social welfare under the EF1 constraint and $\OPT(\Phi_{2\not\rightsquigarrow 1})$ be the optimal social welfare solely subject to $\Phi_{2\not\rightsquigarrow 1}$.
Let $\ALG$ be the social welfare of the allocation output by Algorithm~\ref{alg:fptas}.

Next, we sketch the proof of \Cref{thm:FPTAS} and highlight only the key components.
First, since the initial allocation $(O_1, O_2)$ maximizes social welfare, agent 1 must not envy agent 2.
Otherwise, swapping the two bundles results in higher social welfare, which contradicts the optimality of $(O_1, O_2)$.
Since $\Phi_{2\not\rightsquigarrow 1}$ is a part of the EF1 constraint and the additional constraint $\Phi_{1\not\rightsquigarrow 2}$ can only possibly reduce the optimal social welfare, we have $\OPT \le \OPT(\Phi_{2\not\rightsquigarrow 1})$.
Despite this, with a bit of counterintuition, we show that the two optimal values are essentially the same under the assumption about $(O_1, O_2)$ in \Cref{lem:optcTwo}.
In other words, there exists one among all the allocations maximizing the social welfare subject to $\Phi_{2\not\rightsquigarrow 1}$ that satisfies $\Phi_{1\not\rightsquigarrow 2}$ as well (and is thus EF1).

Afterward, as described above, constraint $\Phi_{2\not\rightsquigarrow 1}$ can be converted to the capacity constraint of the knapsack problem if the item $g$ to be removed from agent 1's bundle is known.
Then we show that the allocation after invoking FPTAS of the knapsack problem already achieves $(1-\epsilon)\cdot \OPT(\Phi_{2\not\rightsquigarrow 1})$ by the correctness of FPTAS of the knapsack problem.

Finally, \Cref{lem:localsearch} demonstrates that the local search subroutine transforms the allocation into an EF1 allocation without decreasing the social welfare, which means the final allocation $(1-\epsilon)$ approximates $\OPT$ (\Cref{lem:Oneepsilon}).
With the guarantee of approximation proved in \Cref{lem:Oneepsilon}, \Cref{thm:FPTAS} holds immediately as we can verify that the algorithm's running time is polynomial in $m$ and $1/\epsilon$ by the property of FPTAS for Knapsack.
The non-trivial part of the time complexity analysis is the running time for the local search, which is analyzed in~\Cref{lem:localsearch}.

\begin{restatable}{proposition}{optcTwo}
\label{lem:optcTwo}
$\OPT=\OPT(\Phi_{2\not\rightsquigarrow 1})$ when agent 2 EF1-envies agent 1 in the initial allocation $(O_1, O_2)$.
\end{restatable}
\begin{proof}
For the sake of contradiction, suppose every allocation satisfying $\Phi_{2\not\rightsquigarrow 1}$ with social welfare $\OPT(\Phi_{2\not\rightsquigarrow 1})$ violates $\Phi_{1\not\rightsquigarrow 2}$, which means agent 1 always EF1-envies agent 2 in those allocations.
Let $(A_1,A_2)$ be such an allocation among those that agent $1$ envies agent $2$ the least, i.e., with $v_1(A_2)-v_1(A_1)$ being minimized.
Then $A_2$ must contain at least one item $g$ of $O_1$.
Otherwise, $O_1\subseteq A_1$, which means agent 1 cannot envy agent 2 by the property of the initial allocation $(O_1, O_2)$.

Consider the partial allocation $(A_1,A_2\setminus \{g\})$ after removing that item from agent 2's bundle.
Since agent 1 envies agent 2 in $(A_1, A_2)$, agent 1 will still envy agent 2 in the partial allocation. 
We then derive contradictions by respectively discussing the cases whether agent 2 envies agent 1 in the partial allocation $(A_1,A_2\setminus \{g\})$.

\begin{itemize}[leftmargin=0.5cm]
    \item \textbf{Case 1:} agent 2 envies agent 1. 
    If we exchange the two bundles, the social welfare will increase, and neither of them will envy the other. 
    We further give item $g$ to agent 1 and consider the allocation $(A_2,A_1)$. 
    It can be observed that $(A_2,A_1)$ has higher social welfare than $(A_1, A_2)$ (exchanging $A_1$ and $A_2\setminus\{g\}$ increases the social welfare, and the reallocation of item $g$ from agent 2 to agent 1 weakly increases the social welfare by our definition of the set $O_1$ where $g$ belongs to) and constraint $\Phi_{2\not\rightsquigarrow 1}$ is still satisfied (agent $2$ does not envy agent $1$ after the exchange, so she does not EF1-envy agent $1$ if $g$ is additionally given to $1$). 
    This violates the assumption that $(A_1, A_2)$ is an optimal solution.
    \item \textbf{Case 2:} agent 2 does not envy agent 1. 
    Then the new allocation $(A_1\cup\{g\},A_2\setminus\{g\})$ has a weakly higher social welfare than $(A_1, A_2)$ (since $g\in O_1$) while still meets $\Phi_{2\not\rightsquigarrow 1}$.
    It violates our assumption that $(A_1,A_2)$ minimizes the amount of envy, as reallocating item $g$ from agent $2$ to agent $1$ reduces the amount of envy while still keeping social welfare optimal subject to $\Phi_{2\not\rightsquigarrow 1}$.
    \qedhere
\end{itemize}
\end{proof}

\begin{restatable}{proposition}{localsearch}
\label{lem:localsearch}
\FLocalSearchEF outputs an EF1 allocation $(A_1^o,A_2^o)$ that has weakly higher social welfare than that of the input $(A_1^i,A_2^i)$, and it terminates after at most $m$ while-loop iterations.
\end{restatable}
\vspace{-3mm}
\begin{proof}
The first paragraph in the proof of \Cref{lem:optcTwo} shows the existence of item $g$ at Line~\ref{line:find_g} of \FLocalSearchEF.
The analysis of the two cases in the proof of \Cref{lem:optcTwo} shows that the social welfare weakly increases after each while-loop iteration.
It remains to show that the while-loop terminates after at most $m$ iterations.

In Case 1, where agent $2$ envies agent $1$ after removing $g$, the algorithm terminates immediately after exchanging two agents' bundles. 
Although the algorithm may not terminate immediately in Case 2, we observe that the size of $A_1$ is increased by $1$ in each iteration corresponding to Case 2.
If Case 1 happens after many iterations corresponding to Case 2, we know the algorithm will terminate after one more iteration.
The increasing size of $A_1$ ensures that the algorithm terminates after at most $m$ iterations in this scenario.
If Case 1 never happens, agent 1 will not envy agent 2 at some stage when more and more items in $O_1$ are added to $A_1$ since agent 1 does not envy agent 2 in the initial allocation $(O_1, O_2)$.
Hence, the algorithm will terminate with an EF1 allocation.
Again, the increasing size of $A_1$ ensures that the algorithm terminates after at most $m$ iterations.
\end{proof}

\begin{restatable}{proposition}{Oneepsilon} 
\label{lem:Oneepsilon}
$\ALG \ge (1-\epsilon)\OPT$.        
\end{restatable}
\begin{proof}
Let $(S_1,S_2)$ be an allocation corresponding to both $\OPT$ and $\OPT(\Phi_{2\not\rightsquigarrow 1})$ (see \Cref{lem:optcTwo}).
First, it is easy to see that $O_2\subseteq S_2$, and so $S_1\subseteq O_1$.
Otherwise, if an item in $O_2$ is allocated to agent~$1$, reallocating it to agent $2$ strictly increases the social welfare while $\Phi_{2\not\rightsquigarrow 1}$ is still satisfied.

Since $(S_1,S_2)$ is EF1, there exists $g\in S_1$ such that $v_2(S_2)\geq v_2(S_1\setminus\{g\})$.
For each $o\in O_1$, let $v(o)=v_{1}(o)-v_{2}(o)$ as it is in Line~5 of Algorithm~\ref{alg:fptas}.
We can write $\OPT$ and $\OPT(\Phi_{2\not\rightsquigarrow 1})$ as
\begin{equation*}
    \OPT=\OPT(\Phi_{2\not\rightsquigarrow 1})=\sum_{o\in S_1}v_{1o}+\sum_{o\in S_2}v_{2o}=\sum_{o\in S_1}v(o)+\sum_{o=1}^mv_{2o}=v_2([m])+v(g)+\sum_{o\in S_1\setminus\{g\}}v(o).
\end{equation*}

Consider the for-loop iteration at Line~3 where item $g$ is in consideration.
Since $v_2(S_2)\geq v_2(S_1\setminus\{g\})$, we have $\sum_{o\in S_1\setminus\{g\}}v_{2o}\leq\frac12(v_2([m])-v_2(g))$, so $S_1\setminus\{g\}\subseteq O_1$ is a valid solution to the Knapsack problem at Line~6.
By the nature of FPTAS, $A_1'$ output at Line~6 must satisfy $\sum_{o\in A_1'}v(o)\geq (1-\epsilon)\sum_{o\in S_1\setminus\{g\}}v(o)$.
The social welfare for the allocation $(A_1' \cup \{g\},A_2')$ satisfies
\begin{align*}
\SW(A_1'\cup\{g\},A_2') & =\sum_{o\in A_1'\cup\{g\}}v_{1o}+\sum_{o\in A_2'}v_{2o}=\sum_{o\in A_1'\cup\{g\}}v(o)+\sum_{o=1}^mv_{2o} \\
& =v_2([m])+v(g)+\sum_{o\in A_1'}v(o) \\
& \geq v_2([m])+v(g)+ (1-\epsilon)\sum_{o\in S_1\setminus\{g\}}v(o)\\
& >(1-\epsilon)\OPT.
\end{align*}

Finally, \Cref{lem:localsearch} and our choice of the allocation with the largest social welfare (Line~\ref{line:choose_largest}) ensure that the final allocation has social welfare that is at least $\SW(A_1'\cup\{g\}, A_2')$.
\end{proof}

\subsection{A Polynomial-Time Approximation Scheme for \mswx}
\label{sect:PTAS}
Our polynomial-time approximation scheme for \mswx is built upon the algorithm in Section~\ref{sect:FPTAS}.
However, additional techniques are needed because the local search subroutine in Algorithm~\ref{alg:fptas} may not preserve the EFX property.
For an allocation $(A_1,A_2)$, it is possible that i) agent 1 EFX-envies agent 2 and ii) moving \emph{any} item from $A_2$ to $A_1$ makes agent~$2$ EFX-envy agent~$1$.
Notice that the operation at Line~\ref{line:add_g_to_agent1} of Algorithm~\ref{alg:fptas}, while preserving the EF1 property, may not preserve the EFX property.
This is where \Cref{prop:monotonicity} comes into play.

\begin{proposition}\label{prop:monotonicity}
    Consider two agents with utility functions $v_1$ and $v_2$.
    Let $A,B\subseteq M$ and $g\in M$ satisfy $A\cap B=\emptyset$ and $g\notin (A\cup B)$.
    If agent $1$ envies agent $2$ in the allocation $(A,B\cup\{g\})$ and agent $2$ envies agent $1$ in the allocation $(A\cup\{g\},B)$, then there exists an allocation $(A',B')$ with $A'\cup B'=A\cup B\cup\{g\}$ such that (i) $(A',B')$ is EFX, and (ii) $v_1(A')\geq v_1(A)$ and $v_2(B')\geq v_2(B)$.
    In addition, given $A, B$, and $g$, the allocation $(A',B')$ can be computed in polynomial time.
\end{proposition}
\begin{proof}
We first show that a bi-partition $(X,Y)$ of $A\cup B\cup\{g\}$ satisfying the following two properties exists, and it can be computed in polynomial time.
\begin{enumerate}
    \item[(1)] $(X,Y)$ is an EFX allocation if both agents' utility functions are identically $v_2$.
    \item[(2)] $\min\{v_2(X),v_2(Y)\}\geq v_2(B)$.
\end{enumerate}
If such a bi-partition $(X,Y)$ exists, we can obtain an allocation $(A',B')$ that satisfies the two conditions in the proposition. 
The allocation is defined as follows: let agent $1$ pick one of $X$ or $Y$ with a higher value, and let agent $2$ pick the other bundle.
It is clear that the new allocation is EFX (this is similar to the analysis of the \emph{I-cut-you-choose} algorithm by viewing agent $2$ as the cutter: agent $1$ does not envy agent $2$ since she picks first, and agent $2$ does not EFX-envy agent $1$ due to (1) above).
Besides, both agents receive weakly higher values than they would have received in the allocation $(A,B)$.
This is obvious for agent $2$ due to (2) above.
For agent $1$, since agent $1$ envies agent $2$ in the allocation $(A,B\cup\{g\})$, $v_1(A)<\frac12v_1(A\cup B\cup\{g\})$.
Now, by receiving a bundle in $X,Y$ with a higher value, agent $1$ receives at least $\frac12v_1(A\cup B\cup\{g\})$.

It then remains to show such a bi-partition $(X,Y)$ exists and can be computed in polynomial time.
We describe a simple algorithm to compute such a bi-partition.
Starting with $X=A\cup\{g\}$ and $Y=B$.
Perform the following until $(X,Y)$ satisfies (1) above: if $v_2(X)>v_2(Y)$, pick an item with minimum value in $X$ and move it to $Y$; if $v_2(X)<v_2(Y)$, pick an item with minimum value in $Y$ and move it to $X$.

We first show that (2) holds when the algorithm terminates. 
At the beginning of the algorithm, we have $\min\{v_2(X),v_2(Y)\}=v_2(B)$ since agent $2$ envies agent $1$ in the allocation $(X=A\cup\{g\},Y=B)$.
We define the potential function $\phi(X,Y) = \abs{v_2(X) - v_2(Y)}$, which is the same as $\phi(X,Y) 
 = v_2(X\cup Y)-2\min\{v_2(X), v_2(Y)\}$, where $v_2(X\cup Y)$ is a constant.
 Thus, it suffices to show $\phi(X,Y)$ is non-decreasing throughout the algorithm.
After moving an item $g$ from one bundle to the other, if the direction of the inequality between $v_2(X)$ and $v_2(Y)$ is unchanged, $\phi(X,Y)$ clearly does not increase.
If the direction of the inequality changes, we have $\phi(X,Y)>v_2(g)$ before moving $g$ (for otherwise (1) is already satisfied and the algorithm should have stopped before moving $g$) and $\phi(X, Y) <v_2(g)$ after the move, in which case $\phi(X,Y)$ decreases.

Second, the algorithm terminates in $O(m^2)$ iterations.
For each time the direction of the inequality between $v_2(X)$ and $v_2(Y)$ changed, the last item $g$ moved must satisfy $\phi(X,Y)<v_2(g)$.
Moreover, $g$ will no longer be moved in any later iterations.
Otherwise, if $g$ is moved in a future iteration, it must be that $\phi(X,Y)>v_2(g)$ before this move, which contradicts $\phi(X,Y)<v_2(g)$.
Therefore, each change in the inequality direction identifies an item that will never be moved in the future. Thus, the direction of the inequality can be changed at most $m$ times.
The total number of moves is at most $O(m^2)$ as at most $m$ items can be moved between two changes of the inequality direction.
\end{proof}

The next issue is that, although \Cref{prop:monotonicity} guarantees that we can find an EFX allocation $(A_1',A_2')$ with a weakly higher social welfare than the allocation $(A_1,A_2\setminus\{g\})$, the social welfare may be reduced by up to $v_2(g)$ if we update the allocation from $(A_1,A_2)$ to $(A_1',A_2')$.
This is again different from the case with \msw where the social welfare is non-decreasing throughout the local search subroutine.
To ensure that we do not lose too much in the social welfare by applying \Cref{prop:monotonicity}, we need to make sure $v_2(g)$ is small compared to the optimal social welfare.
To accomplish this, we first identify all the ``large items'' for each of which at least one of the agents has a value higher than $\epsilon\cdot\SW(\A^\ast)$. 
There can only be a constant number of large items (i.e., no more than $1/\epsilon$) if the PTAS parameter $\epsilon$ is a constant.
We can just enumerate all possible allocations of these large items before applying Algorithm~\ref{alg:fptas}.
This initial enumeration step makes our algorithm a PTAS instead of an FPTAS.

\begin{algorithm}[htbp]
\caption{PTAS for \mswx with {\color{blue} blue codes} highlighting the differences from the FPTAS.}
\label{alg:ptas}
\KwInput{two valuation functions $v_1(\cdot)$ and $v_2(\cdot)$, and the parameter $\epsilon>0$} 
\KwOutput{an EFX allocation}
Let $\Pi\leftarrow \emptyset$ be the set of all the candidate allocations\;
Let $\Gamma = \frac12\cdot \max\{v_1([m]),v_2([m])\}$\tcp*{a lower bound for the optimal social welfare}
{\color{blue} Let $L=\{g\in [m]\mid\exists i\in\{1,2\}:v_i(g)\geq \frac\epsilon2\cdot \Gamma\}$\tcp*{large items}
Let $S=[m]\setminus L$\tcp*{small items}}
\For{each allocation $(L_1,L_2)$ of $L$}{
    Let $O_1 \leftarrow \{g\in S\mid  v_{1}(g) \geq v_{2}(g) \}$ and $O_2 \leftarrow \{g \in S\mid v_{1}(g) < v_{2}(g) \}$\;
    \If{$(L_2\cup O_2,L_1\cup O_1)$ is envy-free}{
        $\Pi\leftarrow\Pi\cup\{ (L_2\cup O_2,L_1\cup O_1)\}$\;
        \textbf{break}\;
    }
    \If{$(L_1\cup O_1,L_2\cup O_2)$ is EFX}{
        $\Pi\leftarrow\Pi\cup\{ (L_1\cup O_1,L_2\cup O_2)\}$\;
        \textbf{break}\;
    }
    Suppose w.l.o.g. agent $2$ EFX-envies agent $1$ in the allocation $(L_1\cup O_1,L_2\cup O_2)$ \label{line:assume_agent2_envy_agent1}\;
    {\color{blue} $\ell\leftarrow \arg\min_{\ell'\in L_1}v_2(\ell')$\;
    $G\leftarrow\{g\in L_1\cup O_1\mid v_2(g)\leq v_2(\ell)\}$\;}
    \For{each item $g\in G$}{
        {\color{blue} $H\leftarrow \{h\in O_1\mid v_2(h)<v_2(g)\}$\;}
        \For{each item $o \in O_1\setminus (\{g\}\cup H)$}{
            $v(o) \leftarrow v_{1o} - v_{2o}$ and $w(o) \leftarrow v_{2o}$\tcp*{values and weights of items in Knapsack}
        }
        $C\leftarrow \frac12v_2([m]\setminus\{g\})-v_2(L_1\setminus\{g\})$\tcp*{capacity for Knapsack}
        \If{$C<0$}{
            \textbf{continue}\;
        }
        Run the classical FPTAS with parameter $\frac\epsilon2$ for Knapsack with item set {\color{blue} $O_1\setminus (\{g\}\cup H)$}, value function $v$, weight function $w$ and capacity constraint $C$, and let $S_1$ be the output\;
        $A_1'\leftarrow L_1\cup S_1\cup\{g\}$ and
        $A_2'\leftarrow [m]\setminus A_1'$\;
        $(A_1,A_2)\leftarrow$ {\sc LocalSearchEFX}$(A_1',A_2',O_1,O_2)$
        $\Pi\leftarrow\Pi\cup\{(A_1,A_2)\}$\;
    }
}
\Return{the allocation with the largest social welfare in $\Pi$}\;

\SetKwProg{Fn}{Function}{:}{}
\Fn{\FLocalSearchEFX{$A_1$, $A_2$, $O_1$, $O_2$}}{
\While{agent $1$ envies agent $2$ in the allocation $(A_1,A_2)$}{
Find an arbitrary item $g \in A_2\cap O_1$ and $A_2\leftarrow A_2\setminus\{g\}$\;
\eIf{agent $2$ envies agent $1$ in the allocation $(A_1\cup\{g\},A_2)$}{
    Apply the algorithm in \Cref{prop:monotonicity} to obtain allocation $(A_1',A_2')$ for inputs $A_1,A_2,g$\;
    \Return{$(A_1',A_2')$}
}
{ $(A_1, A_2)\leftarrow (A_1\cup\{g\}, A_2)$\;}
}
\Return{$(A_1, A_2)$} 
}
\end{algorithm}

The algorithm is described in Algorithm~\ref{alg:ptas}.
Firstly, $\Gamma=\frac12\cdot \max\{v_1([m]),v_2([m])\}$ at Line~2 is a lower bound to the optimal social welfare (\Cref{prop:triviallowerboundofOPT}).
Line~3 and Line~4 define the large and the small items based on the PTAS parameter $\epsilon$.
Then, we enumerate all possible allocations $(L_1,L_2)$ of the large items as a start-up (Line~5).
The set of the small items $S$ is then partitioned to $(O_1,O_2)$ in a similar way as we did for the EF1 case.
Lines~7-12 handle two trivial cases: the two agents envy each other in the allocation $(L_1\cup O_1,L_2\cup O_2)$ and the allocation $(L_1\cup O_1,L_2\cup O_2)$ is already EFX.
After these, the only possible case is agent~$i$ EFX-envies agent~$j$ and agent~$j$ does not envy agent~$i$, for $(i,j)$ being $(1,2)$ or $(2,1)$.
We assume agent~$2$ EFX-envies agent~$1$ and agent~$1$ does not envy agent~$2$ without loss of generality (Line~\ref{line:assume_agent2_envy_agent1}).

At the next step, we need to enumerate the item $g$ whose removal ensures that agent~$2$ does not envy agent~$1$.
This is trickier than the EF1 case in two aspects. 
Firstly, after $g$ is chosen, all the items with values less than $v_2(g)$ (based on agent~$2$'s valuation) must then be allocated to agent~$2$ (see \Cref{remark:EFX1}). 
Secondly, we need to consider the possibility that $g\in L_1$.
Moreover, there should not be any item $h\in L_1$ with $v_2(h)<v_2(g)$ since $L_1$ is fixed in agent~$1$'s bundle at this moment (again, see \Cref{remark:EFX1}).
Lines~14-17 handle this step.

After the enumeration of the item $g$, we solve the Knapsack problem as we did in the EF1 case (Lines~18-24).
Here, it is possible that the capacity constraint for the Knapsack problem is negative, in which case we just abort the mission.
For example, it is possible that agent~$1$ already receives too much for $L_1$, and the for-loop at Line~16 will do nothing in this case.

Finally, if agent $1$ envies agent $2$ in the allocation obtained from the Knapsack solution, we perform a local search algorithm.
The algorithm iteratively moves an item from agent~$2$'s bundle to agent~$1$'s, and we only move those small items where agent $1$ has higher values (i.e., items in $O_1$).
This keeps going until agent $1$ does not envy agent $2$, and the algorithm will terminate at some point since we have assumed agent $1$ does not envy agent $2$ in the allocation $(L_1\cup O_1, L_2\cup O_2)$.
If, at some middle stage, agent $2$ begins to envy agent $1$, the pre-condition of \Cref{prop:monotonicity} is met, and we can apply \Cref{prop:monotonicity} to finalize the allocation (notice that, at this point, we no longer fix $L_1$ and $L_2$ in the two agents' bundles).
The social welfare is non-decreasing except for the application of \Cref{prop:monotonicity}.
However, the application of \Cref{prop:monotonicity} only reduces the social welfare by at most $v_2(g)$ for some \emph{small} item $g$ in $O_1$, which is acceptable.
The correctness of Algorithm~\ref{alg:ptas} is shown in the following lemma.

\begin{restatable}{lemma}{thmPTASEFX}
\label{thm:ptas}
Algorithm~\ref{alg:ptas} is a PTAS for \mswx.    
\end{restatable}
\begin{proof}
Again, we let $\OPT$ for the value of the optimal solution to \mswx.
Let $\ALG$ be the social welfare of the allocation output by Algorithm~\ref{alg:ptas}.
We first show the following two propositions.

\begin{proposition}\label{prop:triviallowerboundofOPT}
    $\Gamma = \max\{v_1([m]),v_2([m])\}/2\leq\OPT$.
\end{proposition}
\begin{proof}
    Assume $v_1([m])\leq v_2([m])$ without loss of generality.
    Let $(X,Y)$ be an EFX allocation in the valuation profile where both agents' utility functions are $v_1$.
    Consider the EFX allocation $\A$ where agent $2$ gets one of the bundles $X$ and $Y$ with a higher value and agent $1$ gets the other bundle.
    Then $\SW(\A)\geq v_2([m])/2= \max\{v_1([m]),v_2([m])\}/2$.
\end{proof}

\begin{proposition}\label{prop:largeitems}
    For $L$ defined at Line~3 of Algorithm~\ref{alg:ptas}, we have $\abs{L}\leq 8/\epsilon$.
\end{proposition}
\begin{proof}
    Let $L^{(1)}=\{g\in[m]\mid v_1(g)\geq\frac\epsilon2\cdot\Gamma\}$ and $L^{(2)}=\{g\in[m]\mid v_2(g)\geq\frac\epsilon2\cdot\Gamma\}$.
    Then $L=L^{(1)}\cup L^{(2)}$.
    Suppose for the sake of contradiction that $|L|>\frac8\epsilon$.
    There must exist $i\in\{1,2\}$ with $|L^{(i)}|>\frac4\epsilon$.
    Then $v_i([m])\geq v_i(L^{(i)})>\frac4\epsilon\cdot\frac\epsilon2\cdot\Gamma=2\Gamma\geq v_i([m])$, which is a contradiction.
\end{proof}

Finally, we prove the approximation guarantee for Algorithm~\ref{alg:ptas}:    $\ALG\geq (1-\epsilon)\OPT$.
Let $(S_1,S_2)$ be the allocation corresponding to $\OPT$.
Let $L_1$ and $L_2$ be the sets of the large items in $S_1$ and $S_2$, respectively.
Consider the for-loop iteration at Line~5 where $(L_1,L_2)$ is in consideration.
The maximum social welfare is attained at the allocation $(L_1\cup O_1,L_2\cup O_2)$.
Therefore, if the for-loop is broken at Line~9 or Line~12, we have $\ALG=\OPT$.
We assume that agent $2$ EFX-envies agent $1$ in the allocation $(L_1\cup O_1,L_2\cup O_2)$ from now on.

Let $\OPT(C_2)$ be the maximum social welfare of the allocation $(S_1',S_2')$ where i) agent $2$ does not EFX-envy agent $1$ and ii) $L_1\subseteq S_1'$, $L_2\subseteq S_2'$.
We have $\OPT\leq\OPT(C_2)$ since we do not require that agent~$1$ does not EFX-envy agent~$2$ in regarding $\OPT(C_2)$.
It then suffices to show that $\ALG\geq(1-\epsilon)\OPT(C_2)$.

Since agent $2$ does not EFX-envy agent $1$ in $(S_1',S_2')$, we have $v_2(S_2')\geq v_2(S_1'\setminus\{g\})$ for $g\in S_1'$ with minimum $v_2(g)$ (\Cref{remark:EFX1}).
Consider the for-loop iteration at line~16 where $g$ is in consideration.
The social welfare of the allocation obtained by the Knapsack solution is at least $(1-\frac\epsilon2)\OPT(C_2)$.
As we have mentioned before, after the local search step, the social welfare can be decreased by at most $v_2(g)\leq\frac\epsilon2\cdot\Gamma$ for some $g\in O_1$, which is at most $\frac\epsilon2\cdot\OPT\leq\frac\epsilon2\cdot\OPT(C_2)$ by \Cref{prop:triviallowerboundofOPT}.
Therefore, $\ALG\geq (1-\epsilon)\OPT(C_2)$.
\end{proof}

\subsection{NP-Hardness for Two Agents with Normalized Valuations}
\label{sect:hardness2}

We complement our positive results with the following $\classNP$-hardness results. \cite{aziz2020computing} showed the $\classNP$-hardness of \msw with two agents. However, their result applies only to the case with unnormalized valuations. They also posed the corresponding problem with normalized valuations as an open question.
We resolve it in \Cref{thm:NPhardTwo}. 
The $\classNP$-hardness also holds for \mswx with two agents and normalized valuations, shown by~\cite{aziz2020computing}.

\begin{restatable}{lemma}{NPhardTwo}
\label{thm:NPhardTwo}
\msw is $\classNP$-hard for $n=2$ even under normalized valuations.
\end{restatable}

\begin{proof}
    We show a reduction from the partition problem. Fix a partition instance $S= \{e_1, \ldots, e_\ell \}$, and let $\sum_{i=1}^\ell e_i = 2x\in \mathbb{R}^+$. Without loss of generality, we assume $x=1$, and construct an instance as shown in the table below, where $K$ is a constant and $K>1.25\sum_{i=1}^\ell e_i = 2.5$. 
    Note that though $\sum_{o\in [m]}v_1(o)$ and $\sum_{o\in [m]}v_2(o)$ are not normalized, they are both equal to $2K+2$ and can be rescaled to $1$. Therefore, the normalization assumption is not violated.
    
    \begin{center}
    \begin{tabular}{c|cccc}
        \hline
        item &  $k$ ($1\le k\le \ell$) &  $\ell+1$ &  $\ell+2$ &  $\ell+3$  \\
        \hline
        $v_1$ & $e_k$ & $K$ & $K$ & $0$\\ 
        $v_2$ & $e_k/2$ & $(2K+1)/3$ & $(2K+1)/3$ & $(2K+1)/3$\\ 
        \hline
    \end{tabular}
    \end{center}
    
    If the partition instance is a \YES instance, suppose $A_1\subseteq[\ell]$ and $A_2\subseteq[\ell]$ correspond to the partition $(S_1,S_2)$ of $S$ with equal sum. 
    It is not hard to verify that the allocation $\mathcal{A}= ( A_1 \cup \{\ell+1,\ell+2\}, A_2\cup \{\ell+3\})$ satisfies EF1, and $\mathcal{SW}(\mathcal{A})=(16K+ 11)/6$. 
    
    If the partition instance is a \NO instance, we show that the maximum social welfare of an EF1 allocation is less than $(16K+ 11)/6$. 
    If there exists an allocation $\mathcal{A}'$ with the social welfare of at least $(16K + 11)/6$, then it is easy to see that $\ell+1$ and $\ell+2$ must be assigned to agent $1$ and $\ell+3$ must be assigned to agent $2$. 
    Since $\mathcal{SW}(\mathcal{A}')\ge (16K+ 11)/6$, agent $1$ should take a bundle $O_1\subseteq [\ell]$ with at least half of the value of the first $\ell$ items.
    Due to the EF1 constraint, agent $2$ should also take a bundle $O_2\subseteq [\ell]$ with at least half of the value of the first $\ell$ items.
    This would imply the partition instance is a yes-instance, which contradicts our assumption.    
    Thus, the partition problem is reduced to whether $(16K+11)/6$ is achievable for \msw, which leads to the $\classNP$-hardness of \msw.
\end{proof}

\section{$\text{\mswx}$ for More Than Two Agents}
\label{sec:mswxgeneraln}
In this section, we present our results for \mswx with a general number of agents.
We first go through some terms and notations.
We will use Algorithm~\ref{alg:replace} as a subroutine multiple times.
Given a partial allocation where agent $i$ gets $A_i$ and a set of items $B$ with $A_i\cap B=\emptyset$, if agent $i$ envies $B$, Algorithm~\ref{alg:replace} computes $X_i\subseteq B$ and $k_i\in\mathbb{Z}^+$ such that $k_i$ is the minimum integer with $v_i(X_i)>v_i(A_i)$, where $X_i\subseteq B$ is the set of the $k_i$ items in $B$ with the largest values with respect to agent $i$'s valuation $v_i$.
If $v_i(A_i)\geq v_i(B)$, we set $k_i=\infty$.

\begin{algorithm}[ht]
\caption{The replacement subroutine}\label{alg:replace}
\SetKwProg{Fn}{Function}{:}{}
\Fn{\FReplace{$v_i$, $A_i$, $B$}}{
Sort items in $B$ in descending order based on $v_i$\;
Let $B[k]$ be the set of the first $k$ items in $B$\;
$X_i\leftarrow B[k_i]$ where $k_i$ is the minimum integer $k$ such that $v_i(B[k])>v_i(A_i)$\;
\Return{$X_i$} 
}
\end{algorithm}

We will use the idea of \emph{the most envious agent} by~\cite{chaudhury2021little}.
Given a partial allocation $(A_1,\ldots,A_n)$ and a set of items $B$ such that $B\cap A_i=\emptyset$ for each $i=1,\ldots,n$, \emph{the most envious agent to set $B$} is an agent with minimum $k_i=|X_i|$.
The proof of the following proposition is mostly the same as the proof of Lemma~5 in~\cite{chaudhury2021little}, and we include it here for completeness.

\begin{proposition}\label{prop:mostenviousreplace}
    Consider a partial EFX allocation $(A_1,\ldots,A_n)$ and a set of items $B$ such that $B\cap A_i=\emptyset$ for each $i=1,\ldots,n$. Let $i$ be the most envious agent to $B$. Then $(A_1,\ldots,A_{i-1},X_i,A_{i+1},\ldots,A_n)$ is an EFX allocation, where $X_i$ is the output of {\sc Replace}$(v_i,A_i,B)$.
\end{proposition}
\begin{proof}
    Agent $i$ receives a strictly higher value by updating the bundle from $A_i$ to $X_i$, so she will not EFX-envy any other agent since the original allocation $(A_1,\ldots,A_n)$ is EFX.
    It remains to show that any agent $j\neq i$ will not EFX-envy the bundle $X_i$. 
    If $v_j(A_j)\geq v_j(B)$, $j$ will not envy $X_i$ as $X_i\subseteq B$. 
    Otherwise, let $k_j=|X_j|$ where $X_j$ is the output of {\sc Replace}$(v_j,A_j,B)$.
    Agent $j$ will not envy the subset of $B$ consisting of the $(k_j-1)$ items with the highest values to her and thus will not envy any subset of $B$ with at most $k_j-1$ items.
    We have $k_j\geq k_i$ by the definition of the most envious agent.
    Therefore, agent $j$ will not envy any subset of $B$ with at most $k_i-1$ items.
    Thus, agent $j$ will not envy $X_i$ after removing any item from $X_i$.
\end{proof}

\subsection{Approximation Algorithms for \mswx}
\label{sect:On}

\begin{restatable}{theorem}{thmmswxposoverall}
\label{thm:mswx_pos_overall}
For unnormalized valuations, Algorithm~\ref{alg:On} outputs an EFX allocation $\A$ with $\SW(\A)\geq\sum_{i=1}^nv_i([m])/(2n+1)$ in pseudo-polynomial time.
For normalized valuations, Algorithm~\ref{alg:Osqrtn} is a pseudo-polynomial time algorithm that outputs an EFX allocation $\A$ with $\SW(\A)\geq \MSW/ O(\sqrt{n})$.
\end{restatable}
\begin{proof}
We begin with the simpler case where valuations are unnormalized. 
Since $v_i(S) \leq v_i([m])$ for any $S \subseteq [m]$, a trivial upper bound for \(\MSW\) is \(\sum_{i=1}^n v_i([m])\) (i.e., $\MSW = \sum_{g \in [m]} \max v_i(g)$).  
We then demonstrate that an EFX (possibly partial) allocation achieving at least $\sum_{i=1}^n v_i([m]) / (2n + 1)$ can be computed in pseudo-polynomial time.  
Despite the estimate of the upper bound is rough, we will show that it is the best approximation ratio that any pseudo-polynomial time algorithm can achieve in the next subsection.

The algorithm for unnormalized valuations (Algorithm~\ref{alg:On}) works as follows: we create an initial allocation $\mathcal{A}$ with $\mathcal{A}$ maximizes the social welfare such that $\abs{A_i} = 1$ for any $i$, which can be achieved through finding a weighted perfect matching.
The initial allocation satisfies EFX and we set $B$ as the set of unallocated items.
Next, when there exists an agent that envies $B$, we apply the \REPLACE subroutine for the most envious agent and update the current allocation.
It terminates when no one envies the set of unallocated items.
Since every invoke of \REPLACE strictly increases the social welfare, the update process will terminate in pseudo-polynomial time. 
The non-trivial part is the proof for the approximation guarantee $(2n+1)$.
The proof of this is similar to~\cite[Lemma 1]{barman2020optimal}.
The high-level intuitions are described as follows.
Suppose first $|A_i|\geq 2$ for all $i$ in the allocation output by Algorithm~\ref{alg:On}.
The EFX property then ensures $v_i(A_j)$ is at most twice as much as $v_i(A_i)$.
In addition, $v_i(A_i)\geq v_i(B)$ for otherwise the while-loop of the algorithm should be carried on.
Therefore, among the $n+1$ bundles $A_1,\ldots,A_n,B$, agent $i$ gets a bundle with value at least $1/2n$ fraction of $v_i([m])$, which implies Algorithm~\ref{alg:On} is a $2n$-approximation.
The case where $|A_i|\leq 1$ for some agent $i$ is trickier and the maximum-matching initialization at Line~2 is aimed to handle this.
The full proof is deferred to \Cref{appendix:On}.

\begin{algorithm}[t]
\caption{$O(\sqrt{n})$-approximation for \mswx with normalized valuations}\label{alg:Osqrtn}
\KwInput{$(v_1,\ldots,v_n)$} 
\KwOutput{an EFX allocation (that is allowed to be partial)}
Compute an allocation $(O_1,\ldots,O_n)$ with social welfare $\MSW$\;
Initialize $(A_1,\ldots,A_n)$ such that $A_i$ contains one item $g$ with $\max_{g\in O_i}v_i(g)$ if $O_i\neq\emptyset$ and $A_i=\emptyset$ if $O_i=\emptyset$\;
For each $i=1,\ldots,n$, set $B_i\leftarrow O_i\setminus A_i$\;
\While{there exists $j$ such that an agent envies $B_j$}{
    Let $i$ be the most envious agent to $B_j$\tcp*{$i$ may or may not be $j$}
    $X_i\leftarrow${\sc Replace}$(v_i,A_i,B_j)$\;
    Release all items in $A_i$ such that each $g\in A_i$ is added to $B_k$ if $g\in O_k$\;
    $A_i\leftarrow X_i$\;
}
\Return{$(A_1,\ldots,A_n)$}
\end{algorithm}

\paragraph{Normalized Valuations.}
Next, we move to the more complicated case where the valuations are normalized: $v_i([m]) =1$ for every agent $i$.
We show that a ratio of $O(\sqrt{n})$ is achievable.
Our algorithm is presented in Algorithm~\ref{alg:Osqrtn}.
If the maximum social welfare is upper-bounded by $O(\sqrt{n})$, say, $\MSW\leq 10\sqrt{n}$, then we can directly apply the above procedure for unnormalized valuations, which outputs an EFX allocation with the social welfare of at least $n/(2n+1) > 1/3$, which has already achieved an $O(\sqrt{n})$-approximation.
Therefore, we can assume $\MSW > 10\sqrt{n}$ in the following proof.

Algorithm~\ref{alg:Osqrtn} starts by computing a social welfare maximizing allocation $(O_1,\ldots,O_n)$ where $O_i$ consists of those items where agent $i$ values the most (break tie arbitrarily).
The allocation $(A_1,\ldots,A_n)$ is initialized such that agent $i$ takes one most valuable item in $O_i$, or $A_i=\emptyset$ if $O_i=\emptyset$.
Each $B_i$ denotes the pool of the unallocated items in $O_i$.
The algorithm then enters a while-loop.
Whenever there is a pool of the unallocated items $B_j$ that some agent envies, we find the most envious agent $i$ to $B_j$ and replace $A_i$ by some $X_j$ in $B_j$.

Notice that, throughout the algorithm, we have $B_i\subseteq O_i$.
In addition, each $A_i$ is a subset to some bundle $O_j$, i.e., an agent cannot get items from more than one bundle of $O_1,\ldots,O_n$, although $A_i$ may or may not be contained in $O_i$.

Note that \Cref{prop:mostenviousreplace} ensures that the output allocation is EFX.  
Meanwhile, since each iteration of the while-loop strictly improves the social welfare by applying \REPLACE, the algorithm runs in pseudo-polynomial time.  
The approximation is guaranteed by the following proposition.  

\begin{proposition}
\label{prop:Osqrtn}
Suppose $\MSW>10\sqrt{n}$. 
The allocation $\A$ output by Algorithm~\ref{alg:Osqrtn} satisfies with $\SW(\A)\geq\MSW/\alpha$ with $\alpha = {20\sqrt{n}+10}/9 = O(\sqrt{n})$.
\end{proposition}
\begin{proof}
Recall that each $A_j$ is a subset of some $O_i$.
Let $n_i$ be the number of the agents $j$ with $A_j\subseteq O_i$.
Let $\mathcal{L}=\{i\in [n]\mid n_i>\sqrt{n}\}$ and $\mathcal{S}=\{i\in[n]\mid n_i\leq\sqrt{n}\}$.
We have $|\mathcal{L}|<\sqrt{n}$ for otherwise $\sum_{i\in\mathcal{L}}n_i>n$ which contradicts $\sum_{i=1}^nn_i=n$.

Now, consider arbitrary $i$ and $j$ with $A_j\subseteq O_i$.
We show that $v_i(A_i)\geq \frac12 v_i(A_j)$.
If $|A_j|\geq 2$, the inequality holds trivially by the EFX property (the removed item $g$ satisfies $v_i(g)\leq v_i(A_j\setminus\{g\})$ by \Cref{remark:EFX1}).
If $|A_j|=1$, we have $v_i(A_i)\geq v_i(\{g^\ast\})\geq v_i(A_j)$ for $g^\ast$ being the item in $O_i$ with the largest value to agent $i$, where the first inequality is due to that $A_i=\{g^\ast\}$ at the beginning of the algorithm and $v_i(A_i)$ is non-decreasing throughout the algorithm.

Next, for each $i\in\mathcal{S}$, we have $v_i(A_i)\geq\frac1{2\sqrt{n}+1}v_i(O_i)$.
This is because $O_i$ is the disjoint union of at most $\sqrt{n}+1$ bundles $\{A_j\mid A_j\subseteq O_i\}\cup\{B_i\}$ (by definition of $\mathcal{S}$), $v_i(A_i)\geq \frac12 v_i(A_j)$ (just proved), and $v_i(A_i)\geq v_i(B_i)$ (otherwise the while-loop should be carried on).

Finally,
$\SW(\A)=\sum_{i=1}^nv_i(A_i)\geq\frac1{2\sqrt{n}+1}\sum_{i\in\mathcal{S}}v_i(O_i)=\frac1{2\sqrt{n}+1}\left(\MSW-\sum_{i\in\mathcal{L}}v_i(O_i)\right).$
On the other hand, we have
$\sum_{i\in\mathcal{L}}v_i(O_i)\leq\sum_{i\in\mathcal{L}}v_i([m])=\sum_{i\in\mathcal{L}}1=|\mathcal{L}|<\sqrt{n}<\frac1{10}\MSW.$
Putting together, we have $\SW(\A)>\frac9{20\sqrt{n}+10}\cdot\MSW$.
\end{proof}
It is worth noting that, both the two algorithms essentially approximate $\MSW$, which is the optimal social welfare that can be achieved by any allocation.
As a side result, this further implies that the price of EFX is respectively $O(n)$ and $O(\sqrt{n})$ for the two cases. 
\end{proof}

\subsection{Asymptotically Tight Inapproximability Results}
\label{sect:hardnessEFX}
Although the above target values $(\MSW)$ that we attempt to approximate seem a bit ambitious, the above ratios are indeed the best approximation that can be achieved through pseudo-time polynomial algorithms under the assumption that $\classP \neq \classNP$.
Under that assumption, stronger inapproximability results hold for polynomial-time algorithms, even in a more restricted case -- a constant number of agents.
Our results can be summarized in the following theorem.

\begin{restatable}{theorem}{thmmswxnegoverall}
\label{thm:mswx_neg_overall}
It implies $\classP=\classNP$ if one of the following exists for \mswx:
\begin{enumerate}[leftmargin=0.5cm]
    \item A pseudo-polynomial time $n^{1-\epsilon}$-approximation algorithm for any constant $\epsilon>0$;
    \item A pseudo-polynomial time $n^{0.5-\epsilon}$-approximation algorithm for any constant $\epsilon>0$ when agents' valuations are normalized;
    \item A polynomial time algorithm that achieves better than $(k+1)$-approximation for some fixed odd number $n=2k+1$ (with $k\geq1$) of agents;
    \item A polynomial time algorithm that achieves better than $(k/2)$-approximation for some fixed odd number $n=k(2k+1)$ (with $k\geq1$) of agents when agents' valuations are normalized.
\end{enumerate}
\end{restatable}

\begin{proof}[Proof of (1)]
When the valuations are unnormalized, we reduce from the original version of the independent set problem to show the $n^{1-\epsilon}$ inapproximability. 
We present a reduction from the independent set problem.
Given an independent set instance $(G=(V,E),k)$ with $k\geq 3$, we construct an \mswx instance as follows.
The set of agents consists of a super agent $s$ and $t$ groups of normal agents $\{a_{i0},a_{i1},\ldots,a_{i|E|}\}_{i=1,\ldots,t}$, where $a_{i1},a_{i2},\ldots,a_{i|E|}$ in each group $i$ correspond to the $|E|$ edges in $G$.
Notice that $n=1+t(|E|+1)$, we can let $n$ be sufficiently large (but also of polynomial size with respect to $G$) such that $n^{1-\epsilon}<(t+1)/2$. In other words, we should choose a $t>4|E|^{1/\epsilon}$ to make the number of agents in each group small compared with $n$.
The set of items consists of $t+1$ ``super items'' $g_0,g_1,\ldots,g_t$ and $t$ groups of ``normal items'' $\{v_{i1},\ldots,v_{i|V|},e_{i1},\ldots,e_{i|E|}\}_{i=1,\ldots,t}$ such that each group of $|V|+|E|$ items correspond to the $|V|$ vertices and $|E|$ edges in $G$.
In each group $i$, agent $a_{i0}$ has value $k$ on the super item $g_i$, value $1$ on each of $v_{i1},\ldots,v_{i|V|}$.
For $j=1,\ldots,|E|$, each agent $a_{ij}$ only has positive values on the normal items $v_{i1},\ldots,v_{i|V|},e_{i1},\ldots,e_{i|E|}$ in group $i$.
In particular, $a_{ij}$ has value $1$ on $e_{ij}$ and on the two vertex items $v_{iu_1},v_{iu_2}$ where $u_1$ and $u_2$ are the two endpoints of the $j$-th edge.
The super agent has a value of $w$ for each super item, where $w$ is polynomial in $n$ and larger than the optimal social welfare for the instance without the super agent with no fairness constraints (say, $w=n^{100}$), and she has a value of $0$ for remaining items.

If the independent set instance is a \YES instance, we describe an EFX allocation with social welfare at least $(t+1)w$.
The super agent $s$ gets all the super items $g_0,g_1,\ldots,g_t$.
In each group $i$, agent $a_{i0}$ gets a set of $x$ items from $\{v_{i1},\ldots,v_{i|V|}\}$ corresponding to an independent set of size $x$, and agent $a_{ij}$ (for $j=1,\ldots,|E|$) gets the item $e_{ij}$.
The remaining items are discarded.
It is straightforward to check that the allocation is EFX and is, in fact, envy-free.
The optimal social welfare under EF$X$ constraint is lower bounded by $w(t+1)$.

If the independent set instance is a \NO instance, we will show that the super agent $s$ can get at most one super item in any EFX allocation.
Suppose this is not the case.
A super item $g_i$ with $i=1,\ldots,t$ must be allocated to the super agent $s$, and $s$ is allocated at least one more item.
By EFX, agent $a_{i0}$ cannot envy agent $s$, and must receive a value of at least $k$ from $v_{i1},\ldots,v_{i|V|}$.
This means at least $x$ items from $v_{i1},\ldots,v_{i|V|}$.
Since the independent set instance is a \NO instance, agent $a_{i0}$ must receive two items $v_{iu_1},v_{iu_2}$ such that $(u_1,u_2)$ is an edge.
Let $a_{ij}$ and $e_{ij}$ be the agent and the item in the $i$-th group corresponding to this edge, respectively.
Then $a_{ij}$ can receive a value of at most $1$ by getting $e_{ij}$, and the value she has on agent $a_{i0}$'s bundle is $2$.
To maintain EFX, agent $a_{i0}$ must not receive more than the two items $v_{iu_1},v_{iu_2}$.
This contradicts our assumption $k\geq 3$.
Since we have proved that agent $s$ can get at most one super item, the social welfare, in this case, is upper bounded by $2w$ (as the optimal social welfare of other agents is no more than $w$).

Putting the completeness and the soundness parts together, the inapproximability factor is $(t+1)/2$, which is more than $n^{1-\epsilon}$.
In addition, all the values of the items are bounded by $n^{100}$.
\end{proof}

Note that the first two sub-results provide asymptotically tight inapproximability results, matching our approximation algorithms in the last section. 
The last two sub-results provide inapproximability results with ratios of orders $n$ and $\sqrt{n}$ for unnormalized and normalized valuations, respectively, and hold even for a constant number of agents.
However, these results are not aligned with the previous algorithms, which run in pseudo-polynomial time.
Instead, these results stand in contrast to our bi-criteria algorithm provided below: if the EFX requirement is slightly relaxed, we can compute the optimal social welfare in polynomial time; otherwise, we have strong inapproximability results.
We defer the proofs of the remaining three cases to \Cref{appendix:hardness-EFX-overall}.

\section{$\text{\msw}$ for More Than Two Agents}
\label{sec:mswgeneraln}
This section studies the \msw problem for more than two agents.
When the number of agents is a fixed constant integer larger than $2$, in contrast to our previous results for $n=2$, where $\OPT$ can be approximated within any ratio, we establish increasingly stricter inapproximability bounds as $n$ grows when exact fairness constraint should be guaranteed.
Despite this, by slightly relaxing the fairness ratio with a factor of $(1-\epsilon)$ (i.e., \( v_i(A_i) \ge (1-\epsilon) \cdot v_i(A_j \setminus \{g\}) \) for some \( g \in A_j \)), we show that the optimal solution can be approximated within any ratio, which closes the gap of the approximability in terms of fairness constraint. The full proofs are deferred to Appendix~\ref{append:msw}.

\begin{restatable}{theorem}{thmmswnegoverall}
\label{thm:msw_neg_overall}
It is $\classNP$-hard to approximate \msw within a factor of 
\begin{itemize}[leftmargin=0.5cm]
    \item $4n/(3n+1)$ for any fixed agent number $n>2$, even with normalized valuations;
    \item $n^{1/4}/4.5$ for any fixed $n$, even with normalized valuations;
    \item $\lfloor (1+\sqrt{4n-3})/2\rfloor$ for any fixed $n\geq 2$;
    \item $n^{\frac13 -\epsilon}$ or $m^{\frac12-\epsilon}$ for any constant $\epsilon>0$, even with normalized valuations.
\end{itemize}
\end{restatable}

\subsection{Inapproximability for Constant Number of Agents}
As a warm-up, we first show that the problem of \msw has a constant inapproximability factor for any fixed number of agents $n>2$, in contrast to the case of $n=2$, where FPTAS exists.
\begin{restatable}{lemma}{NPhardGeqTwo}
\label{thm:NPhardGeqTwo}
For any fixed $n>2$, \msw is $\classNP$-hard to approximate to factor ${4n}/{(3n+1)}$, even under normalized valuations.
\end{restatable}
\begin{proof}
We show a reduction from the partition problem. 
Given a partition instance $\{e_1, \dots, e_\ell\}$ with $\sum_{i=1}^{\ell}e_i=2x$, we construct an instance for \msw with $n$ agents and $\ell+n-1$ items as follows.
For agent $1$, (s)he has value ${(nx+x)}/{2}$ for the item $\ell+1$ and $\ell+2$, and value $0$ for the remaining items. 
For agent $2$ to agent $n$, they have value $e_{o}$ for item $o\in[\ell]$ and value $x$ for each of the remaining items.
The values will sum up to $nx+x$ for all agents and it can be normalized to $1$, which satisfies the definition of normalized valuations.

If the partition instance is a \YES instance, we assume the partition is $(S_1, S_2)$. Let agent $2$ receive items corresponding to $S_1$, agent $3$ receive items corresponding to $S_2$, and each of the agents $4,\ldots,n$ receive one of the items from item $\ell+3$ to item $n+\ell-1$, and all of them receive items with total value $x$ in this case. Let agent $1$ receive item $\ell+1$ and item $\ell+2$. The allocation is EF1 because agent $1$ will not envy any other agent, and when removing one of the items in agent $1$'s bundle, other agents will not envy agent 1 either.
The social welfare under this case is $2nx$.

If the partition instance is a \NO instance, agent $1$ can only receive one of item $t+1$ and item $t+2$. 
Otherwise, if agent $1$ receives both items, all other agents should receive items whose values sum up to at least $x$, which is impossible in a \NO instance.
In this case, the social welfare will be at most ${(3nx+x)}/{2}$.
Hence, the inapproximability factor is ${4n}/{(3n+1)}$ for any fixed $n>2$.       
\end{proof}
    
\paragraph{Larger constant number of agents.}
Next, we present the inapproximability result for larger constant number of agents.
We show the ratio becomes larger as the number of agents increases.
The reduction is still from the partition problem, but we use a different gadget to construct the instance.

Below, we first provide an overview of the reduction.
Let $E=\{e_1, \ldots, e_\ell\}$ be a scaled partition instance such that $\sum_{i=1}^\ell e_i = 2\epsilon$ with $\epsilon$ as a relatively small number.
We suggest the reader interpret the partition of these numbers as the fair allocation of $\ell$ items to two agents $a_1, a_2$ -- each has a utility of $e_i$ for item $i$.
In addition to the two agents and these items, we introduce a super agent $s$ and construct two larger items $g_1^L$, $g_2^L$, and two medium items $g_1^M$, $g_2^M$.
Both agents $a_1$ and $a_2$ have a utility of $x$ for either of the two large items and a utility of $x-\epsilon$ for either of the two medium items, where $x$ is sufficiently larger than $2\epsilon$.
Meanwhile, the super agent has a utility of $y$ with $y>x$ for either of the two large items and a utility of $0$ for other items.
When the input partition instance is a \YES instance, we can equally divide the partition instance into two subsets with equal subset-sum.
Then every agent $a_1$ and $a_2$ is allocated one of the subsets, together with a medium item, while the two large items are allocated to the super agent $s$. 
The allocation satisfies EF1 and achieves social welfare of $2y + 2x$.
However, when the partition instance is a \NO instance, the super agent cannot receive both the two large items $g_1^L$ and $g_2^L$ in an EF1 allocation as the utility of one of $a_1$ and $a_2$ must be less than $\epsilon + x-\epsilon = x$. 
Thus, when $y \gg x$, this will lead to an inapproximability of about $2$ since the super agent can receive at most one large item in the \NO instance.

Notice that the above reduction is built upon the assumption that $y$ can be sufficiently larger than $x$, which no longer holds when the valuation functions are normalized. In such a case, $y$ and $x$ will be approximately equal.
To resolve this issue, we would like to create replicas of the above partition gadgets as follows, which also leads to a stronger inapproximability ratio. 

In the above gadget, the normal agent’s utilities are distributed across large items and medium items, while the super agent’s utilities only go to the two large items $g_1^L$ and $g_2^L$.
To make sure that $y$ can still be sufficiently larger than $x$ under the assumption of normalization, it is desirable that the number of large items can be much smaller than the number of medium items.
To achieve this, we design the relationship between super agents based on clique graph structures.
We create a set of $t$ identical cliques, in each of which there are $k$ vertices.
Clique $i$ corresponds to one unique super agent $s_i$, and every edge is associated with a set of items corresponding to the given partition instance and a pair of normal agents.
There is one unique large item (also referred to as a clique item) for every vertex.
Similarly, each edge corresponds to two medium items, and the pair of normal agents associated with this edge will have a utility of $x-\epsilon$ to either of them.
We collect all these medium items inside a pool and let all the normal agents regard them the same.
To make the reduction easier to understand, we provide a graphical explanation in Fig.~\ref{fig:valuations_inmain}.

\begin{figure}[t]
\centering
\begin{tikzpicture}[
 agent/.style={regular polygon, regular polygon sides=3, draw, thick, fill=green!30, minimum size=3mm}, 
 item/.style={circle, draw, thick, fill=blue!30, minimum size=5mm}]

\node[agent, scale=1] (a1) at (-2.5, 0) [label=left:\text{clique agent}] {};
\node[agent, scale=1] (s1) at (0, 2.5) [label=left:\text{super agent}] {};
\node[agent, scale=1] (s2) at (6, 2.5)  {};
\node at (3, 2.5) {$\cdots$};

\foreach \i in {1,...,5} {
    \node[item] (n\i) at ({72*(\i-1)}:1) {};
}

\node at (3, 0) {$\cdots$};
\node at (3,-2) {$t$ \text{groups clique items}};

\foreach \i in {1,...,5} {
    \node[item] (n1\i) at ($({72*(\i-1)}:1) + (6,0)$) {};
}

\foreach \i in {1,...,3} {
    \node[item, scale=0.8] (p\i) at (-\i*0.8-2, -1.5) {};
}
\node at (-3.6, -2) {\text{partition items}};

\foreach \i in {1,...,3} {
    \node[item, scale=1] (pp\i) at (\i*0.8-2, -3) {};
}
\node at (-0.4, -3.8) {\text{pool items}};

\foreach \i in {1,...,5} {
    \foreach \j in {\i,...,5} {
        \ifnum\i<\j
            \draw[thick] (n\i) -- (n\j);
        \fi
    }
}
\foreach \i in {1,...,5} {
    \foreach \j in {\i,...,5} {
        \ifnum\i<\j
            \draw[thick] (n1\i) -- (n1\j);
        \fi
    }
}

\draw[thick, dotted, ->, shorten >=1mm, shorten <= 1mm] (a1) to node[midway, above] {$x$} (n3);
\draw[thick, dotted, ->, shorten >=1mm, shorten <= 1mm] (a1) to node[midway, below] {$x$} (n4);
\draw[thick, dotted, ->, shorten >=1mm, shorten <= 1mm] (a1) to node[midway, left] {$e_1$} (p3);
\draw[thick, dotted, ->, shorten >=1mm, shorten <= 1mm] (a1) to node[midway, left] {$e_\ell$} (p1);
\draw[thick, dotted, ->, shorten >=1mm, shorten <= 1mm] (a1) to node[midway, below] {$x-\epsilon$} (pp1);

\draw[thick, dotted, ->, shorten >=1mm, shorten <= 1mm] (s1) to node[midway, left] {$y$} (n3);
\draw[thick, dotted, ->, shorten >=1mm, shorten <= 1mm] (s1) to node[midway, right] {$y$} (n2);

\begin{pgfonlayer}{background}
    \node[circle, fill=gray!5, draw=none, rounded corners, inner sep=0pt, fit=(n1) (n2) (n3) (n4) (n5)] {};
\end{pgfonlayer}
\begin{pgfonlayer}{background}
    \node[circle, fill=gray!5, draw=none, rounded corners, inner sep=0pt, fit=(n11) (n12) (n13) (n14) (n15)] {};
\end{pgfonlayer}

\end{tikzpicture}
\caption{Graphical explanation of the reduction, where nodes in blue and green respectively represent the items and agents. A directed arrow $i \stackrel{w}{\rightarrow}j$ represents the value of $v_i(j) = w$.}
\label{fig:valuations_inmain}
\end{figure}

Notice that the utilities of a super agent $i$ are distributed across $k$ clique items within clique $i$.
In contrast, the utilities of every normal agent are mainly distributed across the two large items associated with the edge she belonged to and all items in the pool -- whose size is equal to the number of normal agents, $t\cdot {k \choose 2} = t(k^2-k)$, which is much larger than $k$ when $k$ is sufficiently large. 
This further leads to an inapproximability of $n^{1/5}/4.4144$ through a delicate analysis~(\Cref{thm:NPhard1/7}). 
Furthermore, we find that, by using the structure of multi-edge (i.e., there are multiple edges connecting every pair of vertices), the inapproximability ratio can be improved to $n^{1/4}/4.5$~(\Cref{thm:NPhardOneDivFour}).
Below, we provide the complete description of the reduction.

\begin{restatable}{lemma}{NPhardOneDivSeven}
\label{thm:NPhard1/7}
For any fixed $n > 2$, \msw is $\classNP$-hard to approximate to factor $n^{1/5}/{4.4144}$, even under normalized valuations.
\end{restatable}    
\begin{proof}
\begin{figure}[t]
\centering
\begin{tikzpicture}[
 agent/.style={regular polygon, regular polygon sides=3, draw, thick, fill=green!30, minimum size=3mm}, 
 item/.style={circle, draw, thick, fill=blue!30, minimum size=5mm}]

\node[agent, scale=1] (a1) at (-2.5, 0) [label=left:$c_{1,3}^{(1)}$] {};
\node[agent, scale=1] (s1) at (0, 2.5) [label=left:$s_1$] {};
\node[agent, scale=1] (s2) at (5, 2.5) [label=left:$s_2$] {};

\foreach \i in {1,...,3} {
    \node[item] (n\i) at ({120*(\i-1)}:1) {};
}
\foreach \i in {1,...,3} {
    \node[item] (m\i) at ($({120*(\i-1)}:1) + (5,0)$) {};
}

\node at (0.5, -2) {\text{clique items}};

\foreach \i in {1,...,3} {
    \node[item, scale=0.8] (p\i) at (-\i*0.8-2, -1.5) {};
}
\node at (-3.6, -2) {\text{partition items}};

\foreach \i in {1,...,12} {
    \node[item, scale=1, inner sep=0pt] (pp\i) at (\i*0.8-2, -3) {\scriptsize $q_{\i}$};
}
\node at (2.8, -3.8) {\text{pool items}};

\foreach \i in {1,...,3} {
    \foreach \j in {\i,...,3} {
        \ifnum\i<\j
            \draw[thick] (n\i) -- (n\j);
        \fi
    }
}
\foreach \i in {1,...,3} {
    \foreach \j in {\i,...,3} {
        \ifnum\i<\j
            \draw[thick] (m\i) -- (m\j);
        \fi
    }
}

\draw[thick, dotted, ->, shorten >=1mm, shorten <= 1mm] (a1) to node[midway, above] {$x$} (n3);
\draw[thick, dotted, ->, shorten >=1mm, shorten <= 1mm] (a1) to node[midway, below] {$x$} (n2);
\draw[thick, dotted, ->, shorten >=1mm, shorten <= 1mm] (a1) to node[midway, left] {$e_1$} (p3);
\draw[thick, dotted, ->, shorten >=1mm, shorten <= 1mm] (a1) to node[midway, left] {$e_\ell$} (p1);
\draw[thick, dotted, ->, shorten >=1mm, shorten <= 1mm] (a1) to node[midway, below] {$x-\epsilon$} (pp1);

\draw[thick, dotted, ->, shorten >=1mm, shorten <= 1mm] (s1) to node[midway, left] {$\frac13$} (n2);
\draw[thick, dotted, ->, shorten >=1mm, shorten <= 1mm] (s1) to node[midway, right] {$\frac13$} (n1);
\draw[thick, dotted, ->, shorten >=1mm, shorten <= 1mm] (s1) to node[midway, right] {$\frac13$} (n3);

\draw[thick, dotted, ->, shorten >=1mm, shorten <= 1mm] (s2) to node[midway, left] {$\frac13$} (m2);
\draw[thick, dotted, ->, shorten >=1mm, shorten <= 1mm] (s2) to node[midway, right] {$\frac13$} (m1);
\draw[thick, dotted, ->, shorten >=1mm, shorten <= 1mm] (s2) to node[midway, right] {$\frac13$} (m3);

\begin{pgfonlayer}{background}
    \node[circle, fill=gray!5, draw=none, rounded corners, inner sep=0pt, fit=(n1) (n2) (n3)] {};
\end{pgfonlayer}

\begin{pgfonlayer}{background}
    \node[circle, fill=gray!5, draw=none, rounded corners, inner sep=0pt, fit=(m1) (m2) (m3)] {};
\end{pgfonlayer}

\end{tikzpicture}
\caption{Agents and items in our construction when $n=14$, $t=2$ and $k=3$ in \Cref{thm:NPhard1/7}.
Nodes in blue and green represent items and agents respectively.
There are two cliques, and each has three corresponding clique items.
Each edge corresponds to two normal agents and a set of partition items. 
}
\label{fig:n=14}
\end{figure}

Fix a partition instance $S = \{e_1, \ldots, e_\ell \}$ such that $\sum_{i=1}^\ell e_i = 2\epsilon \in \mathbb{R}^+$, where $\epsilon$ is a relatively small number (with the partition instance rescaled). 
We construct a fair division instance as follows.
Let $t = \lceil n^{\frac{1}{5}} \rceil$, $k =\lfloor \frac{1}{2} + \frac{1}{2} \sqrt{\frac{4n}{t} -3}\ \rfloor$, and $x = \frac{1+ (t(k^2-k)-2)\epsilon}{t(k^2-k)}$. 
To help the reader better understand the reduction, we give a graphical illustration of $n=14$ in Fig.~\ref{fig:n=14}.
We construct four types of items, including
\begin{itemize}[leftmargin=0.5cm]
    \item Clique item: there are $t$ groups of clique items (the $i$-th group is named $\mathcal{C}_i$). For each group, consider a clique $G= (V, E)$ with $k$ vertices, each vertex in the clique corresponds to a clique item. The items in group $\mathcal{C}_i$ are named $c_1^{(i)},\ldots, c_k^{(i)}$. As shown in Fig.~\ref{fig:n=14}, when $n=14$, there are $t=2$ clique item groups, each of which contains $k=3$ items.   
    \item Partition item: there are $t(k^2-k)/2$ groups of partition items, which are denoted by $\mathcal{P}^{(i),\{u,v\}}, 1\le i\le t, 1\le u< v\le k$.
    Each group corresponds to an edge ($\mathcal{P}^{(i),\{u,v\}}$ corresponds to the edge $(u,v)$ in the $i$-th clique) and contains $\ell$ items, and those $\ell$ items, named as $p_1^{(i),\{u,v\}},\ldots,p_\ell^{(i),\{u,v\}}$, correspond to the $\ell$ numbers in the partition instance. 
    \item Pool item: there are $t(k^2-k)$ pool items, denoted by $\mathcal{Q} = \{q_i$: $1\le i \le t(k^2-k)\}$. In Fig.~\ref{fig:n=14}, there are $12$ pool items, denoted by $q_1,\ldots,q_{12}$.  
    \item Dummy item: there are $n-t(k^2-k+1)$ dummy items. In the example with $n=14$, since $14-2(3^2-3+1) = 0$, there are no dummy items in Fig.~\ref{fig:n=14}. 
\end{itemize}

Moreover, there are three types of agents, including
\begin{itemize}[leftmargin=0.5cm]
    \item Super agent: there are $t$ super agents named $s_1,\ldots, s_t$ corresponding to the $t$ cliques.
    \item Normal agent: there are also $t$ groups of normal agents corresponding to the $t$ cliques, and each group contains $k^2-k$ normal agents. For the $i$-th group, each edge $(u,v)$ in the clique corresponds to two normal agents $a_{(u,v)}^{(i)}$ and $a_{(v,u)}^{(i)}$. 
    In the example of $n=14$, since there are $6$ edges for the two cliques in Fig.~\ref{fig:n=14}, they correspond to $12$ normal agents in total. 
    \item Dummy agent: there are $n-t(k^2-k+1)$ dummy agents. For the example with $n=14$, since $14-2(3^2-3+1) = 0$, there is no dummy agent in Fig.~\ref{fig:n=14}.  
\end{itemize}
\par 

When $n$ is generalized, we define the valuations as follows: \textbf{1)} Each super agent $s_i$ has value $\frac1k$ for each item in $\mathcal{C}_i$ and value $0$ for other items.
For example, in Fig.~\ref{fig:n=14}, super agent $s_1$ has positive utilities only to items in $\mathcal{C}_1$;
\textbf{2)} For each $j=1,\ldots, t$ and each of the two normal agents $a_{(u,v)}^{(j)}$ and $a_{(v,u)}^{(j)}$, they both have utilities of $x$ to the clique items $c_u^{(j)}$ and $c_v^{(j)}$. 
Moreover, they both have value $e_w$ to the partition item $p_w^{(j),\{u,v\}}$ for each $w=1,\ldots,\ell$ and zero utility to other partition items. 
Other than these items, for each normal agent, (s)he has utility $x-\epsilon$ to each pool item. As shown in Fig.~\ref{fig:n=14}, each normal agent has the same utility $x-\epsilon$ to those pool items $q_1,\ldots,q_{12}$. 
\textbf{3)} For each dummy agent, (s)he values $\frac{1}{n-t(k^2-k+1)}$ to all dummy items (if existing).

If the partition instance is a \YES instance, consider the following allocation:
For each $j=1,\ldots, t$, we allocate super agent $s_j$ all the $k$ clique items in group $\mathcal{C}_j$. For each normal agent, we allocate one pool item with value $x-\epsilon$ and a set of partition items with total value $\epsilon$ (the partition instance is a yes-instance).
Each dummy agent receives exactly one dummy item.
Hence, each normal agent receives $x$ in total. It is not hard to verify that this allocation is EF1. 
In particular, a normal agent will not EF1-envy the super agent in the same group because, according to her utility function, the super agent receives two items with value $x$ and all the other items have value 0.
In this allocation, the social welfare is given by 
\begin{gather*}
    \mathcal{SW}(\mathcal{A}) \geq \underbrace{t\times 1}_{\text{clique\ items}} + \underbrace{t(k^2-k)\times ( x - \epsilon)}_{\text{pool\ items}}+\underbrace{t(k^2-k)\times \epsilon}_{\rm partition\ items} =  t + t(k^2-k)x.  
\end{gather*}

If the partition instance is a \NO instance, we then give an upper bound on social welfare. 
Consider an EF1 allocation $\A'$. 
We first estimate the total number of clique items received by all super agents.
\begin{proposition}\label{prop:ub_of_sum_of_li}
Suppose $s_j$ takes $L_j$ items in $\mathcal{C}_j$ for each $j\in[t]$. 
Then $\sum_{i=1}^t L_i \le t\sqrt{2k}$.
\end{proposition}
\begin{proof}
For each $i$, since each item corresponds to a unique vertex and there are $({L_i^2-L_i})/{2}$ edges in the clique induced by the $L_i$ vertices, then there are $L_i^2-L_i$ normal agents in group $j$ who can not receive any value from $\mathcal{C}_j$. 
Since the partition instance is a \NO instance, half of these normal agents can only receive less than $\epsilon$ value from the partition items. 
To ensure they do not EF1-envy $s_j$, each of these normal agents should receive at least two pool items.
Thus, the number of normal agents receiving at least two pool items is at least
$\sum_{i=1}^t \frac{L_i^2 - L_i}{2}$.
Hence, since the number of normal agents is equal to the number of pool items, at least $\sum_{i=1}^t \frac{L_i^2 - L_i}{2}$ normal agents will not receive any pool items.
To ensure they do not envy other normal agents, each of them should receive at least one clique item.
As there are $t\cdot k$ clique items in total, we have
\begin{align}\label{eqn:agents_not_receiving_pool_items}
\sum_{i=1}^t L_i + \sum_{i=1}^t \frac{L_i^2 - L_i}{2}\le t\cdot k \implies \sum_{i=1}^t L_i^2  < 2t\cdot k
\end{align}
Therefore, by the Cauchy-Schwarz (C-S) Inequality, the sum of $L_i$ can be bounded by $t\sqrt{2k}$, which concludes the proposition.
\end{proof}

Observe that the social welfare attained on the clique items is upper-bounded by the sum of $1/k\cdot \sum_{i=1}^t L_i$ and $tk\cdot x$.
Meanwhile, as there are $t(k^2-k)$ pool items, the total value of the pool items is upper-bounded by $t(k^2-k)\cdot (x-\epsilon)$.
Therefore, the social welfare of allocation $\A'$ is less than
\begin{align*}
    \mathcal{SW}(\A') &< \sum_{i=1}^tL_i\cdot \frac1k + tk\cdot x + t(k^2-k)\cdot (x-\epsilon) + t(k^2-k)\cdot \epsilon + \frac{n-t(k^2-k+1)}{n-t(k^2-k+1)}  \\
    &= \frac1k\sum_{i=1}^tL_i + tkx + t(k^2-k)x+1  \\
    &\le \frac1k\sum_{i=1}^tL_i + 1  + 1 + 1  \tag{$x$ is defined as $ \frac{1+(t(k^2-k)-2)\epsilon}{t(k^2-k)}$, and $\epsilon$ is small} \\ 
    &\le \frac{1}{k}\cdot t \sqrt{2k} + 3 \tag{By \Cref{prop:ub_of_sum_of_li}}
\end{align*}
For the last second inequality, it is straightforward to verify $tkx\leq1$ and $t(k^2-k)x\le 1$ by definitions of $t$, $k$, and $x$.
Finally, the inapproximability ratio is no less than
\begin{equation}\label{eqn:lb_of_inapprox}
\frac{\SW(\A)}{\SW(\A')} \ge 
\frac{t + t(k^2-k)x}{t\cdot \sqrt{2k}/k + 3} \ge 
\frac{t}{t\sqrt{2/k} + 3} =
    \frac{1}{\sqrt{\frac2k} + \frac{3}{t}} \ge \frac{1}{\sqrt{\frac2k}  + \frac{3}{n^{\frac{1}{5}}}}. 
\end{equation}
Observe that
\begin{equation}\label{eqn:ub_of_k}
k \ge \left(\frac{1}{2} + \frac{1}{2} \sqrt{\frac{4n}{t} -3}\right) -  1 \ge \sqrt{\frac{n}t} - 1 \ge \sqrt{\frac{n}{n^{\frac15} + 1}} -  1,
\end{equation}
where the second inequality can be obtained by the fact that $\sqrt{n/t}\geq1$.
By combining Inequality~\ref{eqn:lb_of_inapprox} and Inequality~\ref{eqn:ub_of_k}, we can find $\SW(\A)/\SW(\A')\ge n^{\frac15}/4.4144$.

The lemma concludes as we have shown that the corresponding social welfare under an EF1 allocation is lower-bounded by $t+t(k^2-k)x$ in a \YES instance of partition, and is upper-bounded by $t(\sqrt{k}+1)/k+3$ in a \NO instance.
The ratio is at least $n^{\frac15}/4.4144$.
\end{proof}

Recall that the main idea behind the above construction is to force the super agent to receive a large number of clique items when the partition instance is a \NO instance.
To make the consequence of super agent receiving clique items more intense, we introduce multiple edges, so that taking two adjacent clique item will lead to a larger number of normal agents receiving no pool items.
We can then improve the inapproximability ratio to ${n^{1/4}}/{4.5}$.
Notice that the ratio obtained below is stronger: the range of $n$ is larger, and the ratio is larger for all $n$ within the range.
The proof is deferred to \Cref{app:inapprox_const}.

For unnormalized valuations, we have the following result that \msw admits a $\Theta(\sqrt{n})$ inapproximability factor as $n$ grows (while still being a constant).

\begin{restatable}{lemma}{NPhardsqrtn}
\label{thm:NPhardsqrtn}
For any fixed constant $n\geq 2$, \msw is $\classNP$-hard to approximate to within any factor that is smaller than $\lfloor{\frac{1+\sqrt{4n-3}}{2}}\rfloor$.
\end{restatable}
\begin{proof}
We will present a reduction from the partition problem. 
When given a partition instance $S = \{e_1,\ldots, e_\ell \}$ such that $\sum_{i=1}^\ell e_i = 2\epsilon \in \mathbb{R}^+$ (the partition instance is scaled such that $\epsilon$ is sufficiently small), we construct an \msw instance with $n$ agents as follows.

The items are divided into two categories: \textit{clique items} and \textit{partition items}.
Consider a clique with $k=\lfloor{\frac{1+\sqrt{4n-3}}{2}}\rfloor$ vertices.
Each vertex in the clique corresponds to a clique item.
For partition items, the partition instance is copied for $k(k-1)/2$ times, corresponding to those $k(k-1)/2$ edges. 
For each edge $(u,v)$, construct a set of items $P^{(u,v)}=\{p_1^{(u,v)},\ldots,p_t^{(u,v)}\}$ that corresponds to the partition instance.

The agents are divided into three categories: one \textit{super agent}, $k(k-1)$ \textit{normal agents}, and $n-k(k-1)-1$ \textit{dummy agents}.
For each edge $(u,v)$ in the clique, it corresponds to two normal agents $a_{{(u,v)}_1}$ and $a_{{(u,v)}_2}$.

The valuations are defined as follows: for the super agent, she has a value of $1$ for each clique item and $0$ for each partition item. For each dummy agent, (s)he has a value of $0$ for all items. 
For each normal agent $a_{{(u,v)}_i}, i\in \{1, 2\}$, (s)he has value $\epsilon$ to the clique items $u$ and $v$, and value $e_i$ to the partition item $p_i^{(u,v)}$ for each $i\in\{1,\ldots,\ell \}$. 
(S)he has a value of $0$ on each of the remaining items.

If the partition instance is a \YES instance, it is possible to allocate the partition items to the normal agents such that each normal agent receives a value exactly $\epsilon$.
We will then allocate all the clique items to the super agent and let each dummy agent receive the empty set.
It is easy to see that the allocation is EF1.
In particular, a normal agent will not EF1-envy the super agent because, according to her utility function, the super agent receives two items with value $\epsilon$ and all the other items have value 0. 
In this case, the social welfare is at least $k$ by only accounting for the super agent's utility.

If the partition instance is a \NO instance, the super agent can receive only one clique item. 
Otherwise, if the super agent receives both item $u$ and item $v$, then one of the two normal agents in $a_{(u,v)_1}, a_{(u,v)_2}$ will EF1-envy the super agent: both agents think the super agent receives two items with value $\epsilon$, yet, at least one of them can only get a value less than $\epsilon$ due to that the partition instance is a \NO instance.
In this case, the social welfare is at most $1+2\epsilon\cdot(k+k(k-1)/2)$: the super agent receives at most one clique item with value $1$, the contribution to the social welfare is at most $\epsilon$ for each clique item that is not given to the super agent, and the contribution to the social welfare is at most $2\epsilon$ for each set of items $P^{(u,v)}=\{p_1^{(u,v)},\ldots,p_t^{(u,v)}\}$.
Notice that this social welfare can be made arbitrarily close to $1$ by making $\epsilon$ sufficiently small.
Hence, the inapproximability factor is $\frac{1}{k}$, and the lemma concludes by noticing $k=\lfloor{\frac{1+\sqrt{4n-3}}{2}}\rfloor$.
\end{proof}

\subsection{Approximability for General Number of Agents}
\label{sect:generaln}
When the number of agents and the number of items are given as input (not a fixed number), we demonstrate that \msw is $\classNP$-hard to approximate within a factor polynomial in $n$ or $m$.
In particular, for any given $\epsilon > 0$, \Cref{thm:NP-hard-n} shows \msw is $\classNP$-hard to approximate within a factor of $n^{1/3 -\epsilon}$ or $m^{1/2-\epsilon}$, even with normalized valuations.
The proof is deferred to~\Cref{app:approx_general_agents}, which relies on the maximum independent set problem and the inapproximability in~\Cref{thm:indset}.
On the other hand, we remark that a variant of the round-robin algorithm, \emph{the greedy-based round-robin algorithm}, can achieve an approximation of $O(n)$, with proof deferred to~\Cref{app:approx_general_agents}.

\section{Bi-Criteria Optimization}
\label{sec:bi-critera}
The last two sections show that a strong inapproximability holds for \mswx and \msw, even when the number of agents is fixed.
One may wonder whether it stems from the restriction imposed by the fairness constraint.
In this section, we consider the bi-criteria optimization version of \mswx and \msw, where the EFX and EF1 constraints are relaxed in the following natural way.

\begin{definition}\label{def:EFX_approx}
Given a real number $\beta\in(0,1]$, an allocation is \emph{$\beta$-approximately EFX}, if for any two agents $i$ and $j$, $v_{i}(A_{i})\geq \beta\cdot v_{i}(A_{j}\setminus\{g\})$ holds for any item $g\in A_{j}$.
\end{definition}

\begin{definition}\label{def:EF1_approx}
Given a real number $\beta\in(0,1]$, an allocation is \emph{$\beta$-approximately EF1}, if for any two agents $i$ and $j$, there exists an item $g\in A_{j}$ such that $v_{i}(A_{i})\geq \beta\cdot v_{i}(A_{j}\setminus\{g\})$.
\end{definition}

For $\alpha\geq1$ and $\beta\in(0,1]$, an algorithm is an $(\alpha,\beta)$-bi-criteria approximation algorithm for \mswx if it always outputs an allocation $\A$ such that
\begin{itemize}[leftmargin=0.5cm]
    \item $\alpha\cdot \SW(\A)\geq \SW(\A^*)$, and
    \item $\A$ is $\beta$-approximately EFX,
\end{itemize}
where $\A^\ast$ is the optimal solution to \mswx, i.e., $\A^\ast$ is an EFX allocation with the highest social welfare.

An $(\alpha,\beta)$-bi-criteria approximation algorithm for \msw is defined similarly.

The remaining part of this section is organized as follows.
In Section~\ref{sect:BCresults}, we consider constant numbers of agents, and we state our results of $(1,1-\epsilon)$-bi-criteria optimization for both \msw and \mswx.
These results are proved in Section~\ref{sect:BCef1} and Section~\ref{sect:BCefx}.
Finally, when the number of the agents is not a constant, we show in Section~\ref{sect:BChardness} that bi-criteria optimization fails assuming $\classP\neq\classNP$.

\subsection{Bi-Criteria Optimization for Constant Number of Agents}
\label{sect:BCresults}

For a constant number of agents, we will describe a $(1,1-\epsilon)$-bi-criteria algorithm for each of \msw and \mswx.
Our algorithm's running time is polynomial in terms of $m$ and $1/\epsilon$ (notice that $n$ is a constant).

\begin{restatable}{theorem}{bicriteria}
\label{thm:bicriteria}
Fix an arbitrary value of algorithm parameter $\epsilon>0$ and let the number of agents $n$ be a constant.
There exists a $(1,1-\epsilon)$-bi-criteria approximation algorithm for \msw whose running time is polynomial in terms of $m$ and $1/\epsilon$,
and a $(1,1-\epsilon)$-bi-criteria approximation algorithm for \mswx whose running time is polynomial in terms of $m$ and $1/\epsilon$.
\end{restatable}

\cite{aziz2020computing} provide a pseudo-polynomial time algorithm for \msw with a constant number of agents.
One may expect that standard rounding techniques can achieve the above-mentioned bi-criteria optimization.
However, straightforward rounding techniques fail.
It is possible that an agent's valuations to all items (except for a few items allocated to someone else; recall that EF1/EFX allows envy for up to one item) are extremely small, so a significant loss of precision occurs after rounding.
An EF1/EFX allocation in the rounded instance may be very far from being EF1/EFX in the original instance.
For example, consider an agent who has a very large value on one item $g$ and different small values on the remaining items.
The rounding may round the values of the items in $[m]\setminus \{g\}$ to $0$.
The EF1/EFX allocation may allocate $g$ to another agent.
In this case, any allocation after rounding is EF1 (and it is EFX if $g$ is the only item in another agent's bundle), but the allocation may be far from approximately EF1/EFX before rounding.

To achieve bi-criteria optimization, we will use careful individualized rounding together with some extra enumeration techniques.
The bi-criteria optimization algorithms for \msw and \mswx are mostly the same.

\subsection{Overview of Bi-Criteria Approximate Algorithms}
\label{sect:BCef1}
We will first show our high-level designs, and then give the complete algorithms (shown in Algorithm~\ref{bicriteria} and Algorithm~\ref{bicriteriaDP}). The high-level idea of our algorithm is shown as follows.

The algorithm proceeds in three steps. In the first step, we fix some items in each agent's bundle. 
In particular, for each pair of agents $i$ and $j$, we fix an item $g_{ij}$ in agent $i$'s bundle $A_i$.
This item will be the item $g$ in \Cref{def:EF1} such that, after removing it from $A_i$, agent $j$ will not envy agent $i$.
The first step will enumerate all possible sets for $\{g_{ij}\}$.
In the second step, we will apply an individualized rounding technique so that each agent's valuations to all the \emph{remaining items that are not fixed in the first step} can only take values from a set of numbers whose cardinality is polynomial in $m$ and $1/\epsilon$.
In the third step, we will use a dynamic programming method to solve our problem.
It is crucial that the first two steps cannot be swapped.
We will see the reason later.

\paragraph{Step 1: items fixing.}
For each agent $i$, we fix $X_i=\{g_{i1},g_{i2},\ldots,g_{i(i-1)},g_{i(i+1)},\ldots,g_{in}\}$ with $X_i\subseteq A_i$.
For each $g_{ij}$, it is expected that agent $j$ does not envy agent $i$ after removing $g_{ij}$.
Notice that it is possible that $g_{ij_1}=g_{ij_2}$.
We enumerate all possible sets for $\{X_1,\ldots,X_n\}$.
For each fixed $\{X_1,\ldots,X_n\}$, we proceed to Step~2 and 3 and compute the remaining part of the allocation.
Notice that the total number of the sets is at most $m^{n(n-1)}$ (each $g_{ij}$ can be one of the $m$ items), and this is a polynomial of $m$ given that $n$ is a constant.

\paragraph{Step 2: individualized rounding.}
We first fix an adjustable precision parameter $K$ that is a polynomial of $m$ and $1/\epsilon$.
Let $Y_i=\{g_{1i},\ldots,g_{(i-1)i},g_{(i+1)i},\ldots,g_{ni}\}$.
Notice that, for agent $i$, each item in $Y_i$ has been fixed to another agent.
Let $V_i=v_i([m]\setminus Y_i)$ be agent $i$'s total value for all the items that are not in $Y_i$.
Let $\tau_i=V_i/K$.
We will round down the valuations $v_i$ of agent $i$ to the items in $[m]\setminus Y_i$ such that the rounded valuations $\overline{v}_i$ can only take values from $\{0,\tau_i,2\tau_i,\ldots,K\tau_i\}$.
Notice that valuations of items in $Y_i$ are not rounded, and they can be significantly larger than $K\tau_i$.

In the next step, we will solve one problem similar to \msw on the instance after the rounding process above, where we aim to find an allocation that maximizes social welfare in the original valuation, subject to the EF1 constraint in the rounded valuation.
To show that we can obtain an optimal social welfare with a nearly EF1 allocation, we need to achieve that 1) an EF1 allocation in the original instance is still EF1 in the rounded instance (so that no ``high-quality'' allocation with high social welfare is ruled out after rounding) and 2) an EF1 allocation in the rounded instance is approximate EF1 in the original instance.
Notice that the EF1 here is in terms of the items $g_{ij}$ fixed in Step~1.

To achieve 1), we add $n$ dummy items $d_1,\ldots,d_n$ such that agent $i$ has value $m\tau_i$ on item $d_i$ and value $0$ on the remaining $n-1$ items.
Each item $d_i$ is allocated to agent $i$.
If the allocation satisfies that $v_i(A_i)\geq v_i(A_j\setminus\{g_{ji}\})$ in the original instance, we must also have $\overline{v}_i(A_i\cup\{d_i\})\geq \overline{v}_i(A_j\cup\{d_j\}\setminus\{g_{ji}\})$ in the rounded instance.
This is because $v_i(A_i)$ can only be reduced by at most $m\tau_i$ after rounding.
We can set $K$ to be significantly larger than $m$ such that the value of each dummy item is negligible.

To show 2), a crucial observation is that, in an allocation $\A=(A_1,\ldots, A_n)$ (with the $n$ dummy items added) that is EF1 in the rounded instance, we must have $\overline{v}_i(A_i)\geq \frac1nV_i$.
Notice that agent $i$ should not envy any other agent after removing the corresponding item in $Y_i$ from the other agent's bundle.
Therefore, agent $i$ should receive at least the \emph{average} value of $[m]\cup\{d_1,\ldots,d_n\}\setminus Y_i$.
Since the value of $[m]\cup\{d_1,\ldots,d_n\}\setminus Y_i$ is at least $V_i$ (notice that the value of $[m]\setminus Y_i$ is at least $V_i-m\tau_i$ after rounding), we have $\overline{v}_i(A_i)\geq\frac1nV_i$.
Suppose, when considering $\A$ in the original instance (with dummy items removed), agent $i$ envies agent $j$ even after removing $g_{ji}$ from agent $j$'s bundle.
We aim to show that the amount of envy is small compared with agent $i$'s value on $j$'s bundle (i.e., the amount of envy is an $\epsilon$ fraction of $i$'s value on $j$'s bundle).
Firstly, since the allocation is EF1 in the rounded instance, the amount of envy in the original instance is at most $2m\tau_i$ ($m\tau_i$ for the dummy item, and at most $m\tau_i$ for the loss of the precision in the rounding).
We can make this amount considerably smaller than $\frac1nV_i$ by setting $K$ large enough.
Secondly, agent $i$'s value on $j$'s bundle (with $g_{ji}$ removed) in the original instance should be at least $\frac1nV_i-2m\tau_i\approx\frac1nV_i$ (in order to make $i$ possibly envy $j$).
The amount of envy, which is at most $2m\tau_i$, is indeed very small.

We now remark that Step~1 and Step~2 cannot be swapped.
If $Y_i$ participates in the rounding, it is possible that items in $Y_i$ have much larger values to agent $i$ compared with the remaining items, such that the remaining items have value $0$ after rounding.
In this case, $V_i=v_i([m]\setminus Y_i)$ may even be rounded to $0$, and the amount of envy $m\tau_i$ can be large compared with agent $i$'s value on $j$'s bundle.

\paragraph{Step 3: dynamic programming.}
The remaining part of the algorithm is a standard dynamic program.
Let $H[\{u_{ij}\}_{i=1,\ldots,n;j=1,\ldots,n}]$ be a Boolean function which takes $n^2$ values $\{u_{ij}\}$ as inputs and outputs $\true$ if and only if there exists an allocation $\A=(A_1,\ldots,A_n)$ such that $\overline{v}_i(A_j\setminus\{g_{ji}\})=u_{ij}$.
After the rounding and adding dummy items in Step~2, we can assume each $u_{ij}$ only takes values from $\{0,\tau_i,2\tau_i,\ldots,(K+m)\tau_i\}$.
In addition, the total number of possible inputs to $H[\cdot]$ is $(K+m+1)^{n^2}$, which is a polynomial in $m$ and $1/\epsilon$ (as $K$ is a polynomial in $m$ and $1/\epsilon$, and $n$ is a constant).
We can use a standard dynamic program to evaluate $H[\cdot]$ for all inputs.
We can check each of the corresponding allocations to see if it satisfies EF1, and find out the one with maximum social welfare.
In addition, we can get the exact optimal social welfare instead of the $(1-\epsilon)$-approximation by comparing the social welfare of the stored allocations in terms of their actual values (instead of the rounded values).

\subsection{Bi-Criteria Optimization for EF1}

\begin{algorithm}[H]
\caption{Bi-criteria optimization of \msw}
\label{bicriteria}
\KwInput{utility functions $v_1,\ldots,v_n$, item set $M=[m]$, and the parameter $\epsilon>0$}
\KwOutput{an $(1-\epsilon)$-approximate EF1 allocation.} 
Set $K\leftarrow\lceil \frac{3mn}{\epsilon} \rceil$\;
Initialize $\Pi\leftarrow\varnothing$\;
\tcc{ $\Pi$ stores candidate allocations} 
\For{each $X=\{X_i=\{g_{i1},\ldots,g_{i(i-1)},g_{i(i+1)},\ldots,g_{in}\}\mid i=1,\ldots,n\}$}{
\tcc{elements in $X_i$ may be repeated, but $X_i\cap X_j=\emptyset$ for any $i,j$} \label{bicriteria-enumerate}
\For{each $i=1,\ldots,n$}{
Set $Y_i\leftarrow\{g_{1i},\ldots,g_{(i-1)i},g_{(i+1)i},\ldots,g_{ni}\}$\;
Set $V_i\leftarrow v_i([m]\setminus Y_i)$\;\label{bicriteria-step2b}
Set $\tau_i\leftarrow V_i/K$\;
for each $o\in [m]\setminus Y_i$, set $\overline{v}_i(o)\leftarrow\max_{k:k\tau_i\leq v_i(o)}k\tau_i$\;\label{bicriteria-step2e}
for each $o\in Y_i$, set $\overline{v}_i(o)\leftarrow 0$\;\label{bicriteria-conve}
Add a new item $d_i$ to $M$ such that $\overline{v}_i(d_i)=m\tau_i$ and $\overline{v}_j(d_i)=0$ for each $j\neq i$\ and $v_j(d_i)=0$ for each $j\in[n]$\; \label{bicriteria-dummyb}
$X_i\leftarrow X_i\cup d_i$\;\label{bicriteria-dummye}
}
$\A\leftarrow${\sc DynamicProgram}($M,\overline{v}_1,\ldots,\overline{v}_n,v_1,\ldots,v_n,\tau_1,\ldots,\tau_n,X,K$)\tcp*{see Algorithm~\ref{bicriteriaDP}} \label{bicriteria-retb}
Include $\A$ in $\Pi$\;
}
\Return{the allocation in $\Pi$ with the largest social welfare with respect to $\{v_1,\ldots,v_n\}$} \label{bicriteria-rete}
\end{algorithm}
\begin{algorithm}[H]
\caption{The dynamic programming subroutine for \msw}
\label{bicriteriaDP}
\SetKwProg{Fn}{Function}{:}{}
\Fn{\FDynamicProgram{$M=[m+n],\overline{v}_1,\ldots,\overline{v}_n,v_1,\ldots,v_n,\tau_1,\ldots,\tau_n,X,K$}}{
\tcc{$X=\{X_i=\{g_{i1},\ldots,g_{i(i-1)},g_{i(i+1)},\ldots,g_{in},d_i\}\mid i=1,\ldots,n\}$}
\tcc{for each $i$, dummy item $d_i$ is the $(m+i)$-th item }
\tcc{for each $i$, $\overline{v}_i(S)/\tau_i\in \{0,1,\ldots,K+m\}$ holds for any $S\subseteq M$}
\For{each $\chi\in\{0,1,\ldots,K+m\}^{n^2}$ and each $t=0,1,\ldots,m+n$}{ \label{bicriteriaDP-initb}
Initialize $H[\chi,t]\leftarrow\nil$\;
}
$H[0^{n^2},0]\leftarrow 0$\;\label{bicriteriaDP-inite}
\For{each $t=0,\ldots,m+n-1$}{ \label{bicriteriaDP-transferb}
\For{each $\chi$ in the dictionary ascending order such that \emph{$H[\chi,t]\neq \nil$}}{
\If{item $t+1$ belongs to $X$}{  \label{bicriteriaDP-transferoneb}
Suppose  $t\in X_{i^\ast}$\;
{\sc Update}($\chi,t,i^\ast$)\; \label{bicriteriaDP-transferonee}
}
\Else{ \label{bicriteriaDP-transfertwob}
\For{each $i=1,\ldots,n$}{
{\sc Update}($\chi,t,i$)\; \label{bicriteriaDP-transfere}
}
}
}
}
\For{each $\chi\in\{0,1,\ldots,K+m\}^{n^2}$}{ \label{bicriteriaDP-retb}
    Set $H[\chi,m+n]\leftarrow\nil$ if the allocation stored in $H[\chi,m+n]$ is not envy-free w.r.t. $\overline{v}_1,\ldots,\overline{v}_n$\;
}
\If{\emph{$H[\chi,m+n]\neq\nil$} for some $\chi$}{
    \Return{the allocation in \emph{$\{H[\chi,m+n]\mid H[\chi,m+n]\neq\nil,\chi\in\{0,1,\ldots,K+m\}^{n^2}\}$} with the largest social welfare w.r.t. $v_1,\ldots,v_n$, i.e. the one with the largest value $H[\chi,m+n]$} %
}
\Else{
    \Return{\emph{$\nil$}} \label{bicriteriaDP-rete}
}
}
\Fn{\Update{$\chi,t,i$}\label{bicriteriaDP-upb}}
{
            for each $j$, set $\chi_{ji}'\leftarrow\chi_{ji}+\overline{v}_j(t+1)/\tau_j$\; 
            for each $(i',j)$ with $i'\neq i$, set $\chi_{ji'}'\leftarrow\chi_{ji}$\;
            \If{\emph{$H[\chi',t+1]==\nil$} or $H[\chi',t+1]<H[\chi,t]+v_i(t+1)$}
            {
            $H[\chi',t+1]\leftarrow H[\chi,t]+v_i(t+1)$\; \label{bicriteriaDP-upe}
            }
}
\end{algorithm}
    
In Algorithm~\ref{bicriteria}, we set the precision parameter $K$ as $\lceil \frac{3mn}{\epsilon} \rceil$ which is a polynomial of $m$ and $1/\epsilon$. We enumerate all possible removing item set $X=\{X_i=\{g_{i1},\ldots,g_{i(i-1)},g_{i(i+1)},\ldots,g_{in}\}\mid i=1,\ldots,n\}$ in Line \ref{bicriteria-enumerate}, where each $g_{ij}$ is expected that agent $j$ will not envy agent $i$ after removing $g_{ij}$ in agent $i$'s bundle (\textbf{Step 1: items fixing}). Lines \ref{bicriteria-step2b}-\ref{bicriteria-step2e} perform the \textbf{individualized rounding} step and set the rounded valuation $\overline{v}_i$ such that the valuations can only take values from $\{0,\tau_i,2\tau_i,\ldots,K\tau_i\}$. For convenience, for each agent $i$, Line \ref{bicriteria-conve} sets \textbf{the valuation of all items in $Y_i$ as $0$} so that we can turn the original EF1 condition with regard to $Y_i$ to the envy-freeness condition. 

Lines \ref{bicriteria-dummyb}-\ref{bicriteria-dummye} add the dummy items $\{d_1,\ldots,d_n\}$ to reach the goal that if the allocation satisfies that $v_i(A_i)\geq v_i(A_j\backslash \{g_{ji}\})$ in the original instance, we must also have $\overline{v}_i(A_i\cup \{d_i\})\geq \overline{v}_i(A_j\cup \{d_j\})$, and update the fixed item set $X_i$ for each agent $i$. Finally, Lines \ref{bicriteria-retb}-\ref{bicriteria-rete} call Algorithm~\ref{bicriteriaDP} to find the optimal envy-free allocation after fixing set $X$ and return the one with the largest social welfare with respect to original valuation among all possible sets $X$.

For Algorithm~\ref{bicriteriaDP}, it uses the dynamic program to calculate the optimal envy-free allocation after fixing the assigned item set $X=\{X_i=\{g_{i1},\ldots,g_{i(i-1)},g_{i(i+1)},\ldots,g_{in},d_i\}\mid i=1,\ldots,n\}$, where all items in $X_i$ should be assigned to agent $i$ (\textbf{Step 3: dynamic programming}). Since we add the dummy items, for any subset $S\subseteq M$, we have $\overline{v}_i(S)/\tau_i\in\{0,1,\ldots,K+m\}$ for each $i$. In this algorithm, we use the state $H[\chi,t]$ to store the maximum social welfare value with regard to the original valuation (and record the corresponding allocation simultaneously), where $\chi_{ij}$ represents the value of agent $j$'s bundle with regard to $\overline{v}_i$ when allocating only the first $t$ items.

Lines \ref{bicriteriaDP-initb}-\ref{bicriteriaDP-inite} initialize the values $H[\chi,t]$ for all possible $\chi$ and $t$. Lines \ref{bicriteriaDP-transferb}-\ref{bicriteriaDP-transfere} transfer the state, where Lines \ref{bicriteriaDP-transferoneb}-\ref{bicriteriaDP-transferonee} is for the case where item $t+1$ has been assigned to agent $i$ before and Lines \ref{bicriteriaDP-transfertwob}-\ref{bicriteriaDP-transfere} is for the case where item $t+1$ has not been assigned and we need to enumerate the allocated agent. Next, Lines \ref{bicriteriaDP-retb}-\ref{bicriteriaDP-rete} return the envy-free allocation with the largest social welfare with regard to the original valuation. Lines \ref{bicriteriaDP-upb}-\ref{bicriteriaDP-upe} describe the detailed state transfer if we assign item $t+1$ to agent $i$ from the state $H[\chi,t]$.

After describing the algorithm formally, we now prove \Cref{thm:bicriteria}. The proof contains two parts: 1) an EF1 allocation in the original instance can be transferred to an EF allocation in the modified instance (after rounding and adding dummy items); 2) an EF allocation in the modified instance can be transferred to an $(1-\epsilon)$-approximate EF1 allocation in the original instance.
We show the two points by the following two lemmas.

\begin{lemma}
\label{lem:oritoround}
For an EF1 allocation $\A=\{A_1,\ldots,A_n\}$ with regard to $\{v_1,\ldots,v_n\}$ in the original instance, the corresponding allocation $\A'=\{A'_1,\ldots,A'_n\}$ where $A'_i=A_i\cup\{d_i\}$ for each $i$, is an EF allocation with regard to $\{\overline{v}_1,\ldots,\overline{v}_n\}$ after fixing some set $X=\{X_i=\{g_{i1},\ldots,g_{i(i-1)},g_{i(i+1)},\ldots,g_{in},d_i\}\mid i=1,\ldots,n\}$ in the modified instance.
\end{lemma}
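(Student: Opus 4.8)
The plan is to exploit the two design features of the modified instance: under $\overline{v}_i$ the items of $Y_i$ are zeroed out, which turns the EF1 comparison (where $g_{ji}$ is removed from agent $j$'s bundle) into a plain envy-freeness comparison, and the dummy item $d_i$, worth $m\tau_i$ to agent $i$, is used to absorb the precision lost in rounding. Given the EF1 allocation $\A$, I would first fix $X$ to be the set matching $\A$'s own witnesses: for each ordered pair $(i,j)$ let $g_{ij}\in A_i$ be the item whose removal from $A_i$ certifies that agent $j$ does not envy agent $i$. This $X$ is one of the sets enumerated in Step~1, and by construction $X_i\subseteq A_i$.

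Next I would record how $Y_i$ meets the bundles. Since $\A$ is a partition and the fixed sets satisfy $X_j\subseteq A_j$ with $X_j\cap X_{j'}=\varnothing$, each item $g_{ki}\in Y_i$ lies in $A_k$; hence $A_i\cap Y_i=\varnothing$ and, for $j\ne i$, $A_j\cap Y_i=\{g_{ji}\}$. In particular every item of $A_i$ lies in $[m]\setminus Y_i$, so its $\overline{v}_i$-value is the genuine rounded-down value, while the only item of $A_j$ that is zeroed under $\overline{v}_i$ is exactly $g_{ji}$.

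The heart of the argument is then a short chain of inequalities for each pair $(i,j)$ with $j\ne i$. On agent $i$'s own side, rounding down loses at most $\tau_i$ per item over at most $m$ items, so $\overline{v}_i(A_i)\ge v_i(A_i)-m\tau_i$, and adding the dummy item gives $\overline{v}_i(A_i')=\overline{v}_i(A_i)+m\tau_i\ge v_i(A_i)$. On agent $j$'s side, $d_j$ is worthless to $i$ and $g_{ji}$ has been zeroed, so $\overline{v}_i(A_j')=\overline{v}_i(A_j\setminus\{g_{ji}\})\le v_i(A_j\setminus\{g_{ji}\})$, the last step because rounding only decreases values. Combining these with the EF1 guarantee $v_i(A_i)\ge v_i(A_j\setminus\{g_{ji}\})$ yields $\overline{v}_i(A_i')\ge v_i(A_i)\ge v_i(A_j\setminus\{g_{ji}\})\ge \overline{v}_i(A_j')$, which is precisely envy-freeness of $\A'$ under $\overline{v}$.

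I expect the main obstacle to be bookkeeping rather than a deep idea: one must check that the dummy value $m\tau_i$ dominates the total rounding loss over all of agent $i$'s items, and that every item removed or zeroed on agent $j$'s side is accounted for, in particular that $A_j$ meets $Y_i$ only in $g_{ji}$ — which is exactly where the disjointness of the fixed sets and the partition structure are used. Fixing the witness-consistent $X$ before running the estimates is what makes the otherwise mechanical inequalities line up.
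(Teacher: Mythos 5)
Your proposal is correct and follows essentially the same route as the paper's proof: choose $X$ to consist of the EF1 witnesses of $\A$, bound $\overline{v}_i(A_i\cup\{d_i\})\geq v_i(A_i)$ via the dummy item compensating the at-most-$m\tau_i$ rounding loss, and use the zeroing of $g_{ji}\in Y_i$ together with $\overline{v}_i(d_j)=0$ to get $\overline{v}_i(A_j\cup\{d_j\})\leq v_i(A_j\setminus\{g_{ji}\})$, then chain through the EF1 inequality. Your explicit check that $A_i\cap Y_i=\varnothing$ and $A_j\cap Y_i=\{g_{ji}\}$ (so that no item of $A_i$ is zeroed and exactly one item of $A_j$ is) is a bookkeeping point the paper leaves implicit, but the argument is otherwise identical.
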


\begin{proof}
For an EF1 allocation $A=\{A_1,\ldots,A_n\}$ with regard to $\{v_1,\ldots,v_n\}$, for each $(i,j)$ where $i\neq j$, there exists one $g_{ji}\in A_j$ such that $v_i(A_i)\geq v_i(A_j\backslash\{g_{ji}\})$ by the definition of EF1. We use all these $g_{ji}$ to get the set $X$. We want to show $\overline{v}_i(A_i\cup\{d_i\})\geq \overline{v}_i(A_j\cup\{d_j\})$ for each $(i,j)$ where $i\neq j$.

We consider each $(i,j)$ where $i\neq j$.  We have $\overline{v}_i(A_i\cup\{d_i\})=\overline{v}_i(A_i)+m\tau_i\geq v_i(A_i)-m\tau_i+m\tau_i=v_i(A_i)$ because $A_i$ contains at most $m$ items and the value of each is rounded down by at most $\tau_i$. 
On the other hand, since $g_{ji}\in Y_i$, we have $\overline{v}_i(g_{ji})=0$ by Line~\ref{bicriteria-conve} of Algorithm~\ref{bicriteria}.
By EF1, $v_i(A_i)$ is at least $v_i(A_j\backslash\{g_{ji}\})\geq \overline{v}_i(A_j\backslash\{g_{ji}\})=\overline{v}_i(A_j)=\overline{v}_i(A_j\cup\{d_j\})$.
Therefore, $\overline{v}_i(A_i\cup\{d_i\})\geq\overline{v}_i(A_j\cup\{d_j\})$.
\end{proof}

\begin{lemma}
\label{lem:roundtoori}
For an EF allocation $\A'=\{A'_1,\ldots,A'_n\}$ where $A'_i=A_i\cup\{d_i\}$ for each $i$ with regard to $\{\overline{v}_1,\ldots,\overline{v}_n\}$ after fixing the set $X=\{X_i=\{g_{i1},\ldots,g_{i(i-1)},g_{i(i+1)},\ldots,g_{in},d_i\}\mid i=1,\ldots,n\}$ in the modified instance, the corresponding allocation $\A=\{A_1,\ldots,A_n\}$ is an $(1-\epsilon)$-approximate EF1 allocation with regard to $\{v_1,\ldots,v_n\}$ in the original instance.
\end{lemma}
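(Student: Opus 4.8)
The plan is to show that for every ordered pair $(i,j)$ with $i\neq j$, the item $g_{ji}$ fixed in Step~1 serves as the witness for approximate EF1, i.e. I would prove $v_i(A_i)\geq(1-\varepsilon)\,v_i(A_j\setminus\{g_{ji}\})$. Note that $g_{ji}\in X_j\subseteq A_j$ by construction, so it is a legal choice. Throughout I would rely on three elementary facts about the rounding: $\overline{v}_i(o)\leq v_i(o)\leq\overline{v}_i(o)+\tau_i$ for every $o\in[m]\setminus Y_i$; $\overline{v}_i(o)=0$ for $o\in Y_i$ (Line~\ref{bicriteria-conve}); and $v_i(A_i)\geq\overline{v}_i(A_i)$, which holds because $A_i\cap Y_i=\varnothing$ (each item of $Y_i$ is fixed to a different agent). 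The whole argument then reduces to extracting one lower bound on $v_i(A_i)$ and one upper bound on $v_i(A_j\setminus\{g_{ji}\})$ from the fact that $\A'$ is envy-free in the rounded instance.

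The lower bound is the heart of the proof and comes from an averaging argument. Since $\A'$ is EF with respect to $\overline{v}_i$, we have $\overline{v}_i(A'_i)\geq\overline{v}_i(A'_k)$ for every $k$; summing over all $n$ agents gives $n\,\overline{v}_i(A'_i)\geq\overline{v}_i([m]\cup\{d_1,\ldots,d_n\})$. Using $\overline{v}_i(d_k)=0$ for $k\neq i$, $\overline{v}_i(d_i)=m\tau_i$, $\overline{v}_i(Y_i)=0$, and $\overline{v}_i([m]\setminus Y_i)\geq V_i-m\tau_i$, the right-hand side is at least $V_i$. Hence $\overline{v}_i(A'_i)\geq V_i/n$, and after removing the dummy item I obtain $v_i(A_i)\geq\overline{v}_i(A_i)=\overline{v}_i(A'_i)-m\tau_i\geq V_i/n-m\tau_i$.

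For the upper bound, envy-freeness also yields $\overline{v}_i(A'_i)\geq\overline{v}_i(A'_j)=\overline{v}_i(A_j)$ (here $\overline{v}_i(d_j)=0$ since $j\neq i$), so $\overline{v}_i(A_j)\leq\overline{v}_i(A_i)+m\tau_i$. The key combinatorial observation is that $g_{ji}$ is the \emph{only} element of $Y_i$ that can lie in $A_j$: every other $g_{ki}\in Y_i$ is fixed to agent $k\neq j$ and therefore sits in $A_k$. Consequently $A_j\setminus\{g_{ji}\}$ is disjoint from $Y_i$, so each of its items obeys $v_i(o)\leq\overline{v}_i(o)+\tau_i$; together with $\overline{v}_i(g_{ji})=0$ this gives $v_i(A_j\setminus\{g_{ji}\})\leq\overline{v}_i(A_j)+m\tau_i\leq v_i(A_i)+2m\tau_i$. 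I would then finish by a case split: if $v_i(A_j\setminus\{g_{ji}\})\leq v_i(A_i)$ the EF1 inequality holds exactly; otherwise $v_i(A_j\setminus\{g_{ji}\})>v_i(A_i)\geq V_i/n-m\tau_i$, and substituting $\tau_i=V_i/K$ with $K=\lceil 3mn/\varepsilon\rceil$ shows $2m\tau_i\leq\varepsilon\,(V_i/n-m\tau_i)<\varepsilon\,v_i(A_j\setminus\{g_{ji}\})$, which rearranges into $v_i(A_i)>(1-\varepsilon)\,v_i(A_j\setminus\{g_{ji}\})$.

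The step I expect to be the main obstacle is the averaging lower bound $v_i(A_i)\geq V_i/n-m\tau_i$, since the entire approximation guarantee hinges on charging the fixed additive error $2m\tau_i$ against a \emph{guaranteed} lower bound on $v_i(A_j\setminus\{g_{ji}\})$; without such a bound the ratio could be arbitrarily bad when all relevant valuations are tiny. This is also exactly where the ordering of Steps~1 and~2 matters: excluding $Y_i$ before rounding keeps $V_i=v_i([m]\setminus Y_i)$ from being collapsed to $0$, so $\tau_i$ remains a meaningful unit and the relative error stays controlled.
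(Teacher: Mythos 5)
Your proposal is correct and follows essentially the same route as the paper's proof: the same witness item $g_{ji}$, the same chain of rounded-instance envy-freeness plus rounding-error bounds giving $v_i(A_i)\geq v_i(A_j\setminus\{g_{ji}\})-2m\tau_i$, the same averaging argument yielding $\overline{v}_i(A_i)\geq V_i/n-m\tau_i$, and the same use of $K=\lceil 3mn/\varepsilon\rceil$ to charge $2m\tau_i$ against $\varepsilon\,v_i(A_j\setminus\{g_{ji}\})$. If anything, you are slightly more careful than the paper in one spot: you state explicitly that $A_j\setminus\{g_{ji}\}$ is disjoint from $Y_i$ (since every other $g_{ki}\in Y_i$ is fixed to agent $k$), which is precisely what justifies the paper's implicit claim that each item of $A_j\setminus\{g_{ji}\}$ loses at most $\tau_i$ under rounding.
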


\begin{proof}
By definition, for each pair $(i,j)$ where $i\neq j$,  we have $\overline{v}_i(A_i\cup\{d_i\})\geq \overline{v}_i(A_j\cup\{d_j\})$, then it suffices to show $v_i(A_i)\geq (1-\epsilon)v_i(A_j\backslash\{g_{ji}\})$.

We consider each pair $(i,j)$ where $i\neq j$ such that agent $i$ envies agent $j$ even after removing item $g_{ji}$ from $j$'s bundle: $v_i(A_i)< v_i(A_j\backslash\{g_{ji}\})$ (otherwise, if no such pair exists, the allocation is EF1 and we are done).
We have
\begin{align*}
    v_i(A_i)&\geq \overline{v}_i(A_i)\tag{values of items are rounded down}\\
    &=\overline{v}_i(A_i\cup\{d_i\})-m\tau_i\tag{$\overline{v}_i(d_i)=m\tau_i$}\\
    &\geq \overline{v}_i(A_j\cup\{d_j\})-m\tau_i\tag{$\A'$ is envy-free}\\
    &=\overline{v}_i(A_j)-m\tau_i\tag{$\overline{v}_i(d_j)=0$}\\
    &=\overline{v}_i(A_j\backslash\{g_{ji}\})-m\tau_i\tag{$\overline{v}_i(g_{ji})=0$ from $g_{ji}\in Y_i$ (Line~\ref{bicriteria-conve} of Algorithm~\ref{bicriteria})}\\
    &\geq v_i(A_j\backslash\{g_{ji}\})-2m\tau_i,
\end{align*}
where the last inequality is because $A_j\backslash\{g_{ji}\}$ contains at most $m$ items and the value of each is rounded down by at most $\tau_i$.
Thus, if we can show $2m\tau_i\leq \epsilon v_i(A_j\backslash\{g_{ji}\})$, then we have $v_i(A_i)\geq (1-\epsilon)v_i(A_j\backslash\{g_{ji}\})$, which finishes the proof.

So the remaining is to show $2m\tau_i\leq \epsilon v_i(A_j\backslash\{g_{ji}\})$. Since we have $\overline{v}_i(A_i)+m\tau_i=\overline{v}_i(A_i\cup\{d_i\})\geq \overline{v}_i(A_{j'}\cup\{d_{j'}\})=\overline{v}_i(A_{j'}\backslash\{g_{j'i}\})$ for each $j'\neq i$, and we also have $\overline{v}_i(A_i)+m\tau_i\geq \overline{v}_i(A_i)+m\tau_i$, we sum all these $n$ terms up and we have
\begin{equation*}
    \overline{v}_i(A_i)+m\tau_i\geq\frac1n\left(\overline{v}_i(A_i)+m\tau_i+\sum_{j'\neq i}\overline{v}_i(A_{j'}\backslash\{g_{j'i}\})\right)=\frac{1}{n}(\overline{v}_i([m]\backslash Y_i)+m\tau_i)\geq \frac{1}{n}v_i([m]\backslash Y_i)
\end{equation*}
where the last inequality is due to that the value of each item is rounded down by at most $\tau_i$ and there are at most $m$ items. 
Recalling that we have set $V_i=v_i([m]\setminus Y_i)$,
this means $\overline{v}_i(A_i)\geq \frac{V_i}{n}-m\tau_i$.

Next, 
\begin{align*}
    \epsilon v_i(A_j\backslash\{g_{ji}\})&>\epsilon v_i(A_i)\tag{we have assumed $v_i(A_i)<v_j(A_j\setminus\{g_{ji}\})$ at the beginning}\\
    &\geq \epsilon \overline{v}_i(A_i)\\
    &\geq \epsilon\left(\frac{V_i}{n}-m\tau_i\right)\tag{we have just proved this}\\
    &\geq \epsilon \frac{V_i}{n}-m\tau_i.
\end{align*}
To show $2m\tau_i\leq \epsilon v_i(A_j\backslash\{g_{ji}\})$, it remains to show $\epsilon \frac{V_i}{n}-m\tau_i\geq 2m\tau_i$, that is $\tau_i\leq \frac{\epsilon}{3mn}V_i$. Since $K=\lceil \frac{3mn}{\epsilon} \rceil \geq \frac{3mn}{\epsilon}$, we have $\tau_i=V_i/K\leq \frac{\epsilon}{3mn}V_i$, and the lemma concludes.
\end{proof}

\begin{proof}[Proof of the first part of \Cref{thm:bicriteria}.]
Assume the allocation $\A^\ast=(A_1,\ldots,A_n)$ is the EF1 allocation with the largest social welfare, from \Cref{lem:oritoround}, it can be transferred to one EF allocation $\A'=\{A'_1,\ldots,A'_n\}$ where $A'_i=A_i\cup\{d_i\}$ for each $i$ after fixing some set $X$. Then, since we choose the allocation with the largest social welfare with regard to original valuation for all possible sets $X$, the allocation output by Algorithm \ref{bicriteria} must have at least the same social welfare with regard to original valuation after removing the dummy items. Furthermore, because of \Cref{lem:roundtoori}, this allocation is an $(1-\epsilon)$-approximate EF1 allocation with regard to the original valuation. To sum up, the output allocation $\A$ should be $(1-\epsilon)$-approximate EF1 and $\SW(\A)\geq \SW(\A^\ast)$.
\end{proof}

\subsection{Bi-Criteria Optimization for EFX}
\label{sect:BCefx}

We can follow the same ideas as above.
Thus, we only discuss the differences in the algorithms. Our algorithm is shown in Algorithm~\ref{bicriteriax} and Algorithm~\ref{bicriteriaxDP}.

The main difference is that the $g_{ij}$ we enumerate now should ensure such $g_{ij}$ is the smallest item in agent $i$'s bundle from agent $j$'s perspective.
To ensure this, at Step~\ref{bicriteriaxcond} in Algorithm~\ref{bicriteriax}, we only need the feasible choices of the set $X$ such that $v_i(g_{ji})\leq v_i(g_{jk})$, for all $i,j,k$. This means $g_{ji}$ is the smallest item among all items that have been allocated to agent $j$, from agent $i$'s perspective.

The second difference is at Step~\ref{bicriteriaxDP-newif} in Algorithm~\ref{bicriteriaxDP}.
Here, when we allocate the items not in the set $X$ to some agent $i$, we also need to ensure that the allocated item cannot have a smaller value than any $g_{ij}$ from agent $j$'s perspective.

\addtocontents{toc}{\protect\setcounter{tocdepth}{1}}
\subsection{Intractability Results for Bi-Criteria Optimization}
\label{sect:BChardness}
\addtocontents{toc}{\protect\setcounter{tocdepth}{2}}
In the previous sections, for a constant number of agents, we show that \msw and \mswx become tractable if we slightly relax EF1 and EFX.
In this section, we show in Theroem~\ref{thm:hardness-bicriteria-overall} that the problem becomes largely intractable for a general number of agents even if EF1 and EFX are relaxed substantially, whose proof is broken down into \Cref{thm:hardness-bicriteria-n} and \Cref{thm:hardness-bicriteria-n-efx} below.

\begin{theorem}\label{thm:hardness-bicriteria-overall}
Fix any small $\epsilon >0$.
It implies $\classP=\classNP$ if there exists
\begin{itemize}[leftmargin=0.5cm]
    \item a polynomial-time $(n^{0.5-\epsilon},\epsilon)$-bi-criteria approximation algorithm or a polynomial-time $(m^{1-\epsilon},\epsilon)$-bi-criteria approximation algorithm for \msw, or
    \item a polynomial-time $(n^{1-\epsilon},0.5+\epsilon)$-bi-criteria approximation algorithm for \mswx.
\end{itemize}
\end{theorem}

\begin{lemma}\label{thm:hardness-bicriteria-n}
Fix small $\epsilon>0$. 
If there exists a $(n^{0.5-\epsilon},\epsilon)$-bi-criteria or a $(m^{1-\epsilon},\epsilon)$-bi-criteria polynomial-time approximation algorithm for \msw, then $\classP=\classNP$.
\end{lemma}
\begin{proof}
The proof is similar to the proof of~\citep[Appendix~A.2]{10.5555/3306127.3331927}.
We present a reduction from the maximum independent set problem in Definition~\ref{def:indset} based on \Cref{thm:indset}.

Given a maximum independent set instance $G=(V,E)$, we construct a fair division instance with $n$ items and $m+1$ agents, where $n=|V|$ and $m=|E|$.
Those $n$ items correspond to the $n$ vertices in $G$.
Those $m+1$ agents consist of a \emph{super agent} and $m$ \emph{edge agents} that correspond to the $m$ edges in $G$.

Fix a small number $\delta>0$.
The super agent has a value of $1$ for each item.
For the edge agent representing the edge $(u,v)$, (s)he has value $\delta$ on the two items representing $u$ and $v$, and (s)he has value $0$ on the remaining items.

To guarantee $\epsilon$-approximate EF1, it is required that the super agent cannot take any pair of items representing vertices $u$ and $v$ for an edge $(u,v)$.
For otherwise, the agent representing the edge $(u,v)$ will receive a value of $0$, and $\epsilon$-approximate EF1 fails to hold for any $\epsilon>0$.
As a result, the set of items taken by the super agent must correspond to an independent set in $G$.

By making $\delta$ sufficiently small, the social welfare almost exclusively depends on the super agent's utility.
\Cref{thm:hardness-bicriteria-n} follows easily from \Cref{thm:indset} and the fact $m=O(n^2)$.
\end{proof}

\begin{lemma}\label{thm:hardness-bicriteria-n-efx}
Fix any small $\epsilon>0$. 
If there exists a $(n^{1-\epsilon},0.5+\epsilon)$-bi-criteria polynomial-time approximation algorithm for \mswx, then $\classP=\classNP$.
\end{lemma}
\begin{proof}
    The same reduction in the proof of Part (1) of \Cref{thm:mswx_neg_overall} can be used here.
    The analysis of the yes-instance is exactly the same.
    The analysis of the \NO instance is almost the same except for the following differences.
    Firstly, if agent $a_{i0}$ gets both $v_{iu_1}$ and $v_{iu_2}$ for an edge $j=(u_1,u_2)$, then the fact agent $a_{i0}$ cannot get more than these two items (that we have proved in Part (1) of \Cref{thm:mswx_neg_overall}) still follows from $(0.5+\epsilon)$-EFX requirement.
    This is where the parameter $0.5+\epsilon$ in the lemma statement comes from.
    Secondly, we will use the inapproximability result of the independent set problem described in \Cref{thm:indset}.
    We need that, for a yes-instance, $G$ has an independent set of size $x$, and, for a \NO instance, $G$ has an independent set of size less than $0.5x$.
    We need a factor $2$ inapproximability result of the independent set problem, while \Cref{thm:indset} says that the inapproximability ratio is more than $2$.
\end{proof}

\section{Prices of EFX and EF1}
\label{append:priceoffairness}

Under normalized valuations, it is known that the price of EF1 is $8/7$ for $n=2$~\citep{bei2021price,LiLiLu24} and is $\Theta(\sqrt{n})$ for general $n$~\citep{bei2021price,barman2020optimal}.
For the price of EFX under normalized valuations, it is $1.5$ for two agents~\citep{bei2021price} and is unknown for general $n$.

\begin{restatable}{theorem}{thmpriceoffairness}
\label{thm:price_of_fairness}
The price of EF1 is exactly $n$ for any number of agents.
The price of EFX is $\Theta(n)$ for any number of agents, and $\Theta(\sqrt{n})$ for normalized valuations.
\end{restatable}

\begin{proof}
We first show that, under unnormalized valuations, the prices of both EF1 and EFX are large: $n$ is a trivial lower bound.
For an arbitrary $n$, consider $n$ agents and $n$ items where agent $1$ has value $1$ on all the items and each of agents $2,\ldots,n$ has value $\frac\epsilon{n-1}$ on all the items. $\MSW=n$ for the allocation where agent $1$ gets all the items. On the other hand, an EF1/EFX allocation must allocate each agent one item, which has social welfare $1+\epsilon$.

Thus, \Cref{thm:1/n} and the above argument immediately imply the price of EF1 is $n$.
Our results in Section~\ref{sec:mswxgeneraln} also make the price of EFX clear.
\Cref{thm:mswx_pos_overall} implies the price of EFX is $O(n)$, as $\sum_{i=1}^nv_i([m])$ is a trivial upper bound to $\MSW$. The initial argument then implies the first part of this theorem.
For normalized valuations, \Cref{thm:mswx_pos_overall} implies the price of EFX is $O(\sqrt{n})$.
Since EFX is a stronger notion than EF1, the lower bound $\Omega(\sqrt{n})$ for the price of EF1 by \cite{bei2021price} can be directly applied here.
\end{proof}

\section{Conclusion and Future Work}
\label{sect:conclusion}
In this work, we provided a complete landscape on the complexity and approximability of maximizing social welfare subject to the EF1/EFX constraint.
Our results also provide asymptotically tight ratios for the price of EFX, which is missing in the previous literature.

Given that this problem has been well understood for the cake-cutting problem and is now well understood for the indivisible item allocation problem, an interesting future direction is to study this problem for the mixed divisible and indivisible goods.
The allocation of mixed goods was first considered by~\cite{bei2021fair}, in which a fairness notion ``EFM'' that adapts EF is proposed.
Studying maximizing efficiency while guaranteeing fairness is thus a compelling future direction.

\section*{Acknowledgments}
The research of Shengxin Liu is supported by the National Natural Science Foundation of China (No. 62102117), by the Shenzhen Science and Technology Program (No. GXWD20231129111306002), and by the Key Laboratory of Interdisciplinary Research of Computation and Economics (Shanghai University of Finance and Economics), Ministry of Education.
The research of Biaoshuai Tao is supported by the National Natural Science Foundation of China (No. 62472271) and the Key Laboratory of Interdisciplinary Research of Computation and Economics (Shanghai University of Finance and Economics), Ministry of Education.
We are extremely grateful to Prof. Bhaskar Ray Chaudhury, Prof. Ruta Mehta, and Dr. Aniket Murhekar of the UIUC game theory group for their invaluable discussions and insightful suggestions on writing.

\bibliographystyle{alpha}
\bibliography{reference}

\appendix

\newpage
\appendix

\section{Resource Monotonicity for EFX}
\label{append:monotonicityEFX}
\efxResourceMono*
\begin{proof}
When $n=2$, the resource monotonicity for EFX allocations with two agents follows straightforwardly from \Cref{prop:monotonicity}.
Given a partial EFX allocation $(A,B)$ and an unallocated item $g$, we can compute in polynomial time an EFX allocation $(A',B')$ with $A'\cup B'=A\cup B\cup\{g\}$ such that $v_1(A')\geq v_1(A)$ and $v_2(B')\geq v_2(B)$.
To see this, if $(A\cup\{g\},B)$ is EFX or $(A,B\cup\{g\})$ is EFX, we can directly update the allocation.
Otherwise, we can apply \Cref{prop:monotonicity} to update the allocation.
This implies the lemma as the social welfare is clearly non-decreasing and the update from $(A,B)$ to $(A',B')$ can be done for at most $m$ times.

When $n>2$, we first consider the following instance with three agents and seven items, introduced by~\cite{ChaudhuryGaMe24}.

\begin{center}
    \begin{tabular}{c|ccccccc}
    \hline
    & $g_1$ & $g_2$ & $g_3$ & $g_4$ & $g_5$ & $g_6$ & $g_7$\\
    \hline
    $v_1$ & 8 & 2 & 12 & 2 & 0 & 17 & 1 \\
    $v_2$ & 5 & 0 & 9 & 4 & 10 & 0 & 3\\
    $v_3$ & 0 & 0 & 0 & 0 & 9 & 10 & 2\\
    \hline
\end{tabular}
\end{center}

\cite{ChaudhuryGaMe24} proved the following proposition.

\begin{proposition}[\cite{ChaudhuryGaMe24}]\label{prop:POmonotonicityexample}
    The allocation $(A_1,A_2,A_3)$ with
    $$A_1=\{g_2,g_3,g_4\},\qquad A_2=\{g_1,g_5\},\qquad \mbox{and}\qquad A_3=\{g_6\},$$
    is a partial EFX allocation (with $g_7$ unallocated) such that no complete EFX allocation $(B_1,B_2,B_3)$ satisfies
    $$v_1(B_1)\geq v_1(A_1)=16, \qquad v_2(B_2)\geq v_2(A_2)=15, \qquad \mbox{and}\qquad v_3(B_3)\geq v_3(A_3)=10.$$
\end{proposition}

Let $w$ be a large number.
Based on Chaudhury et al.'s instance, we construct one more agent and two more items with the new valuation profile shown below.

\begin{center}
    \begin{tabular}{c|ccccccccc}
    \hline
    & $g_1$ & $g_2$ & $g_3$ & $g_4$ & $g_5$ & $g_6$ & $g_7$ & $h_1$ & $h_2$\\
    \hline
    $v_1$ & 8 & 2 & 12 & 2 & 0 & 17 & 1 & 16 & 16\\
    $v_2$ & 5 & 0 & 9 & 4 & 10 & 0 & 3 & 15 & 15\\
    $v_3$ & 0 & 0 & 0 & 0 & 9 & 10 & 2 & 10 & 10\\
    $v_4$ & 0 & 0 & 0 & 0 & 0 & 0 & 0 & $w$ & $w$ \\
    \hline
\end{tabular}
\end{center}

If we are allowed partial allocation, with $g_7$ unallocated, an EFX allocation can be
\begin{equation}\label{eqn:partialEFX}
A_1=\{g_2,g_3,g_4\},\qquad A_2=\{g_1,g_5\},\qquad A_3=\{g_6\},\qquad \mbox{and} \qquad A_4=\{h_1,h_2\}.
\end{equation}

This allocation has social welfare, which is slightly more than $2w$.
We will show that agent $4$ can only get one of $h_1$ and $h_2$ in any complete EFX allocation, in which case the social welfare is slightly more than $w$, which is less than $2w$.

Suppose for the sake of contradiction that agent $4$ gets both $h_1$ and $h_2$.
If $A_4=\{h_1,h_2\}$, then \Cref{prop:POmonotonicityexample} suggests that one of the first three agents will get a value of less than $16$, $15$, and $10$ respectively to keep the EFX property among them.
In this case, the said agent will EFX-envy agent $4$.
If $\{h_1,h_2\}\subsetneq A_4$, then, after removing an item in $A_4\setminus\{h_1,h_2\}$, $A_4$ is worth at least $32$, $30$, and $20$ for the first three agents respectively.
Therefore, there should be an allocation of the first seven items to the first three agents such that the three agents get values of $32$, $30$, and $20$ respectively.
It is easy to see that this is impossible: $g_3$ and $g_6$ must be in agent $1$'s bundle, then agent $3$ must get $g_5$ and $g_7$; agent $2$ cannot get a value of $30$ from $\{g_1,g_2,g_4\}$.
Therefore, we have proved \Cref{thm:monotonicityEFX} with unnormalized valuations.

Finally, we show that our example also works if the valuation is normalized, where the numbers in the first three rows of the table are divided by $74$, $61$, and $41$ respectively, and $w=0.5$.
The partial EFX allocation (\ref{eqn:partialEFX}) has social welfare $\frac{16}{74}+\frac{15}{61}+\frac{10}{41}+1>1.70$.
The same arguments above show that at most one of $h_1$ and $h_2$ can be allocated to agent $4$ in any complete EFX allocation.
To find an upper bound to the social welfare of an EFX allocation, suppose $h_2$ is allocated to agent $4$ and each item is allocated to an agent with the highest value, except that $h_1$ cannot be given to agent $4$.
The upper bound is
$$\frac8{74}+\frac2{74}+\frac{12}{74}+\frac4{61}+\frac9{41}+\frac{10}{41}+\frac3{61}+\frac{15}{61}+\frac12<1.63.$$
The social welfare is strictly less than that of (\ref{eqn:partialEFX}).
\end{proof}

\section{Omitted Proofs in Section~\ref{sec:mswxgeneraln}}
\label{append:mswx}

\addtocontents{toc}{\protect\setcounter{tocdepth}{1}}
\subsection{Proof of First Part of \Cref{thm:mswx_pos_overall}}
\label{appendix:On}
\addtocontents{toc}{\protect\setcounter{tocdepth}{2}}

\thmmswxposoverall*

\begin{algorithm}[H]
    \caption{An $O(n)$-approximation algorithm for \mswx}\label{alg:On}
    \KwInput{$(v_1,\ldots,v_n)$} 
    \KwOutput{an EFX allocation (that is allowed to be partial)}
    {\bf if} $m<n$ {\bf then} Add dummy items with value $0$ to all agents so that $m=n$\;
    Initialize $\A=(A_1,\ldots,A_n)$ that maximizes $\SW(\A)$ subject to $|A_1|=\cdots=|A_n|=1$\;%
    $B\leftarrow [m]\setminus\bigcup_{i=1}^nA_i$\;
    \While{there exists an agent that envies $B$}{
        Let $i$ be the most envious agent to $B$\;
        $X_i\leftarrow${\sc Replace}$(v_i,A_i,B)$\tcp*{see Algorithm~\ref{alg:replace}}
        $B\leftarrow B\cup A_i\setminus X_i$\;
        $A_i\leftarrow X_i$\;
    }
    \Return{$(A_1,\ldots,A_n)$}
    \end{algorithm}
    
\begin{proof}
We have proved that the algorithm runs in pseudo-polynomial time and always outputs EFX allocations in \Cref{sect:On}.
It remains to show the approximation guarantee $(2n+1)\cdot\SW(\A)\geq \sum_{i=1}^nv_i([m])$.
The proof is similar to Lemma~1 in~\cite{barman2020optimal}.
The algorithm by~\cite{barman2020optimal} also starts with an allocation $\A=(A_1,\ldots,A_n)$ that maximizes $\SW(\A)$ subject to $|A_1|=\cdots=|A_n|=1$.\footnote{Barman et al.'s algorithm is used to compute an EF1 allocation, whereas ours is used for EFX. Both algorithms start by allocating each agent one item.}
The following fact is proved by~\cite[Lemma 1]{barman2020optimal}.

\begin{proposition}[\cite{barman2020optimal}]
    Let $\A'$ be the allocation after Line~2 of Algorithm~\ref{alg:On}. We have $\SW(\A')\geq\frac1n\sum_{i=1}^n\sum_{g\in G_i}v_i(g)$, where $G_i$ is the set of the $n$ items with the largest values to agent $i$.
\end{proposition}

Let $\A=(A_1,\ldots,A_n)$ be the output of Algorithm~\ref{alg:On}.
Since {\sc Replace} subroutine always increases an agent's utility, the above proposition implies 
\begin{equation}\label{eqn:On}
    \SW(\A)\geq\frac1n\sum_{i=1}^n\sum_{g\in G_i}v_i(g).
\end{equation}

Next, we find a lower bound for each $v_i(A_i)$.
For each $j\neq i$, by EFX property we have proved (in fact, EF1 suffices here), there exists an item $g_j\in A_j$ such that $v_i(A_i)\geq v_i(A_j\setminus\{g_j\})$.
By the stopping condition for the while-loop, we have $v_i(A_i)\geq v_i(B)$.
Therefore, by summing over the $n+1$ bundles $A_1,\ldots,A_n,B$, we have
$$(n+1)\cdot v_i(A_i)\geq \sum_{i=1}^nv_i(A_j\setminus\{g_j\})+v_i(B)=\sum_{j=1}^nv_i(A_j)+v_i(B)-\sum_{j\neq i}v_i(\{g_j\}).$$
Since $[m]=B\cup\bigcup_{j=1}^nA_j$ and $\sum_{j\neq i}v_i(\{g_j\})\leq \sum_{g\in G_i}v_i(g)$, this implies
$$(n+1)\cdot v_i(A_i)\geq v_i([m])-\sum_{g\in G_i}v_i(g).$$
Summing over $i=1,\ldots,n$, we have
$$(n+1)\cdot \sum_{i=1}^nv_i(A_i)\geq \sum_{i=1}^nv_i([m])-\sum_{i=1}^n\sum_{g\in G_i}v_i(g).$$
By Inequality~(\ref{eqn:On}) and $\SW(\A)=\sum_{i=1}^nv_i(A_i)$ we have
$$(n+1)\cdot\SW(\A)\geq \sum_{i=1}^nv_i([m])-n\cdot\SW(\A),$$
which implies the desired result $(2n+1)\cdot\SW(\A)\geq \sum_{i=1}^nv_i([m])$.
\end{proof}

\addtocontents{toc}{\protect\setcounter{tocdepth}{1}}
\subsection{Proof of \Cref{thm:mswx_neg_overall}}\label{appendix:hardness-EFX-overall}
\addtocontents{toc}{\protect\setcounter{tocdepth}{2}}

We prove the remaining parts in \Cref{thm:mswx_neg_overall} by the following three lemmas.

\begin{restatable}{lemma}{HardnessEFXPseudoNormalized}
\label{thm:hardness-EFX-pseudo-normalized}
For any constant $\epsilon>0$, a pseudo-polynomial time $n^{0.5-\epsilon}$-approximation algorithm to \mswx with normalized valuations implies $\classP=\classNP$.
\end{restatable}

\begin{proof}
We present a reduction from the independent set problem.
Given an independent set instance $(G=(V,E),x)$ with $x\geq 3$, we construct an \mswx instance as follows.
The set of agents consists of a super agent $s$ and $k$ groups of normal agents $\{a_{i0},a_{i1},\ldots,a_{i|E|}\}_{i=1,\ldots,k}$, where $a_{i1},a_{i2},\ldots,a_{i|E|}$ in each group $i$ correspond to the $|E|$ edges in $G$.
Notice that $n=1+k(|E|+1)$, we can let $n$ be sufficiently large (but also of polynomial size with respect to $G$) such that $n^{1-\epsilon}<(k+1)/2$. In other words, we should choose a $k>4|E|^{1/\epsilon}$ to make the number of the agents in each group small compared with $n$.
The set of items consists of $k+1$ ``super items'' $g_0,g_1,\ldots,g_k$ and $k$ groups of ``normal items'' $\{v_{i1},\ldots,v_{i|V|},e_{i1},\ldots,e_{i|E|}\}_{i=1,\ldots,k}$ such that each group of $|V|+|E|$ items corresponds to the $|V|$ vertices and $|E|$ edges in $G$.
In each group $i$, agent $a_{i0}$ has value $x$ on the super item $g_i$, value $1$ on each of $v_{i1},\ldots,v_{i|V|}$.
For $j=1,\ldots,|E|$, each agent $a_{ij}$ only has positive values on the normal items $v_{i1},\ldots,v_{i|V|},e_{i1},\ldots,e_{i|E|}$ in group $i$.
In particular, $a_{ij}$ has value $1$ on $e_{ij}$ and on the two vertex items $v_{iu_1},v_{iu_2}$ where $u_1$ and $u_2$ are the two endpoints of the $j$-th edge.
The super agent has a value of $w$ for each super item, where $w$ is polynomial in $n$ and larger than the optimal social welfare for the instance without the super agent with no fairness constraints (say, $w=n^{100}$), and she has a value of $0$ for remaining items.

If the independent set instance is a \YES instance, we describe an EFX allocation with social welfare at least $(k+1)w$.
The super agent $s$ gets all the super items $g_0,g_1,\ldots,g_k$.
In each group $i$, agent $a_{i0}$ gets a set of $x$ items from $\{v_{i1},\ldots,v_{i|V|}\}$ corresponding to an independent set of size $x$, and agent $a_{ij}$ (for $j=1,\ldots,|E|$) gets the item $e_{ij}$.
The remaining items are discarded.
It is straightforward to check that the allocation is EFX and is, in fact, envy-free.
The optimal social welfare under EF$X$ constraint is lower bounded by $w(k+1)$.

If the independent set instance is a \NO instance, we will show that the super agent $s$ can get at most one super item in any EFX allocation.
Suppose this is not the case.
A super item $g_i$ with $i=1,\ldots,k$ must be allocated to the super agent $s$, and $s$ is allocated at least one more item.
By EFX, agent $a_{i0}$ cannot envy agent $s$, and must receive a value of at least $x$ from $v_{i1},\ldots,v_{i|V|}$.
This means at least $x$ items from $v_{i1},\ldots,v_{i|V|}$.
Since the independent set instance is a no-instance, agent $a_{i0}$ must receive two items $v_{iu_1},v_{iu_2}$ such that $(u_1,u_2)$ is an edge.
Let $a_{ij}$ and $e_{ij}$ be the agent and the item in the $i$-th group corresponding to this edge respectively.
Then $a_{ij}$ can receive a value of at most $1$ by getting $e_{ij}$, and the value she has on agent $a_{i0}$'s bundle is $2$.
To maintain EFX, agent $a_{i0}$ must not receive more than the two items $v_{iu_1},v_{iu_2}$.
This contradicts our assumption $x\geq 3$.
Since we have proved agent $s$ can get at most one super item, the social welfare, in this case, is upper bounded by $2w$ (as the optimal social welfare of other agents is no more than $w$).

Putting the completeness and the soundness parts together, the inapproximability factor is $(k+1)/2$, which is more than $n^{1-\epsilon}$.
In addition, all the values of the items are bounded by $n^{100}$.
A pseudo-polynomial time algorithm is no more powerful than a polynomial-time algorithm.
\end{proof}

\begin{lemma}\label{lem:hardness-EFX}
For any odd number $n=2k+1$ of agents with $k\geq 1$, it is $\classNP$-hard to approximate \mswx to a factor smaller than $(k+1)$.
\end{lemma}
\begin{proof}
    We present a reduction from the partition problem. Given a partition instance $S=\{e_1,\ldots,e_\ell\}$ such that $\sum_{i=1}^\ell e_i =2x$, we construct a \mswx instance as follows.
    Construct $2k+1$ agents named $\{s,a_1,b_1,\ldots,a_k,b_k\}$ and $m=(k+1)+k\ell$ items named $g_0,g_1,\ldots,g_{k},\{h_{ij}\}_{i=1,j=1}^{k,\ell}$.
    The utility functions of the $n$ agents are defined in the table below, where $w$ is a very large number.
    Let $s$ be the ``super agent'', and the social welfare mostly depends on the value that agent $s$ receives for large $w$.

    \begin{center}
    \begin{tabular}{c|ccccccc}
        \hline
         &  $g_0$ &  $g_1$ & $\cdots$ & $g_k$ &  $h_{1j}$ & $\cdots$ & $h_{kj}$\\
        \hline
        $v_s$     &  $w$ &  $w$ & $\cdots$& $w$ &         &          & \\
        $v_{a_1}$ &      &  $x$ &         &     &  $e_j$  &          &  \\
        $v_{b_1}$ &      &  $x$ &         &     &  $e_j$  &          &  \\
        $\vdots$  &      &      &$\ddots$ &     &         & $\ddots$ &  \\
        $v_{a_k}$ &      &      &         & $x$ &         &          & $e_j$ \\
        $v_{b_k}$ &      &      &         & $x$ &         &          & $e_j$ \\
        \hline
    \end{tabular}
    \end{center}

    If the partition instance is a \YES instance, for each $i\in[k]$, the two agents $a_i$ and $b_i$ can get a value of exactly $x$ from the item set $\{h_{i1},h_{i2},\ldots,h_{i\ell}\}$.
    In this case, the super agent $s$ can get the bundle $\{g_0,g_1,\ldots,g_k\}$, which has value $w(k+1)$, and the social welfare is at least $w(k+1)+2kx$.

    If the partition instance is a \NO instance, the super agent $s$ can get at most one item from $\{g_0,g_1,\ldots,g_k\}$.
    Otherwise, there exists $i\in\{1,\ldots,k\}$ such that $g_i$ is allocated to agent $s$.
    Since the partition instance is a no-instance, no matter how we allocate the remaining items, one of $a_i$ and $b_i$ will receive a value of less than $x$ and so will envy agent $s$.
    Since the super agent receives at least two items, removing an item $g\neq g_i$ from agent $s$'s bundle does not remove the envy from agent $a_i$/$b_i$ to agent $s$, which violates the EFX condition.
    In this case, the social welfare is at most $w+3kx$.

    The lemma concludes as we have shown that, the corresponding social welfare under an EFX allocation is lower-bounded by $w(k+1)+2kx$ in a yes-instance of partition, and is upper-bounded by $w+3kx$ in a no-instance.
    Thus, for any constant $c\in[1,k+1)$, by making $w>\frac{kx(3c-2)}{k+1-c}$, the ratio becomes $\frac{w(k+1)+2kx}{w+3kx}\ge c$, which leads to the $\classNP$-hardness of approximating \mswx to the factor $c$.
\end{proof}

\begin{restatable}{lemma}{HardnessEFXNormalized}
\label{thm:hardness-EFX-normalized}
    For any number $n=k(2k+1)$ of agents with $k\geq 1$, it is $\classNP$-hard to approximate \mswx to a factor smaller than $k/2$ even when agents' valuations are normalized.
\end{restatable}
\begin{proof}
We present a reduction from the partition problem.
Given a partition instance $S=\{e_1,\ldots,e_\ell\}$ with $\sum_{i=1}^\ell e_i=2x$, we construct an \mswx instance as follows.
The $n=k(2k+1)$ agents are partitioned into $k$ groups each of which consists of $2k+1$ agents.
Agents are indexed by 
$$\left\{s^{(t)},a_1^{(t)},b_1^{(t)},a_2^{(t)},b_2^{(t)},\ldots,a_k^{(t)},b_k^{(t)}\right\}_{t=1,\ldots,k},$$ 
where the superscript denotes the group number.
There are $m=1+k(k+1+k\ell)$ items which consist of one item named $f$ and $k$ groups of $(k+1+k\ell)$ items indexed below
$$\left\{g_0^{(t)},g_1^{(t)},\ldots,g_{k}^{(t)},\{h_{ij}^{(t)}\}_{i=1,\ldots,k;j=1,\ldots,\ell}\right\}_{t=1,\ldots,k}.$$
For each group $t=1,\ldots,k$, the utility functions of the agents in group $t$ over the items in group $t$ are defined in the same way as they are in the table in the proof of \Cref{lem:hardness-EFX}.
Agents from one group have a value of $0$ for items in another group.
To make the valuations normalized, we set $w=\frac1{k+1}$, and let the value of item $f$ be $1-3x$ for each agent in $\{a_i^{(t)},b_i^{(t)}\}_{i=1,\ldots,k;t=1,\ldots,k}$ where we set $x$ to be a very small positive number by rescaling the partition instance (the $k$ super agents $s^{(1)},\ldots,s^{(k)}$ have value $0$ on $f$).

If the partition instance is a yes-instance, we describe an EFX allocation with social welfare at least $k$.
In each group $t$, the items $g_0^{(t)},g_1^{(t)},\ldots,g_{k}^{(t)}$ are allocated to agent $s^{(t)}$, and the items in $\{h_{ij}^{(t)}\}_{i=1,\ldots,k;j=1,\ldots,\ell}$ are allocated to the agents in $a_1^{(t)},b_1^{(t)},a_2^{(t)},b_2^{(t)},\ldots,a_k^{(t)},b_k^{(t)}$ such that each of them receives a value of exactly $x$.
The item $f$ is discarded.
It is straightforward to check that the allocation is EFX and has social welfare of at least $k$.

If the partition instance is a no-instance, we will show that the social welfare is at most $2$ for sufficiently small $x$.
A ``lucky'' agent $a_{i^\ast}^{(t^\ast)}$ or $b_{i^\ast}^{(t^\ast)}$ can be allocated the item $f$ with the most value.
However, for most pairs of the ``normal agents'' $(a_i^{(t)},b_i^{(t)})$, if $g_i^{(t)}$ is not allocated to one of them and is allocated to $s^{(t)}$ instead, one of the two agents will envy $s^{(t)}$.
This keeps agent $s^{(t)}$ from getting a bundle more than the single item $g_i^{(t)}$.
Therefore, for the $t$ super agents $s^{(1)},\ldots,s^{(k)}$, at least $k-1$ of them can receive at most one item from $g_0^{(t)},g_1^{(t)},\ldots,g_{k}^{(t)}$, and at most one of them is allowed to receive two items from $g_0^{(t)},g_1^{(t)},\ldots,g_{k}^{(t)}$.
By including the values of the item $f$ and the items $h_{ij}^{(t)}$, the social welfare is at most
$$(1-3x)+3x\cdot k^2+\left((k-1)\cdot w+2w\right).$$
Since $w=\frac1{k+1}$, the social welfare can be arbitrarily closed to $2$ by having $x\rightarrow 0$.
\end{proof}

\section{Omitted Proofs in Section~\ref{sec:mswgeneraln}}
\label{append:msw}

\addtocontents{toc}{\protect\setcounter{tocdepth}{1}}
\subsection{Inapproximability for Constant Number of Agents}
\label{app:inapprox_const}

\begin{restatable}{lemma}{NPhardOneDivFour}
\label{thm:NPhardOneDivFour}
For any $n>2$, \msw is $\classNP$-hard to approximate to factor $n^{1/4}/4.5$, even under normalized valuations.
\end{restatable}
\begin{proof}
First, let $t= k = \lceil n^{\frac{1}{4}}\rceil$ and $M = \lceil n^{\frac{1}{4}}/2\rceil $. It can be verified that when $n\ge 2$, $t+ t \cdot (k^2 - k) \cdot M \le n$ and $k \le 2M$.
Besides, let $x= \frac{1+ (t(k^2-k)M -2)\epsilon}{t(k^2-k)M}$.
Similar to the above reduction, we construct a fair division instance as follows.
``$+$'' means that there is a difference from the previous reduction.
\begin{itemize}[leftmargin=0.5cm]
    \item Clique item: the same as the above reduction. There are $t$ groups of clique items $\mathcal{C}_1, \ldots, \mathcal{C}_k$;
    \item Partition item$^+$: the main difference from the previous reduction is that we create $M$ multiple edges for each pair of vertices of each clique.
    For each multiple-edge, we create a set of partition items by using the given partition instance;
    \item Pool item$^+$: there are $t\cdot (k^2-k)\cdot M$ pool items;
    \item Dummy item$^+$: there are $n-t - t(k^2-k)M$ dummy items.
\end{itemize}
In addition, we change construction of agents correspondingly,
\begin{itemize}[leftmargin=0.5cm]
    \item Super agent: the same as the above reduction, there are $t$ super agents;
    \item Normal agent$^+$: similar to the above reduction, there are $t$ groups of normal agents;
    Each multiple-edge corresponds to two normal agents.
    \item Dummy agent$^+$: there are $n-t - t(k^2-k)M$ dummy agents.
\end{itemize}

The valuation functions are the same as the previous reduction.
When the partition is a \YES instance, we can similarly allocate the clique items $\mathcal{C}_i$ to super agent $s_i$ for each $i\in [t]$.
Meanwhile, for each normal agent, we similarly allocate half of the partition items (with total value $\epsilon$) and one pool item (with value $x-\epsilon$).
Besides, each dummy agent receives exactly one dummy item.
It is not hard to check such an allocation satisfies EF1, and by ignoring the utilities of dummy agents, the total social welfare is no less than
\begin{align}
\SW(\A) \ge t + t(k^2-k)\cdot M \cdot  x
\end{align}

On the other hand, if the partition instance is a \NO instance, we also estimate an upper bound of the number of clique items received by super agents.
Consider an EF1 allocation $\A'$.
Similar to Inequality~\ref{eqn:agents_not_receiving_pool_items}, we have
\begin{align}\label{eqn:ub_of_sum_of_li_withlige1_improved}
\sum_{i=1}^t L_i + \sum_{i=1}^t \frac{L_i^2 - L_i}{2}\cdot M \le t\cdot k \Longrightarrow \sum_{i=1}^t (L_i - 0.5)^2   < \frac{2t\cdot k}{M}
\end{align}
Thus, by C-S Inequality and Inequality~\ref{eqn:ub_of_sum_of_li_withlige1_improved}, the sum of $L_i$ is upper bounded by
\begin{align*}
\sum_{i=1}^t L_i & \le t\sqrt{\frac{2k}{M}} +0.5t \le 1.5t\,. \tag{By $2k\le M$}
\end{align*}
Therefore, the social welfare of allocation $\A'$ is at most
\begin{align*}
\mathcal{SW}(\A') &<\sum_{i=1}^tL_i\cdot \frac1k + tk\cdot x + t(k^2-k)\cdot M \cdot (x-\epsilon)+t(k^2-k)\cdot M \cdot \epsilon + \frac{n-t - t(k^2-k)M}{n-t - t(k^2-k)M}  \\
&= \frac1k\cdot \sum_{i=1}^tL_i + tkx + t(k^2-k)M x+1  \\
&\le \frac1k\cdot \sum_{i=1}^tL_i + 1  + 1 + 1  \tag{$x$ is defined as $ \frac{1+(t(k^2-k)M-2)\epsilon}{t(k^2-k)M}$, and $\epsilon$ is small} \\ 
&\le 1.5 + 3 = 4.5\tag{$\sum_{i=1}^tL_i \le 1.5t$ and $t=k$}
\end{align*}
Therefore, the inapproximability ratio is at least
$\frac{\SW(\A)}{\SW(\A')} > \frac{t}{4.5} \ge \frac{n^{\frac14}}{4.5}$.
\end{proof}
    
\subsection{Approximability for General Number of Agents}
\label{app:approx_general_agents}

\begin{restatable}{lemma}{NPHardN}
\label{thm:NP-hard-n}
For any $\epsilon>0$, \msw is $\classNP$-hard to approximate to within a factor of $n^{\frac13-\epsilon}$, or within a factor of $m^{\frac12-\epsilon}$, even with normalized valuations.
\end{restatable}
We will present a reduction from the maximum independent set problem.
We begin by describing the construction, which will be used for proving the lemma.

\paragraph{The construction.}
Fix a maximum independent set instance $G=(V,E)$, and let $k=|V|$ and $\ell=|E|$.
We will construct a fair division instance with $n=k\ell+k$ agents and $m=2k^2$ items.
We will assume $\ell>2k$ (notice that we can add multi-edges without changing the nature of the independent set problem).

The agents are partitioned into $k$ groups.
Agents are named by $\{a_1^{(j)},\ldots,a_{\ell}^{(j)},s^{(j)}\}_{j=1,\ldots,k}$.
Group $j$ consists of $\ell+1$ agents $a_1^{(j)},\ldots,a_\ell^{(j)},s^{(j)}$.
Each edge $e_i\in E$ in the maximum independent set instance corresponds to $k$ agents $a_i^{(1)},\ldots,a_i^{(k)}$.
In particular, for each group $j$, the $\ell$ agents $a_1^{(j)},\ldots,a_\ell^{(j)}$ represent the $\ell$ edges in $G$, and there is a special agent $s^{(j)}$.
We will use $A^{(j)}$ to denote the set of agents in group $j$.

The items are partitioned into $k+1$ groups defined as follows.
For each $j=1,\ldots,k$, group $j$ contains $k$ items $b_1^{(j)},\ldots,b_k^{(j)}$. Each vertex $u_i\in V$ in the maximum independent set instance corresponds to $k$ items $b_i^{(1)},\ldots,b_i^{(k)}$.
The $(k+1)$-th group contains $k^2$ items $c_1,\ldots,c_{k^2}$.
We will use $B^{(j)}$ to denote the set of items in group $j$ for each $j=1,\ldots,k$ and $C$ to denote the set of items in group $k+1$.

The valuations are defined as follows.
For each special agent $s^{(j)}$, (s)he has value $1/k$ for each item in $B^{(j)}$, and value $0$ for each of the remaining items.
For each agent $a_i^{(j)}$, (s)he has value $\tau/2$ for each of the two items $b_{i_1}^{(j)}$ and $b_{i_2}^{(j)}$ representing the two vertices $u_{i_1}$ and $u_{i_2}$ of the edge $e_i$, (s)he has value $(1-\tau)/k^2$ for each item in $C$, and (s)he has value $0$ for each of the remaining items, where we set $\tau=2/k^2$ so that item $b_{i_1}^{(j)}$ and $b_{i_2}^{(j)}$ have a slightly higher value than each item in $C$.

\paragraph{Proof sketch.}
Notice that, for each $j=1,\ldots,k$, the agent group $A^{(j)}\setminus\{s^{(j)}\}$ and the item group $B^{(j)}$ resembles the graph $G=(V,E)$ in that each edge is represented by exactly one agent in $A^{(j)}\setminus\{s^{(j)}\}$ and each vertex is represented by exactly one item in $B^{(j)}$.
The $k$ groups can be viewed as $k$ copies of the maximum independent set instance.

For each group $j$, the items in $B^{(j)}$ have much higher values to agent $s^{(j)}$ than to any other agents.
We need to maximize the number of the items in $B^{(j)}$ that are allocated to $s^{(j)}$.

On the other hand, the number of the ``non-special'' agents, $k\ell$, is more than the number of items $2k^2$.
This implies some non-special agents will receive no item at all.
Since all the non-special agents have a common valuation on items in $C$, to guarantee EF1, no agent can receive more than one item in $C$.

It is then easy to see that, in each group $j$, the items allocated to agent $s^{(j)}$ must correspond to an independent set in $G$.
For otherwise, if $s^{(j)}$ receives both items $b_{i_1}^{(j)}$ and $b_{i_2}^{(j)}$ for an edge $e_i=(u_{i_1},u_{i_2})$, agent $a_{i}^{(j)}$ will have to receive at least two items in $C$ to guarantee EF1 (recall that, for agent $a_i^{(j)}$, the value of $b_{i_1}^{(j)}$ or $b_{i_2}^{(j)}$ is higher than the value of any item in $C$), and we have seen that this is infeasible.

Since we would like to maximize the number of items in $B^{(j)}$ that are allocated to agent $s^{(j)}$, our problem now naturally becomes the problem of maximizing the size of the independent set in $G$.
Notice that the $k$ groups simulate the $k$ \emph{identical} copies of the maximum independent set instance.

Since $\ell=O(k^2)$, we have $n=O(k^3)$ and $m=\Theta(k^2)$.
\Cref{thm:NP-hard-n} holds due to \Cref{thm:indset}.

\paragraph{Formal proof.}
For the completeness part, we prove the following proposition.
\begin{proposition}
If $G$ has an independent set of size $t$, then there exists an allocation with social welfare at least $t$.
\end{proposition}
\begin{proof}
Let $I\subseteq V$ be an independent set of $G$ with $|I|=t$.
Consider the following allocation.
For each $j=1,\ldots,k$, allocate $\{b_i^{(j)}\mid u_i\in I\}$ to agent $s^{(j)}$.
Allocate remaining items arbitrarily subject to that each remaining agent in $\{A^{(j)}\setminus\{s^{(j)}\}\mid j=1,\ldots,k\}$ receive at most one item.

It is straightforward to see that the allocation is EF1.
Since each agent in $\{A^{(j)}\setminus\{s^{(j)}\}\mid j=1,\ldots,k\}$ receives at most one item, no agent will EF1-envy any of them.
It remains to show that no one will envy $s^{(j)}$ for each $j$.
Firstly, those special agents $\{s^{(j)}\}$ will not envy each other.
This is because each $s^{(j)}$ receives items from the group $B^{(j)}$ only, and (s)he only has positive values for items in $B^{(j)}$.
Secondly, each non-special agent $a_i^{(j')}$ will not envy each special agent $s^{(j)}$.
If $j\neq j'$, agent $a_i^{(j')}$ has $0$ value for every item in $B^{(j)}$, and thus will not envy $s^{(j)}$.
If $j=j'$, the only two items in $B^{(j)}$ that agent $a_i^{(j')}$ has non-zero values are $b_{i_1}^{(j)}$ and $b_{i_2}^{(j)}$, where $e_i=(u_{i_1},u_{i_2})$ is the edge corresponding to $a_i^{(j')}$.
Since $s^{(j)}$ receives items that correspond to an independent set, $s^{(j)}$ receives at most one of $b_{i_1}^{(j)}$ and $b_{i_2}^{(j)}$.
Thus, agent $a_i^{(j')}$ does not envy $s^{(j)}$.

Finally, we find a lower bound on social welfare by only considering special agents.
Each $s^{(j)}$ receives $t$ items, and each of them has value $1/k$.
Since there are $k$ special agents, the social welfare is at least $k\cdot t\cdot \frac1k=t$.
\end{proof}

For the soundness part, we prove the following proposition.
\begin{proposition}
If $G$ does not have an independent set of size larger than $t$, then the social welfare of any allocation is at most $t+2$.
\end{proposition}
\begin{proof}
Consider an arbitrary allocation $\A$.
The crucial observation is that $A_{s^{(j)}}\cap B^{(j)}$ must correspond to an independent set.
To see this, suppose for an edge $e_i=(u_{i_1},u_{i_2})$ some agent $s^{(j)}$ receives both items $b_{i_1}^{(j)}$ and $b_{i_2}^{(j)}$.
Recall that agent $a_i^{(j)}$ has value $\tau/2=1/k^2$ on each of both items and (s)he has value $(1-\tau)/k^2$ on each item in $C$ which is less than $1/k^2$
To guarantee EF1, agent $a_i^{(j)}$ must receive at least two items in $C$.
To guarantee each remaining non-special agent does not envy agent $a_i^{(j)}$, each non-special agent should receive at least one item.
The total number of non-special agents is $k\ell$, which is more than the total number of items $2k^2$.
Thus, EF1 cannot be guaranteed if $A_{s^{(j)}}\cap B^{(j)}$ does not correspond to an independent set.

With this observation and the assumption that the maximum independent set has a size no more than $t$, each special agent $s^{(j)}$ receives a bundle with value at most $t/k$.
Thus, the overall utility for all the special agents is at most $t$.
For the non-special agent, each item is worth at most $\tau/2=1/k^2$.
Even if all the $2k^2$ items are allocated to the non-special agents, the overall utility for all the non-special agents is bounded by $2k^2\cdot\frac1{k^2}=2$.
Therefore, the social welfare of $\A$ is at most $t+2$.
\end{proof}

To conclude the proof, \Cref{thm:indset} and the two propositions above imply it is $\classNP$-hard to approximate $\SW(\A)$ to within a factor of $k^{1-\epsilon}$ for any $\epsilon>0$.
Since $m=2k^2$, the inapproximability factor can be written as $\frac{1}{(\sqrt{2})^{1-\epsilon}}m^{\frac12-\frac12\epsilon}$, which is more than $m^{\frac12-\epsilon}$ by choosing $\epsilon$ appropriately.
This concludes \Cref{thm:NP-hard-n} for the part with $m$.
Since $\ell=O(k^2)$, we have $n=O(k^3)$.
We can see \Cref{thm:NP-hard-n} for the part with $n$ holds by rewriting $k^{1-\epsilon}$ in a similar way.

\begin{restatable}{theorem}{OneDivN}
\label{thm:1/n}
There exists a polynomial-time $n$-approximation algorithm for \msw.            
\end{restatable}

\begin{proof}
We present a variant of the round-robin algorithm.
Like the round-robin algorithm, our algorithm consists of $\lceil m/n\rceil$ iterations.
In each iteration except for the last one, exactly $n$ items are allocated such that each agent receives exactly one item.
However, in each iteration, unlike the standard round-robin algorithm where agents receive items in an arbitrary order, our ``greedy-based round-robin'' algorithm greedily chooses the agent with the highest value for a remaining item who has not yet received an item in the current iteration.
In particular, we first find a tuple $(i,g)$ with maximum $v_i(g)$ such that agent $i$ has not been allocated an item yet in the current iteration and item $g$ has not been allocated, and we allocate item $g$ to agent $i$.
We do this $n$ times until each agent receives exactly one item in this iteration.

Firstly, the allocation output by our algorithm is EF1, for the same reason that the standard round-robin algorithm is EF1.
The crucial observation here is that, for every agent, the item (s)he receives at a particular iteration has a (weakly) larger value than the value of any item that is allocated in the later iterations.
Therefore, for any pair of agents $i$ and $j$, in agent $i$'s valuation, the item allocated to agent $i$ at the $t$-th iteration has a (weakly) larger value than the item allocated to agent $j$ at the $(t+1)$-th iteration.
Thus, if removing the item allocated to agent $j$ in the first iteration from agent $j$'s bundle, agent $i$ will not envy agent $j$.

We will then show that this is a $n$-approximation algorithm for \msw.
Let $I_t$ be the set of items allocated in the $t$-th iterations.
For each item $g$, let $u^\ast_g=\max_{i=1,\ldots,n}v_i(g)$.
It is obvious that
\begin{equation*}
   \sum_{g=1}^mu^\ast_g=\sum_{t=1}^{\lceil m/n\rceil}\sum_{g\in I_t}u^\ast_g 
\end{equation*}
is an upper bound to optimal social welfare.
On the other hand, let $o_t\in I_t$ be the first item allocated in the $t$-th iteration, and let $a_t$ be the agent who receives $o_t$.
By the nature of our algorithm, we have $v_{a_t}(o_t)=u^\ast_{o_t}\geq u^\ast_g$ for any $g\in I_t$.
By only accounting for the items $o_1,\ldots,o_{\lceil m/n\rceil}$, the social welfare for the allocation $\A$ output by our algorithm satisfies
\begin{equation*}
\SW(\A)\geq \sum_{t=1}^{\lceil m/n\rceil}u^\ast_{o_t}\geq\sum_{t=1}^{\lceil m/n\rceil}\left(\frac1n \sum_{g\in I_t}u^\ast_g\right)=\frac1n\sum_{g=1}^mu^\ast_g,
\end{equation*}
which shows that our algorithm is a $n$-approximation algorithm for \msw.
\end{proof}

\section{Omitted Pseudo-codes for Bi-Criteria Optimization}
\begin{algorithm}[H]
\caption{Bi-criteria optimization of \mswx}
\label{bicriteriax}
\KwInput{utility functions $v_1,\ldots,v_n$, item set $M=[m]$, and the parameter $\epsilon>0$}
\KwOutput{an $(1-\epsilon)$-approximate EFX allocation.} 
Set $K\leftarrow\lceil \frac{3mn}{\epsilon} \rceil$\;
Initialize $\Pi\leftarrow\varnothing$\;
\tcc{ $\Pi$ stores candidate allocations} 
\For{each \textbf{feasible} $X=\{X_i=\{g_{i1},\ldots,g_{i(i-1)},g_{i(i+1)},\ldots,g_{in}\}\mid i=1,\ldots,n\}$}{
\tcc{elements in $X_i$ may be repeated, but $X_i\cap X_j=\emptyset$ for any $i,j$} 
\tcc{for any $i,j,k$, $v_i(g_{ji})\leq v_i(g_{jk})$} \label{bicriteriaxcond}
\For{each $i=1,\ldots,n$}{
Set $Y_i\leftarrow\{g_{1i},\ldots,g_{(i-1)i},g_{(i+1)i},\ldots,g_{ni}\}$\;
Set $V_i\leftarrow v_i([m]\setminus Y_i)$\;
Set $\tau_i\leftarrow V_i/K$\;
for each $o\in [m]\setminus Y_i$, set $\overline{v}_i(o)\leftarrow\max_{k:k\tau_i\leq v_i(o)}k\tau_i$\;
for each $o\in Y_i$, set $\overline{v}_i(o)\leftarrow 0$\;
Add a new item $d_i$ to $M$ such that $\overline{v}_i(d_i)=m\tau_i$ and $\overline{v}_j(d_i)=0$ for each $j\neq i$\ and $v_j(d_i)=0$ for each $j\in[n]$\; 
$X_i\leftarrow X_i\cup d_i$\;
}
$\A\leftarrow${\FDynamicProgramEfx}($M,\overline{v}_1,\ldots,\overline{v}_n,v_1,\ldots,v_n,\tau_1,\ldots,\tau_n,X,K$)\tcp*{see Algorithm~\ref{bicriteriaxDP}} 
Include $\A$ in $\Pi$\;
}
\Return{the allocation in $\Pi$ with the largest social welfare with respect to $\{v_1,\ldots,v_n\}$}  %
\end{algorithm}

\begin{algorithm}[ht]
\caption{The dynamic programming subroutine for \mswx}
\label{bicriteriaxDP}
\SetKwProg{Fn}{Function}{:}{}
\Fn{\FDynamicProgramEfx{$M=[m+n],\overline{v}_1,\ldots,\overline{v}_n,v_1,\ldots,v_n,\tau_1,\ldots,\tau_n,X,K$}}{
\tcc{$X=\{X_i=\{g_{i1},\ldots,g_{i(i-1)},g_{i(i+1)},\ldots,g_{in},d_i\}\mid i=1,\ldots,n\}$}
\tcc{for each $i$, dummy item $d_i$ is the $(m+i)$-th item }
\tcc{for each $i$, $\overline{v}_i(S)/\tau_i\in \{0,1,\ldots,K+m\}$ holds for any $S\subseteq M$}
\For{each $\chi\in\{0,1,\ldots,K+m\}^{n^2}$ and each $t=0,1,\ldots,m+n$}{ 
Initialize $H[\chi,t]\leftarrow\nil$\;
}
$H[0^{n^2},0]\leftarrow 0$\;
\For{each $t=0,\ldots,m+n-1$}{ 
\For{each $\chi$ in the dictionary ascending order such that \emph{$H[\chi,t]\neq \nil$}}{
\If{item $t+1$ belongs to $X$}{  
Suppose  $t\in X_{i^\ast}$\;
{\sc Update}($\chi,t,i^\ast$)\; 
}
\Else{ 
\For{each $i=1,\ldots,n$}{
\If{$v_j(g_{ij})\le v_j(t+1)$ for all $j\neq i$ \label{bicriteriaxDP-newif}}{
{\sc Update}($\chi,t,i$)\; 
}
}
}
}
}
\For{each $\chi\in\{0,1,\ldots,K+m\}^{n^2}$}{ 
    Set $H[\chi,m+n]\leftarrow\nil$ if the allocation stored in $H[\chi,m+n]$ is not envy-free w.r.t. $\overline{v}_1,\ldots,\overline{v}_n$\;
}
\If{\emph{$H[\chi,m+n]\neq\nil$} for some $\chi$}{
    \Return{the allocation in \emph{$\{H[\chi,m+n]\mid H[\chi,m+n]\neq\nil,\chi\in\{0,1,\ldots,K+m\}^{n^2}\}$} with the largest social welfare w.r.t. $v_1,\ldots,v_n$, i.e. the one with the largest value $H[\chi,m+n]$} 
}
\Else{
    \Return{\emph{$\nil$}} 
}
}
\Fn{\Update{$\chi,t,i$}\label{bicriteriaxDP-upb}}
{
            for each $j$, set $\chi_{ji}'\leftarrow\chi_{ji}+\overline{v}_j(t+1)/\tau_j$\; 
            for each $(i',j)$ with $i'\neq i$, set $\chi_{ji'}'\leftarrow\chi_{ji}$\;
            \If{\emph{$H[\chi',t+1]==\nil$} or $H[\chi',t+1]<H[\chi,t]+v_i(t+1)$}
            {
            $H[\chi',t+1]\leftarrow H[\chi,t]+v_i(t+1)$\; 
            }
}
\end{algorithm}

\end{document}